 \newcommand\DGP{\textsc{Dense Graph Partition}}
  \newcommand\MDGP{\textsc{Max Dense Graph Partition}}
 \newcommand\UC{\textsc{Min UnCut}}
\newcommand{\defprob}[3]{
\begin{center}
{\begin{minipage}{.95\textwidth}
\noindent{\sc #1}\\\nopagebreak
{\bf Input:} #2 \\\nopagebreak
{\bf Question:} #3
\end{minipage}}
\end{center}
}
\newcommand{\commente}[1]{}
\newcommand{\dispFig}[1]{#1}
\newcommand{\convexpath}[2]{
	[   
	create hullnodes/.code={
		\global\edef\namelist{#1}
		\foreach [count=\counter] \nodename in \namelist {
			\global\edef\numberofnodes{\counter}
			\node at (\nodename) [draw=none,name=hullnode\counter] {};
		}
		\node at (hullnode\numberofnodes) [name=hullnode0,draw=none] {};
		\pgfmathtruncatemacro\lastnumber{\numberofnodes+1}
		\node at (hullnode1) [name=hullnode\lastnumber,draw=none] {};
	},
	create hullnodes
	]
	($(hullnode1)!#2!-90:(hullnode0)$)
	\foreach [
	evaluate=\currentnode as \previousnode using \currentnode-1,
	evaluate=\currentnode as \nextnode using \currentnode+1
	] \currentnode in {1,...,\numberofnodes} {
		-- ($(hullnode\currentnode)!#2!-90:(hullnode\previousnode)$)
		let \p1 = ($(hullnode\currentnode)!#2!-90:(hullnode\previousnode) - (hullnode\currentnode)$),
		\n1 = {atan2(\y1,\x1)},
		\p2 = ($(hullnode\currentnode)!#2!90:(hullnode\nextnode) - (hullnode\currentnode)$),
		\n2 = {atan2(\y2,\x2)},
		\n{delta} = {-Mod(\n1-\n2,360)}
		in 
		{arc [start angle=\n1, delta angle=\n{delta}, radius=#2]}
	}
	-- cycle
}
\title{Dense Graph Partitioning on sparse and dense graphs}
\author{Cristina Bazgan}{Universit\'e   Paris-Dauphine, PSL Research University, CNRS,UMR  7243, LAMSADE, 75016 Paris, France \and \url{https://www.lamsade.dauphine.fr/~bazgan/}}{cristina.bazgan@dauphine.fr}{https://orcid.org/0000-0002-5460-6222}{}
 \author{Katrin Casel}{Hasso Plattner Institute, University of Potsdam,  Germany \and \url{https://hpi.de/friedrich/people/katrin-casel.html}}{Katrin.Casel@hpi.de}{https://orcid.org/0000-0001-6146-8684}{}
 \author{Pierre Cazals}{Universit\'e   Paris-Dauphine, PSL Research University, CNRS,UMR  7243, LAMSADE, 75016 Paris, France}{pierre.cazals@dauphine.eu}{https://orcid.org/0000-0002-7681-476X}{}
\authorrunning{C. Bazgan, K. Casel and P. Cazals}
\keywords{NP-hardness, approximation, density, graph partitioning, bipartite graphs, cubic graphs, dense graphs}
\begin{document}

 \nolinenumbers
\maketitle              

\begin{abstract}
We consider the problem of partitioning a graph into a non-fixed number of  non-overlapping subgraphs of maximum density. The density of a partition is the  sum of the densities of the subgraphs, where the density of a subgraph is  its average degree, that is, the ratio of its number of edges and its number of vertices. This problem, called Dense Graph Partition, is known to be NP-hard on general graphs and polynomial-time solvable on trees, and polynomial-time  2-approximable.

In this paper we study the restriction of Dense Graph Partition to particular sparse and dense graph classes. In particular, we prove that it is NP-hard on dense bipartite graphs as well as on cubic graphs. 
On dense graphs on $n$ vertices, it is polynomial-time solvable on graphs with minimum degree $n-3$ and NP-hard  on $(n-4)$-regular graphs. We prove that it is  polynomial-time  $4/3$-approximable  on cubic graphs and admits an efficient polynomial-time approximation scheme on graphs of minimum degree $n-t$ for any constant $t\geq 4$. 
\end{abstract}

\section{Introduction}
The research around communities in social networks can be seen as a contribution to the well establish
research of clustering and graph partitioning. Graph partitioning problems have
been intensively studied with various measures in order to evaluate clustering quality, see e.g.~\cite{New2004,Sch2007,For2010,BulucMSS016}  for an overview. In the context of social networks, a ‘community’ is a collection of individuals who are relatively well connected compared to other parts of the social network graph . A ‘community structure’ then corresponds to a partition of the whole social network into communities.

We consider a classical definition of the density of a (sub)graph  (see, for example,  \cite{bib:density:goldberg1984finding, KS09, bib:density:darlay2012DENSE}) given by its average degree, that is, the ratio between its number of edges and its number of vertices.  
For this definition of density, there are several papers on finding the densest subgraph. This problem was shown solvable in polynomial time by Goldberg~\cite{bib:density:goldberg1984finding} but if the size of the subgraph is a part on the input, the problem called \textsc{$k$-Densest Subgraph} becomes NP-hard even restricted to bipartite or chordal graphs~\cite{bib:density:corneil1984clustering}. The approximability of \textsc{$k$-Densest Subgraph} was also  studied, see~\cite{Khot06,FeigePK01,BhaskaraCCFV10}. 

In this paper, we study the problem \MDGP{} that models finding a community structure, that is, finding a dense \emph{partition}. More precisely, given an undirected graph $G$, we aim to find a partition $\mathcal{P} = \{ V_1, \dots, V_k\}$, $k \geq 1$, of the vertices of $G$, such that sum of the densities of the subgraphs $G[V_i]$ is maximized. We denote the sum of the densities of the subgraphs $G[V_i]$ by $d(\mathcal{P})$, and call this the density of the partition $\mathcal{P}$.

Note that the general concept of a community structure does not put any restriction on the
number of communities. We therefore address the problem \MDGP{} of finding a partition of maximum density, without fixing the number of classes of
the partition. 
Indeed, when the number of classes is given, the problem is a generalization of a partition into $k$ cliques.  
By not fixing the number of classes, \MDGP{} differs from partitioning into cliques: observe that while there exists a partition into exactly $k$ sets of density $(n-k)/2$ if and only if the input graph can be partitioned into $k$ cliques (see Lemma~\ref{lemmacomplete}), there can be a partition into less than $k$ sets with a density even higher than $(n-k)/2$ even if the input cannot be partitioned into $k$ cliques. As an example, consider a complete graph of an even number $n$ of vertices and turn four of the vertices into an independent set by removing all edges among them. The resulting graph cannot be partitioned into 3 cliques (at least one set contains two of the four independent vertices), but it has a partition into two sets of equal cardinality  with density $(n-2)/2 -4/n$.

Darley et al.~\cite{bib:density:darlay2012DENSE} studied   \MDGP, and its complement  {\sc Min Sparse Graph Partition}. They defined  the  sparsity of a partition $\mathcal{P}$ as $F(\mathcal{P})= \frac{|\mathcal{P}|}{2} + d(\mathcal{P})$ and the  problem {\sc Min Sparse Graph Partition} as finding a partition of a given undirected graph $G$ such that the sparsity of the partition is minimized. Observe that  \MDGP{} and  {\sc Min Sparse Graph Partition} are dual in the sense that solving the first one on a graph $G$ is the same as solving the second one on the complement of $G$. In~\cite{bib:density:darlay2012DENSE} it is shown that both problems are NP-complete, and that there is no constant factor approximation for  {\sc Min Sparse Graph Partition} unless $P=NP$.  Moreover, a polynomial time algorithm for \MDGP{} on trees is given.
We point out that their proof of NP-completeness is a polynomial-time reduction from \textsc{$k$-Coloring}. By construction, the same reduction when starting from \textsc{3-Coloring} on graphs of degree at most 4 (proved NP-complete in~\cite{GJS1976}) yields as instance of \MDGP{} a graph on $n$ vertices and of minimum degree greater than~$n-4n^{4/5}$. Thus it follows that  \MDGP{} is NP-complete restricted to graphs of minimum degree~$n-4n^{4/5}$.

Aziz et al.~\cite{bib:density:aziz2015welfare} studied the problem \textsc{Fractional Hedonic Game}, and more particularly the \textsc{Max Utilitarian Welfare} problem as the simple symmetric version of the game defined as follows. Let $N$ be a set of agents, the utility of $i \in N$ in a coalition $S \subseteq N$ is $u_i(S) = \tfrac 1{|S|}{\sum_{j \in S}u_i(j)}$ where $u_i(j)$ is such that $u_i(j) \in \{0,1\}$ for a simple game and $u_i(j) = u_j(i)$ for a symmetric one. For \textsc{Max Utilitarian Welfare} one tries to find a partition~$C$ of~$N$ into coalitions that maximizes $\sum_{S \in C}\sum_{i \in S}u_i(S)$. This game can be seen as a graph~$G$ where agents are  vertices and there is an edge between two agents $i$ and $j$ if and only if $u_i(j)=1$.
In this context, $u_i(S) = \tfrac 1{|S|}{\sum_{j \in S}u_i(j)} = \tfrac 1{|S|}deg_{G[S]}(i)$. We deduce that $\sum_{S \in C}\sum_{i \in S}u_i(S) =\tfrac 1{|S|} \sum_{S \in C}\sum_{i \in S} deg_{G[S]}(i) = \tfrac 1{|S|}\sum_{S \in C}  {2|E(S)|} = 2 \cdot d(C)$. Hence, the problems \textsc{Max Utilitarian Welfare} and \MDGP{} are equivalent to within a constant, which means that the 2-approximation for the former given in~\cite{bib:density:aziz2015welfare} directly translates to the latter.

\medskip
\noindent
\emph{Our contributions.}  The following overview summarises the results achieved in this paper concerning \MDGP{} (MDGP). 
\begin{itemize}
\item MDGP is trivially solvable on graphs of maximum degree 2, we prove its NP-hardness for 3-regular (cubic) graphs. 

\item We establish that on bipartite complete graphs an optimal partition  consists of one part, that is the whole graph. Moreover if the size of the two independent sets are relatively prime numbers then this optimal solution is unique. We use this result to show that MDGP is NP-hard on dense bipartite graphs.

\item MDGP is trivial on complete graphs since the optimal solution is the whole graph as one part of the partition. Moreover, as we previously explained, it is NP-hard on graphs of minimum degree $n-4n^{4/5}$.  We show that for graphs of minimum degree $\geq n-3$, the problem is solvable in polynomial time and any optimal solution has two parts.  Moreover on $(n-4)$-regular graphs, the problem becomes NP-hard. 

\item We further give improves on the 2-approximation for MDGP on general graphs~\cite{bib:density:aziz2015welfare} for specific sparse and dense graph classes. In particular, we show that  MDGP admits a polynomial-time $4/3$-approximation  on cubic graphs. Moreover we establish a polynomial-time $\frac{n-1}{\delta+1}$-approximation, where $\delta$ is the minimum degree of the input graph (note that this improves on the ratio of 2 for all $\delta >\frac{n-3}{2}$). Also, we give an eptas (i.e.~a $(1+\varepsilon)$-approximation for any $\varepsilon >0$) on graphs of minimum degree $n-t$ for any constant $t\geq 4$.
\end{itemize}

Our paper is organized as follows. Notations and  formal definitions are given in
Section~\ref{sec2}. The study of (dense) bipartite graphs is established in Section~\ref{sec3}. Section~\ref{sec4} presents the results on cubic graphs. In Section~\ref{sec5} we study
dense graphs. 
 Some conclusions are given at the end of the paper.

\section{Preliminaries}\label{sec2}
 In this paper we assume that all graphs are undirected, without loops or  multiple edges, and not necessary connected. We use $G=(V,E)$ to denote an undirected  graph with a set~$V$ of vertices and a set $E$ of edges.  We use $|V|$ to denote the number of vertices in $G$, i.e., the order of $G$, and we use $|E|$ to denote the number of edges in $G$, i.e., the size of $G$. We denote by $deg_G(v)$ the degree of $v\in V$ in $G$ that is the number of edges incident to $v$ and by  $D_{G}(i)$   the set of vertices of degree $i$ in $G$.   
The maximum degree of $G$, denoted by $\Delta(G)$, is the degree of the vertex with the greatest number of edges incident to it. The minimum degree of $G$, denoted by $\delta(G)$, is the degree of the vertex with the least number of edges incident to it. For any vertex $v\in V$, $N_G(v)$ is the set of neighbors of~$v$ in $G$ and $N_G[v]= N_G(v)\cup\{v\}$.  Moreover, $N_G(S)=\bigcup_{v \in S} N_G(v)$. For a  graph $G=(V,E)$  and a subset $S\subseteq V$ we denote by $E(S)$ the set of the edges of $G$ with both endpoints in~$S$. For a given  partition $\{A,B\}$  of $V$, we  denote  by $E(A,B) = \{uv \in E:~u \in A,~ v \in B\}$. Further, $G[S]$ denotes the graph induced by $S$, defined as $G[S]=(S, E(S))$.
   
 A triangle graph is  the cycle graph $C_{3}$ or  the complete graph  $K_{3}$.
 A diamond graph has 4 vertices and 5 edges, it consists of a complete graph $K_{4}$ minus one edge. A graph is called cubic  if all its vertices are of degree three. A  graph is bipartite if its vertices can be partitioned into two sets $A$ and $B$ such that 
 every edge connects a vertex in $A$ to one in $B$. A complete bipartite graph  is a special kind of bipartite graph where every vertex of $A$ is connected to every vertex in $B$. A graph  on $n$ vertices is $\delta$-dense if its minimum degree is at least $\delta n$. A set of instances is called  dense if there is a constant $\delta >0$ such that all instances in this set are $\delta$-dense (this notion was introduced in \cite{AroraKK95} and called everywhere-dense).  

The density $d(G)$ of a graph $G=(V,E)$ is the ratio between the number of edges and the number of vertices in $G$, that is, $d(G)=\tfrac{|E|}{|V|}$. Moreover, for $S\subseteq V$, $d(S)=d(G[S])=\tfrac{|E(S)|}{|S|}$. We use $\mathcal{P}$ to denote a partition of the set $V$
 of vertices of $G$, that is,  $\mathcal{P} = \{ V_1, \dots, V_k\}$, where $\cup_{i=1}^k V_i = V$, and $V_i \cap V_j = \emptyset$ for each $i, j \in \{1, \dots, k\}$. Then the density of a partition $\mathcal{P}$ of $G$ is defined as  $d(\mathcal{P})=\sum_{i=1}^k d(G[V_i])$, where $G[V_i]$ is the subgraph of $G$ induced by the subset $V_i$ of vertices, that is, $G[V_i]= (V_i, E_i)$, $E_i = \{\{u,v\} : \{u,v\} \in E \land u,v\in V_i\}$.

We study the problem of finding a partition $\mathcal{P} = \{ V_1, \dots, V_k\}$ of a given graph~$G$, such that $k \geq 1$  
and that, among all such partitions,  $d(\mathcal{P})$ is maximized. We refer to this problem as {\sc Max Dense Graph Partition} and we define its decision version as follows.

\begin{center}
\fbox{\begin{minipage}{.95\textwidth}
\noindent{\sc Dense Graph Partition}\\\nopagebreak
{\bf Input:} An undirected   graph $G=(V,E)$, a positive rational number $r$. \\\nopagebreak
{\bf Question:}  Is there a partition $\mathcal{P}$ such that $d(\mathcal{P}) \geq r$ ?
\end{minipage}}
\end{center}

Given an optimization problem in NPO and an instance $I$ of this problem, we denote by $|I|$ the size of $I$, by $opt(I)$
the optimum value of $I$, and by $val(I, S)$ the value of a feasible solution $S$ of instance $I$. The performance ratio of $S$ (or
approximation factor) is $r(I, S) = \max  \{\frac{val(I,S)}{opt(I)},\frac{opt(I)}{val(I,S)}\}\geq 1$.  
For a function $f$, an algorithm is an $f(|I|)$-approximation, if for every instance $I$ of the problem, it returns a solution $S$
such that $r(I, S) \leq  f (|I|)$. Moreover if the algorithm runs in polynomial time in  $|I|$, then this algorithm gives a polynomial-time $f(|I|)$-approximation.
We consider in this paper only polynomial time algorithms. When $f$ is a constant $\alpha$, the problem is polynomial-time $\alpha$-approximable. When $f=1+\varepsilon$, for any $\varepsilon >0$, the problem admits a polynomial-time approximation scheme. When the running time of an approximation scheme is of the form $O(g(1/\varepsilon)poly(|I|)$ the problem has an efficient polynomial-time approximation scheme (eptas). 

Before we start studying specific graph classes, we observe the following helpful structural properties that hold for {\sc Dense Graph Partition} on general graphs.
\begin{remark}\label{remarkconnexe}
 We can assume that for any optimal partition  $\mathcal{P}$ and for any part $P_i \in \mathcal{P}$, $G[P_i]$ is connected, since otherwise turning each connected  component into its own part does not decrease the density.
\end{remark}

When discussing the density of a (sub)graph, it is often useful to think about how close this subgraph is to being a clique. We therefore call a pair of non-adjacent vertices in a (sub)graph a \emph{missing edge}, and use the number of such missing edges to estimate the density of the (sub)graph. With such estimations, it is easy to show that the following intuition about favouring complete graphs as communities.

 \begin{lemma}\label{lemmacomplete}
   Among all partitions of $G$ into $t\geq 2$ parts, those where the parts correspond to complete graphs, if there exists such,  have the largest density. 
 \end{lemma}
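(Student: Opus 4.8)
The plan is to reduce the statement to a single edge-counting estimate and then sum over the parts. The starting point is the observation that for any non-empty set $S\subseteq V$ the induced subgraph $G[S]$ has at most $\binom{|S|}{2}$ edges, with equality if and only if $G[S]$ is complete; dividing by $|S|$ this yields
\[
d(G[S])=\frac{|E(S)|}{|S|}\le \frac{1}{|S|}\binom{|S|}{2}=\frac{|S|-1}{2},
\]
and the inequality is tight precisely when $S$ induces a clique.

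Next I would apply this bound term by term. Let $\mathcal{P}=\{V_1,\dots,V_t\}$ be an arbitrary partition of $V$ into $t\ge 2$ parts and write $n=|V|$. Summing the estimate above over all parts gives
\[
d(\mathcal{P})=\sum_{i=1}^{t} d(G[V_i])\le \sum_{i=1}^{t}\frac{|V_i|-1}{2}=\frac{1}{2}\Bigl(\sum_{i=1}^{t}|V_i|-t\Bigr)=\frac{n-t}{2}.
\]
The crucial point is that the right-hand side depends only on $n$ and $t$, not on the chosen partition.

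Finally I would note that if $G$ admits a partition $\mathcal{Q}$ into $t$ complete graphs, then equality holds in the edge-counting estimate for every part of $\mathcal{Q}$, so $d(\mathcal{Q})=\frac{n-t}{2}$; combined with the upper bound, every partition of $G$ into $t$ parts has density at most $d(\mathcal{Q})$, i.e.\ the clique partitions achieve the maximum density among all $t$-part partitions (and all such clique partitions achieve the same value $\frac{n-t}{2}$, which recovers the claim used just before the lemma). I do not expect a genuine obstacle here: the only points requiring a little care are the equality characterisation $|E(S)|=\binom{|S|}{2}\iff G[S]$ complete, and the remark that the bound $\frac{n-t}{2}$ is uniform over all $t$-part partitions because $\sum_i|V_i|=n$ is fixed once $t$ is.
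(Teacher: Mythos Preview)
Your proof is correct and follows essentially the same approach as the paper: both arguments bound the density of each part by $\tfrac{|V_i|-1}{2}$ (the paper phrases this via the number of missing edges $o_i$, writing $d(\mathcal{P})=\tfrac{n-t}{2}-\sum_i o_i/n_i$) and then observe that a partition into cliques attains the resulting uniform upper bound $\tfrac{n-t}{2}$.
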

 \begin{proof}
Consider a partition of $G$ into $t$ parts $\{V_1,\ldots,V_t\}$ of size $n_1,\ldots,n_t$.  If $G[V_i]$ has $o_i$ missing edges for any $1\leq i\leq t$, then the density of this partition is $\frac{n-t}{2}- \frac{o_1}{n_1}-\ldots -\frac{o_t}{n_t}$. 

Consider a partition of $G$ into $t$ parts   of size $n_1',\ldots,n_t'$ such that each part induces a  complete graph for any $1\leq i\leq t$. Then the density of this partition is $\frac{n-t}{2}$ and thus it is larger than the density of any partition in  $t$ parts where at least one edge is missing inside $G[V_i]$ for some $1\leq i\leq t$.
\end{proof}
A direct consequence of this is the following.
\begin{lemma}
\label{lem:densMax}
Let  $G = (V,E)$ be a graph and $\mathcal{P}$ be any partition of $V$. Then $d(\mathcal{P}) \leq \frac{|V|}{2} - \frac{|\mathcal{P}|}{2}$.
\end{lemma}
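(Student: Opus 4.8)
The plan is to reuse, essentially verbatim, the density formula that already appears inside the proof of Lemma~\ref{lemmacomplete}, only now without assuming that the parts are complete graphs and without assuming $|\mathcal{P}|\geq 2$. Concretely, I would fix an arbitrary partition $\mathcal{P}=\{V_1,\dots,V_t\}$ of $V$, write $n_i=|V_i|$ so that $\sum_{i=1}^t n_i=|V|$ and $t=|\mathcal{P}|$, and for each $i$ let $o_i=\binom{n_i}{2}-|E(V_i)|\geq 0$ be the number of missing edges inside $G[V_i]$.

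Next I would compute the density contributed by a single part. Since $|E(V_i)|=\binom{n_i}{2}-o_i$, we have $d(G[V_i])=\frac{|E(V_i)|}{n_i}=\frac{n_i-1}{2}-\frac{o_i}{n_i}$. Summing this identity over $i=1,\dots,t$ and using $\sum_{i=1}^t n_i=|V|$ yields $d(\mathcal{P})=\frac{|V|-|\mathcal{P}|}{2}-\sum_{i=1}^t \frac{o_i}{n_i}$.

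Finally, because $o_i\geq 0$ and $n_i\geq 1$ for every $i$, the subtracted sum is non-negative, so $d(\mathcal{P})\leq \frac{|V|}{2}-\frac{|\mathcal{P}|}{2}$, which is the claimed bound. I do not anticipate any genuine obstacle: the only remarks worth making explicit are that the per-part identity $d(G[V_i])=\frac{n_i-1}{2}-\frac{o_i}{n_i}$ also holds for singleton parts (where it reduces to $0=0$), and that the resulting inequality is uniform in $t\geq 1$, so the case $\mathcal{P}=\{V\}$ needs no separate treatment — it is exactly the statement $|E|\leq\binom{|V|}{2}$.
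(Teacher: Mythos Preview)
Your argument is correct and is exactly the approach the paper takes: the statement is presented there as an immediate consequence of Lemma~\ref{lemmacomplete}, via the same identity $d(\mathcal{P})=\frac{|V|-|\mathcal{P}|}{2}-\sum_i \frac{o_i}{n_i}$ that you spell out.
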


\section{Dense Bipartite Graphs}\label{sec3}
\label{sec:bipartiDense}
In this section we show that \MDGP{} has a trivial solution on complete bipartite graphs.  Moreover, using this result we  show that the problem is NP-hard on dense bipartite graphs. 

In the first part, we consider a  complete bipartite graph $G_{n,m}$ with the two subsets that are independent sets of size $n$ and $m$ and we first prove the following result.
 
\begin{lemma} \label{complete_bipartite_lem}
The density $d(G_{n,m})$ of a complete bipartite  graph $G_{n,m}$ is greater than or equal to the density $d(\mathcal{P})$ of any partition $\mathcal{P}$ of $G_{n,m}$.
\end{lemma}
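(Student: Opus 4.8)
The plan is to record, for each part of the partition, how its vertices distribute over the two sides of the bipartition, and then to reduce the statement to a superadditivity property of a single two-variable function. Fix a partition $\mathcal{P}=\{V_1,\dots,V_k\}$ of $G_{n,m}$ with all parts nonempty (empty parts contribute $0$ and may be discarded), and let $a_i$ (resp.\ $b_i$) be the number of vertices of $V_i$ in the independent set of size $n$ (resp.\ $m$), so that $\sum_i a_i=n$ and $\sum_i b_i=m$. Since $G_{n,m}$ is complete bipartite, each $G[V_i]$ is the complete bipartite graph on $a_i+b_i$ vertices with $a_ib_i$ edges, hence $d(G[V_i])=\tfrac{a_ib_i}{a_i+b_i}$, which equals $0$ precisely when $a_i=0$ or $b_i=0$. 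Writing $f(x,y):=\tfrac{xy}{x+y}$ for $(x,y)\neq(0,0)$ and $f(0,0):=0$, we thus have $d(\mathcal{P})=\sum_{i=1}^k f(a_i,b_i)$ and $d(G_{n,m})=f(n,m)$.

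The core of the proof is the claim that $f$ is superadditive on nonnegative pairs, i.e.\ $f(a_1,b_1)+f(a_2,b_2)\le f(a_1+a_2,b_1+b_2)$. Setting $s_1=a_1+b_1$ and $s_2=a_2+b_2$, multiplying through by $s_1s_2(s_1+s_2)$ and cancelling the terms $s_1s_2a_1b_1$ and $s_1s_2a_2b_2$ that appear on both sides turns this into $s_2^2a_1b_1+s_1^2a_2b_2\le s_1s_2(a_1b_2+a_2b_1)$, that is, $(s_1a_2-s_2a_1)(s_1b_2-s_2b_1)\le 0$. Substituting back $s_1=a_1+b_1$, $s_2=a_2+b_2$ gives $s_1a_2-s_2a_1=a_2b_1-a_1b_2$ and $s_1b_2-s_2b_1=-(a_2b_1-a_1b_2)$, so the left-hand side equals $-(a_2b_1-a_1b_2)^2\le 0$. (Equivalently, $f$ is homogeneous of degree $1$ and concave on the positive orthant, hence superadditive; the explicit computation has the advantage of pinning down equality, which holds exactly when $a_1b_2=a_2b_1$, i.e.\ when the two parts split the two sides in the same ratio — the observation behind the later uniqueness claim.)

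It then remains to iterate: merging two parts of $\mathcal{P}$ into a single part replaces $f(a_1,b_1)+f(a_2,b_2)$ by $f(a_1+a_2,b_1+b_2)$ and so does not decrease $\sum_i f(a_i,b_i)$ (the merged part is again a complete bipartite induced subgraph of $G_{n,m}$). Repeatedly merging until a single part remains yields $d(\mathcal{P})=\sum_i f(a_i,b_i)\le f(\sum_i a_i,\sum_i b_i)=f(n,m)=d(G_{n,m})$. The only real work is the algebraic identity reducing superadditivity to $-(a_2b_1-a_1b_2)^2\le 0$; everything else is bookkeeping, the only subtlety being to keep the degenerate parts (those lying entirely on one side, contributing $0$) consistent with the convention $f(x,0)=f(0,y)=0$.
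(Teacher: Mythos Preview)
Your proof is correct and essentially the same as the paper's: both reduce to the two-part case and show the difference $f(a_1+a_2,b_1+b_2)-f(a_1,b_1)-f(a_2,b_2)$ is a nonnegative square divided by a positive denominator, arriving at exactly the same term $(a_2b_1-a_1b_2)^2$. Your packaging as superadditivity of $f(x,y)=xy/(x+y)$ is slightly cleaner, and your uniform handling of degenerate parts via $f(x,0)=f(0,y)=0$ avoids the paper's separate case split, but the substance is identical.
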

\begin{proof}

The density of the complete bipartite  graph $G_{n,m}=(A,B,E)$, with $|A|=n, |B|=m$ is given by $d(G_{n,m}) = \frac{nm}{n+m}$
 
It suffices to show that $d(G_{n,m})$ is greater than or equal to the density of any partition $\mathcal{P} = \{ V_1, V_2\}$ that splits the set of vertices into exactly $2$ nonempty subsets. Indeed,  if this holds and  we have a partition $\mathcal{P} = \{ V_1, \dots, V_k\}$ where $k \geq 3$, we can show recursively that $d(G_{n,m}) \geq d(G[V_1])+ d(G[V_2 \cup \dots \cup V_k]) \geq \dots \geq d(G[V_1])+ \dots + d(G[V_k])$.

We first consider a partition $\mathcal{P}_1 = \{ V_1,  V_2\}$ where  $A\subseteq V_1$. Without loss of generality we may assume that $V_2=B\setminus V_1$ contains $m_2$  vertices from $B$. Then 
\begin{equation*}
d(\mathcal{P}_1) = \frac{n(m-m_2)}{n+m-m_2} + 0 \leq \frac{nm}{n+m}
\label{density_of_P_1}
\end{equation*}
Now, consider a partition $\mathcal{P}_1 = \{ V_1,  V_2\}$ such that each of the graphs $G[V_i]$ contains at least one edge, so let $G[V_1]=G_{n_1,m_1}$ with $0<n_1<n$ and $0<m_1<m$. Then $G[V_2]=G_{n-n_1,m-m_1}$ and
\begin{equation*}
d(\mathcal{P}_1) = \frac{n_1m_1}{n_1+m_1} + \frac{(n-n_1)(m-m_1)}{n+m-n_1-m_1} = \frac{nm(n_1+m_1)-mn_1^2-nm_1^2}{(n+m-n_1-m_1)(n_1+m_1)}\,,
\end{equation*}
which yields
\begin{equation*}
d(G_{n,m})-d(\mathcal{P}_1) =\frac{(nm_1-mn_1)^2}{(n+m-n_1-m_1)(n_1+m_1)(n+m)}\geq 0
\end{equation*}
\end{proof}

It follows that an optimal solution of  any complete bipartite graph is the whole graph. From the calculations in the previous proof, we can inductively deduce the following result.
\begin{corollary}\label{bipar_cond}
For any complete bipartite graph $G=(A,B,E)$ with $|A|=n$ and $|B|=m$, a partition $\mathcal{P}=\{V_1,\dots,V_k\}$ of $A\cup B$ satisfies $d(\mathcal{P}) = \frac{nm}{n+m}$    if and only if $G[V_i]=G_{n_i,m_i}$ with $n_i\not=0$ and $m_i\not=0$ and $\frac{n_i}{m_i}=\frac{n}{m}$ for all $i\in\{1,\dots,k\}$.
\end{corollary}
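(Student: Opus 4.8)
The plan is to prove the two implications separately, in both cases reusing the two computations already carried out in the proof of Lemma~\ref{complete_bipartite_lem}: first, that a $2$-partition $\mathcal{P}_1=\{V_1,V_2\}$ in which one part is an independent set lying entirely inside $A$ (or inside $B$) has density $d(\mathcal{P}_1)=\frac{n(m-m_2)}{n+m-m_2}$, where $m_2\ge 1$ counts the split-off vertices; and second, the closed form
\[
d(G_{n,m})-d(\mathcal{P}_1)=\frac{(nm_1-mn_1)^2}{(n+m-n_1-m_1)(n_1+m_1)(n+m)}
\]
valid when $G[V_1]=G_{n_1,m_1}$ with $0<n_1<n$ and $0<m_1<m$. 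Throughout I assume $n,m\ge 1$, i.e.\ $G$ has at least one edge; otherwise the statement is vacuous, since the right-hand condition $m_i\neq 0,\ n_i\neq 0$ can never be met.

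For the implication ($\Leftarrow$), suppose each $G[V_i]$ equals $G_{n_i,m_i}$ with $n_i,m_i\neq 0$ and $n_i/m_i=n/m$. Substituting $n_i=\frac{n}{m}m_i$ into $d(G[V_i])=\frac{n_im_i}{n_i+m_i}$ gives $d(G[V_i])=\frac{nm_i}{n+m}$, so, using $\sum_i m_i=m$, we get $d(\mathcal{P})=\sum_i d(G[V_i])=\frac{n\sum_i m_i}{n+m}=\frac{nm}{n+m}$. This direction needs no induction, only this one-line substitution.

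For the implication ($\Rightarrow$) I would induct on $k=|\mathcal{P}|$. The case $k=1$ is immediate: then $V_1=A\cup B$ and $n_1/m_1=n/m$. For $k\ge 2$, set $W=V_2\cup\dots\cup V_k$; then $G[W]$ is again a complete bipartite graph $G_{n',m'}$ with $n'=|W\cap A|$ and $m'=|W\cap B|$, the family $\mathcal{Q}=\{V_2,\dots,V_k\}$ is a partition of $W$, and $\{V_1,W\}$ is a $2$-partition of $A\cup B$. Applying Lemma~\ref{complete_bipartite_lem} once to $G[W]$ and once to $G_{n,m}$,
\[
d(\mathcal{P})=d(G[V_1])+d(\mathcal{Q})\ \le\ d(G[V_1])+d(G[W])=d(\{V_1,W\})\ \le\ d(G_{n,m}),
\]
so the hypothesis $d(\mathcal{P})=\tfrac{nm}{n+m}=d(G_{n,m})$ forces both inequalities to be equalities. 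From $d(\{V_1,W\})=d(G_{n,m})$ and the first recalled computation, neither $V_1$ nor $W$ can lie entirely in $A$ or in $B$, hence $G[V_1]=G_{n_1,m_1}$ with $0<n_1<n$ and $0<m_1<m$; the closed form then gives $nm_1=mn_1$, i.e.\ $n_1/m_1=n/m$, and consequently $G[W]=G_{n-n_1,m-m_1}$ with $(n-n_1)/(m-m_1)=n/m$. From $d(\mathcal{Q})=d(G[W])$ and the induction hypothesis applied to the complete bipartite graph $G[W]$ together with its $(k-1)$-part partition $\mathcal{Q}$, every $G[V_i]$ with $i\ge 2$ equals $G_{n_i,m_i}$ with $n_i,m_i\neq 0$ and $n_i/m_i=(n-n_1)/(m-m_1)=n/m$. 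Together with the conclusion for $V_1$, this closes the induction.

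The step I expect to require the most care is the equality analysis of the $2$-partition $\{V_1,W\}$: besides the \emph{proportional} near-equality captured by the vanishing of $(nm_1-mn_1)^2$, one must also exclude the degenerate possibility that a part is an independent set contained in $A$ or in $B$. Here the computation $d(\mathcal{P}_1)=\frac{n(m-m_2)}{n+m-m_2}$ from the proof of Lemma~\ref{complete_bipartite_lem} settles the matter, since it is strictly smaller than $\frac{nm}{n+m}$ as soon as $m_2\ge 1$, and $m_2\ge 1$ holds whenever the complementary part is nonempty; the remaining fully degenerate configurations are ruled out by the standing assumption $n,m\ge 1$.
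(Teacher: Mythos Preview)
Your proof is correct and follows precisely the approach the paper intends: the paper does not spell out a proof but merely states that the corollary can be ``inductively deduced'' from the two computations in the proof of Lemma~\ref{complete_bipartite_lem}, and your argument carries out exactly this induction, including the equality analysis of both computations.
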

 
 Consequently, for  any complete bipartite graph $G_{n,m}$, if $n$ and $m$ are relatively prime  the only 
  optimal solution of $G_{n,m}$ is the whole graph. Otherwise, several optimal solutions exist and are characterized exactly by Corollary~\ref{bipar_cond}.

\begin{figure}
\begin{center}
\begin{minipage}[c]{0.2\linewidth}
\begin{tikzpicture}[thick,scale=1]
   \tikzset{node/.style 2 args={draw, circle,draw=black,scale=0.7, label=#1:{\large #2}}};
   \node[node={135}{\small $v_1$} ] (0) at (0,0) {}; 
\node[node={0}{\small $v_2$} ] (1) at (1,0) {};
\node[node={225}{\small $v_3$} ] (2) at (0,-1) {};
\node[node={315}{\small $v_4$} ] (3) at (1,-1) {};
\node[node={90}{\small $v_5$} ] (4) at (1,1) {};

\draw (0) -- (1);
\draw (0) -- (2);
\draw (2) -- (3);
\draw (3) -- (1);
\draw (1) -- (4);
   
\end{tikzpicture}
\end{minipage}
\begin{minipage}[c]{0.05\linewidth}
$\Rightarrow$
\end{minipage}
\begin{minipage}[c]{0.7\linewidth}
\begin{tikzpicture}[thick,scale=0.42]
   \tikzset{node/.style 2 args={draw, circle,draw=black,scale=0.7, label=#1:{\large #2}}};

\node[node={90}{\small $v_1$} ] (0) at (0,0) {}; 
\node[node={90}{\small $v_2$} ] (1) at (1,0) {};
\node[node={90}{\small $v_3$} ] (2) at (2,0) {};
\node[node={90}{\small $v_4$} ] (3) at (3,0) {};
\node[node={90}{\small $v_5$} ] (4) at (4,0) {};
\node[node={90}{\small $w_1^1$} ] (5) at (6.5,0) {};
\node[node={90}{\small $w^1_2$} ] (6) at (7.5,0) {};
\node[node={90}{\small $w^1_3$} ] (7) at (8.5,0) {};
\node[node={90}{\small $w^2_1$} ] (8) at (13.5,0) {};
\node[node={90}{\small $w^2_2$} ] (9) at (14.5,0) {};
\node[node={90}{\small $w^2_3$} ] (10) at (15.5,0) {};
\node[node={90}{\small $z$} ] (11) at (19,0) {};

\node[node={270}{\small $v'_1$} ] (12) at (0,-8) {};
\node[node={270}{\small $v'_2$} ] (13) at (1,-8) {};
\node[node={270}{\small $v'_3$} ] (14) at (2,-8) {};
\node[node={270}{\small $v'_4$} ] (15) at (3,-8) {};
\node[node={270}{\small $v'_5$} ] (16) at (4,-8) {};

\foreach \i in {1,...,7}{
        \node[node={270}{\small $x^{1}_{\i}$} ] (a\i) at (4 + \i ,-8) {};}

\foreach \i in {1,...,7}{
        \node[node={270}{\small $x^{2}_{\i}$} ] (b\i) at (4 + \i + 7,-8) {};}

\node[node={270}{\small $z_1$} ] (31) at (19,-8) {};
\node[node={270}{\small $z_2$} ] (32) at (20,-8) {};

\foreach \i in {0,...,4}{
    \foreach \j in {1,...,7}{
        \draw (\i) edge[gray!60] (a\j);
    }
}

\foreach \i in {0,...,4}{
    \foreach \j in {1,...,7}{
        \draw (\i) edge[gray!60] (b\j);
    }
}

\foreach \j in {0,...,4}{
    \draw (\j) edge[gray!60] (31);
    \draw (\j) edge[gray!60] (32);
}

\foreach \i in {12,...,16}{
    \foreach \j in {5,...,10}{
        \draw (\i) edge[gray!60] (\j);
    }
}


\foreach \i in {5,6,7}{
    \foreach \j in {1,...,6}{
        \draw (\i) edge[red] (a\j);
    }
}

\foreach \i in {8,9,10}{
    \foreach \j in {1,...,6}{
        \draw (\i) edge[red] (b\j);
    }
}


\draw (0) -- (12);
\draw (0) -- (13);
\draw (0) -- (14);

\draw (1) -- (12);
\draw (1) -- (13);
\draw (1) -- (15);
\draw (1) -- (16);

\draw (2) -- (12);
\draw (2) -- (14);
\draw (2) -- (15);

\draw (3) -- (13);
\draw (3) -- (14);
\draw (3) -- (15);

\draw (4) -- (13);
\draw (4) -- (16);


\foreach \j in {2,...,7}{
    \draw (11) edge[cyan] (a\j);
}
\foreach \j in {2,...,7}{
    \draw (11) edge[cyan] (b\j);
}
\draw (31) edge[cyan] (11);
\draw (32) edge[cyan] (11);

\end{tikzpicture}
\end{minipage}
\caption{A graph $G$, instance of \textsc{Dominating Set} and the  bipartite graph $G'$ obtained from $G$, for $k=2$ and $n=5$.}
\end{center}
\end{figure}\label{bipartitefigure}
\begin{theorem} \label{bipartite}
\DGP~is NP-hard  on dense bipartite graphs. 
\end{theorem}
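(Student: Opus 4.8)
The plan is to reduce from \textsc{Dominating Set}. First note that \textsc{Dominating Set} stays NP-hard under the promise that the target $k$ is a constant fraction of $n=|V(G)|$: given an arbitrary instance $(G,k)$ with $k<n$ (otherwise it is trivial), replacing $G$ by its disjoint union with $t:=\max\{0,n-4k\}$ copies of the path on three vertices and $k$ by $k+t$ changes neither feasibility (each added path needs exactly one dominating vertex) nor polynomiality, and yields an equivalent instance with $n/4\le k<n$. This restriction is what makes the graph we build dense, so from now on assume $n/4\le k<n$.

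Given such an instance, construct the bipartite graph $G'$ of Figure~\ref{bipartitefigure}. Put on the left a copy $v_i$ and on the right a copy $v'_i$ of each vertex $v_i$ of $G$, joined by an edge $v_iv'_j$ exactly when $v_j\in N_G[v_i]$, so that the closed-neighbourhood incidence bigraph of $G$ is a subgraph of $G'$. Add, for every $j\in\{1,\dots,k\}$, a left block $W^j$ and a right block $X^j$ (of the sizes indicated in the figure) together with a left vertex $z$ and two right vertices $z_1,z_2$. The gadget edges are: every $v_i$ to every vertex of every $X^j$ and to $z_1$ and $z_2$; every $v'_i$ to every vertex of every $W^j$; inside block $j$, every vertex of $W^j$ to all of $X^j$ but one designated vertex $x^j_{\mathrm{last}}$, and $z$ to all of $X^j$ but one designated vertex $x^j_{\mathrm{first}}\neq x^j_{\mathrm{last}}$; and $z$ to $z_1$ and $z_2$. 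Then $G'$ is bipartite, and computing the degree of each type of vertex (and using $n/4\le k<n$) shows $\delta(G')\ge c\,|V(G')|$ for an absolute constant $c>0$, so $G'$ is dense. We output the \DGP{} instance $(G',r)$ for a rational threshold $r=r(n,k)$ of polynomial size, defined next.

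\emph{(Completeness.)} If $G$ has a dominating set, enlarge it to a set $D=\{v_{i_1},\dots,v_{i_k}\}$ of size exactly $k$ and fix, for every $i$, a dominator $v_{i_{j(i)}}\in D\cap N_G[v_i]$; put $a_j:=|\{i:j(i)=j\}|$, so $\sum_j a_j=n$. Take the partition whose parts are, for each $j$, the set $B_j:=W^j\cup\big(X^j\setminus\{x^j_{\mathrm{last}}\}\big)\cup\{v_{i_j}\}\cup\{v'_i:j(i)=j\}$, together with one further part consisting of $z$, all the vertices $x^j_{\mathrm{last}}$, $z_1$, $z_2$, and all copies $v_i$ with $i\notin D$. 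Each $B_j$ induces a \emph{complete} bipartite graph — the point is that $v_{i_j}$ is adjacent to all of $X^j$ and to exactly the right-copies placed in $B_j$ — and so does the further part. Hence the density of this partition is a sum of one term per block, each an increasing and concave function of the number $a_j$ of right-copies that block contains, plus the assignment-independent density of the further part; by Schur-concavity this sum is minimised at the most unbalanced profile $(n,0,\dots,0)$, and taking $r$ to be that minimum makes every yes-instance admit a partition of density at least $r$.

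\emph{(Soundness.)} Conversely, let $\mathcal P$ be a partition of $G'$ with $d(\mathcal P)\ge r$. Using the missing-edge estimates behind Lemma~\ref{lemmacomplete} and Corollary~\ref{bipar_cond}, one argues first that $\mathcal P$ has only few missing edges, hence is close to a partition into complete bipartite pieces, and then that the gadget vertices pin down the shape of these pieces: a block $W^j$ (respectively the vertex $z$) can sit in a sufficiently dense complete bipartite piece only together with left-copies $v_i$ adjacent to \emph{all} right-copies $v'_{i'}$ in that piece, i.e.\ with left-copies \emph{dominating} the corresponding vertices of $G$; since there are only $k$ blocks, the at most $k$ left-copies used this way must dominate all of $V(G)$, which yields a dominating set of size $\le k$. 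The delicate point — and the reason the block sizes must be chosen just so — is to fix $r$ tightly enough that no alternative (``cheating'') partition of a no-instance, such as one that dumps all right-copies into a single block, can reach density $r$; this is the main obstacle of the proof, the remaining claims (bipartiteness, density, and polynomial size of $(G',r)$) being routine.
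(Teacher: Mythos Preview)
Your reduction starts from the same place as the paper's (Dominating Set, closed-neighbourhood bigraph, padding blocks $W^j,X^j$ and a special vertex $z$), but the soundness direction is not actually proved: you describe what has to happen and then write that ruling out ``cheating'' partitions ``is the main obstacle of the proof''. That obstacle is the whole content of the theorem, and your proposal contains no mechanism to overcome it.

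The idea you are missing is the one that makes the paper's soundness go through cleanly. The paper chooses the block sizes so that $|V_1|=(k+1)(n-k+1)$, $|V_2|=(k+1)N$ with $\gcd(n-k+1,N)=1$, and sets the threshold to $r=(k+1)\,d(G_{n-k+1,N})=\frac{|V_1|\,|V_2|}{|V_1|+|V_2|}$, i.e.\ the density of the \emph{complete} bipartite graph on the same bipartition. This is the maximum density any partition of any spanning bipartite subgraph can have, and by Corollary~\ref{bipar_cond} a partition attains it only if every part is a complete bipartite graph with side ratio $(n-k+1):N$; coprimality then forces every part to be exactly a $G_{n-k+1,N}$. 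From there, a short counting argument pins the $w$-blocks and $z$ into $k+1$ distinct parts and forces each $v'_i$ to sit with some dominating $v_{i_j}$. By contrast, your construction assigns only one $x^j$ vertex (not a variable number) to the residual part, so the YES-instance parts have side ratio depending on the profile $(a_1,\dots,a_k)$; you then set $r$ via Schur concavity to the \emph{minimum} of these profile-dependent densities. That $r$ is strictly below the complete-bipartite bound, so Corollary~\ref{bipar_cond} gives you nothing, and there is no argument left to exclude a NO-instance partition of density $\ge r$. (The separate preprocessing to force $k\ge n/4$ for density is fine in spirit, but the paper avoids it by padding with large complete bipartite pieces, which again lets Corollary~\ref{bipar_cond} do all the structural work.)
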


\begin{proof}
We give a reduction from {\sc Dominating Set}. Let $G=(V,E)$ with $V=\{v_1,\dots,v_n\}$ and an integer $k \geq 1$ be an instance of {\sc Dominating Set}. Assume without loss of generality that $G$ is connected. We first construct a bipartite graph $G'=(V_1,V_2,E')$, that is not dense, and show how solving \DGP~on it solves {\sc Dominating Set} on $G$. In a second step, we show how to make $G'$ dense maintaining the reduction. 

We construct $G'=(V_1,V_2,E')$  as follows:
\begin{itemize}
\item $V_1= V\cup \{w_i^j\colon 1\leq i\leq n-k, \  1\leq j\leq k\}\cup\{z\}$
\item $V_2=V'\cup\{x_r^j\colon 1\leq r\leq N, \ 1\leq j\leq k\}\cup\{z_i\colon 1\leq i\leq N-n\}$ where $V'=\{v_1',\dots,v_n'\}$ and $N\in \mathbb{N}$ is chosen as follows.
Let $c\in \mathbb N$ be the smallest integer such that  $c(n-k+1)-1 > n$ (note that $1\leq c\leq n$) and define $N=c(n-k+1)-1$. For this choice of $N$ it follows that the greatest common divisor of $N$ and $n-k+1$ is 1, and $n<N\leq 2n$.
\item $E'=E_d\cup E_{wx}\cup E_c  \cup E_z$ with

$E_d=\{\{v_i,v_j'\}\colon \{v_i,v_j\}\in E\}\cup \{\{v_i,v_i'\}\colon 1\leq i\leq n \}$,\\
$E_{wx}=\{\{w_i^j,x_r^j\}\colon 1\leq i\leq n-k, \ 1\leq r\leq N-1, 1\leq j\leq k \}$, \\
$E_c=\{\{w_i^j,v_s'\}\colon 1\leq i\leq n-k, 1\leq j\leq k,\ 1\leq s\leq n\}\cup\{\{v_s,x_r^j\}\colon 1\leq s\leq n, \ 1\leq r\leq N, \ 1\leq j\leq k\}$ and \\ $E_z=\{\{z,z_j\}\colon 1\leq j\leq N-n\} \cup \{\{z,x^j_r\} : 2\leq r\leq N, 1\leq j\leq k \} \cup\{\{v_i,z_j\}\colon 1\leq i\leq n, \  1\leq j\leq N-n \}$
\end{itemize}
Notice that  $G'$ is a bipartite graph with $|V_1|=n+1+ k(n-k)$ and  $|V_2|= (k+1)N$.

We show that there exits a dominating set of cardinality at most $k$ in $G$ if and only if there exists a partition  $\mathcal{P}$ of $G'$ with  $d(\mathcal{P})=(k+1)d(G_{n-k+1,N})$.

Suppose there exists a dominating set $D$ in $G$ with $|D|=k$.
Let $D=\{v_{i_1},\dots,v_{i_k}\}$  and $N'(v_{i_j})=N_G[v_{i_j}]\setminus (D\cup N_G(\{v_{i_1},\dots, v_{i_{j-1}}\})$. Define the partition $\mathcal{P}=\{P_1,\dots,P_{k+1}\}$ by:\\ $P_j=\{v_{i_j}\}\cup \{v_r'\colon v_r \in N'(v_{i_j})\} \cup \{w_r^j\colon 1\leq r \leq n-k\} \cup\{x_r^j\colon 1\leq r\leq N-|N'(v_{i_j})| \}$ for $1\leq j\leq k $ and $P_{k+1}=V_1\cup V_2\setminus (\cup_{j=1}^k P_j)$.
With this definition, $\mathcal{P}$ is clearly a partition of $V_1\cup V_2$, and each part $P_j$ contains $n-k+1$ vertices from $V_1$ and $N$ vertices from $V_2$ for each $1\leq j\leq k+1$. Further, each $P_j$ induces a complete bipartite graph $G_{n-k+1,N}$: All vertices $w_r^j$ and $x_r^j$ are connected to each other, and to all vertices in $V_2$ and $V_1$, respectively, by construction. 
Further, $v_{i_j}$ is connected in $G'$ to all vertices in $N'(v_{i_j})$; note here that in  $G'$ we connected $v_i$ to its ``copy'' $v'_i$ for all $1\leq i\leq n$, which models the case that $v_{i_j}$ dominates itself. For $P_{k+1}$, note that $z$ is adjacent to all $x_i^j$-vertices, and each $z_i$ is adjacent to all vertices in $V$. Since $D$ is a dominating set, each vertex from $V'$ is contained in some $N'(v_{i_j})$, thus  $V_2\setminus (\cup_{j=1}^k P_j)$ only contains  $x_i^j$-vertices. Also, the $P_j$ contain all $w_i^j$ vertices and hence $V_1\setminus (\cup_{j=1}^k P_j)$ only contains vertices from $V$.

Conversely, let $\mathcal{P}$ be a partition of $G'$ of density $(k+1)d(G_{n-k+1,N})$. 
Thus, Corollary~\ref{bipar_cond} implies that the vertices for each set $P\in \mathcal{P}$ induce a complete bipartite graph $G_{r,s}$ such that $\frac rs=\frac{|V_1|}{|V_2|}=\frac{k(n-k)+n+1}{(k+1)N}=\frac{n-k+1}{N}$. Since the greatest common divisor of $n-k+1$ and $N$ is one, this yields $r\geq n-k+1$ and $s\geq N$ and especially $\mathcal{P}$ can contain at most $k+1$ sets. 

For all $w_i^j$ and $w_{\ell}^{t}$, if $j \neq t$, $w_i^j$ and $w_{\ell}^{t}$ have $n$ common neighbors, and since $n < N$  there is no part $P\in \mathcal{P}$ such that $w_i^j, w_{\ell}^{t} \in P$. Moreover, for all $i,j$, $w_i^j$ and $z$ have $N-1$ common neighbors so they also cannot be in the same $P\in \mathcal{P}$. Hence, there are exactly $k+1$ parts in $\mathcal{P}$ that are complete bipartite graphs $G_{n-k+1,N}$.

For all $1 \leq j \leq k$,  denote by $P_j$ the set containing the vertices $w^j_i$ for all $1 \leq i \leq n-k$ and $P_z$ the set containing $z$. To reach  cardinality exactly $n-k+1$, $P_j\cap V_1$ has to contain exactly one vertex from $V$ for each $1\leq j\leq k$.
Further, since for any $i$, $v_i'$ is not adjacent to $z$, $V' \subseteq \cup_{j=1}^k P_j$. 
As each $P \in \mathcal{P}$ induces a complete bipartite graph in $G'$, $D= V \cap  \cup_{j=1}^k P_j$ is a set of size $k$, such that each vertex in $V'$ is adjacent to at least one vertex in $D$, so we deduce that $D$ is a dominating set of size $k$ in $G$. 

\medskip
We extend the construction of the proof to create from $G'$ a dense bipartite graph $G''=(V'',E'')$ by adding four sets of vertices $V_1^u, V_1^d, V_2^u, V_2^d$ with $|V_1^u|=|V_1^d|=kn|V_1|=kn(k(n-k)+n+1)$  and $|V_2^u|=|V_2^d|=kn|V_2|=knN(k+1)$. Further, we add edges to turn the pairs $(V_1^u, V_2^u)$, $(V_1^d, V_2^d)$, $(V_1^u, V_1)$, and $(V_2^d,V_2)$ each into complete bipartite graphs. Observe that with this construction $G''$ has $|V''|=(2kn+1)(k(n-k)+n+1)+(2kn+1)N(k+1)< 10k^2n^2$  vertices and that all vertices have degree at least $kn|V_1|\geq \frac 12 k^2 n^2\in \Theta(|V''|)$. (Note that if $k \geq \frac{n}{2}$, $G$ is a trivial yes-instance for \textsc{Dominating Set}.)

We claim that there exists a partition $\mathcal{P'}$ of $G''$ with  $d(\mathcal{P'})=(k+1)d(G_{n-k+1,N})+2kn(k+1)d(G_{n-k+1,N})$ if and only if there exists a dominating set of size $k$ for $G$.  Corollary~\ref{bipar_cond} again implies that this density for  $G''$ can only be achieved by a partition into complete bipartite graphs $G_{r,s}$ with $\frac rs=\frac{(2kn+1)(k(n-k)+n+1)}{(2kn+1)N(k+1)}=\frac{n-k+1}{N}$. The vertices in $V_1^d$ are only adjacent to vertices in $V_2^d$, and the vertices in $V_2^u$ are only adjacent to vertices in $V_1^u$. Clustering these in a ratio $\frac rs$ results in clusters containing exactly all newly added vertices, and this can be done with just two sets in total. 
What remains is to cluster the graph $G'$ into complete bipartite graphs $G_{r,s}$ such that $\frac rs=\frac{|V_1|}{|V_2|}=\frac{k(n-k)+n+1}{(k+1)N}=\frac{n-k+1}{N}$ as before.
\end{proof}

\section{Cubic Graphs }\label{sec4}
\label{sec:cubicDense}
In this section, we study \DGP~on cubic graphs, show that it remains NP-complete on this restricted graph class, but also give a polynomial time  $\frac{4}{3}$-approximation for its optimization variant \MDGP.
 We start with some general observations on the structure of communities in cubic graphs.

\begin{definition}
For $P \subseteq V$, the utility of a vertex  $v \in P$ is defined by  $u_P(v) = \frac{d(S)}{|P|}$, and the utility of $P$ is defined by $u(P)=u_P(v)$ for any $v \in P$. For a partition $\mathcal{P}= \{ V_1, \ldots, V_k\}$, the utility of a vertex $v$ in $\mathcal{P}$ is defined by  $u_{\mathcal{P}}(v)=u_{V_i}(v)$ with $i$ such that $v\in V_i$. 
\end{definition}

Considering these definitions, we can remark that:
\begin{itemize} 
\item  For any subset $P \subseteq V$, and $v,w \in P$, $u_P(v)=u_P(w)$.
\item If $P=\{v\}$ then $u_P(v)=0$.
\item  For any partition $\mathcal{P}$ of $G$, $\sum\limits_{V_i \in \mathcal{P}}d(V_i) = \sum\limits_{v \in V}u_{\mathcal{P}}(v)$. 
\end{itemize}

\begin{lemma}
\label{lem:utilPasTriDia}\label{lem:UtilTriangleCubique}\label{lem:UtilMatchCubique}
Let $G=(V,E)$ be a cubic graph without connected components that induce a $K_4$. For any partition $\mathcal{P}$ of $G$  the following holds:
\begin{itemize}
\item $u_{\mathcal{P}}(v) \leq \frac{1}{3}$ for all vertices $v\in V$
\item if $P\in \mathcal P$ is not a triangle, diamond or Case 1 in Figure~\ref{fig:utilDiamants} then $u(P) \leq \frac{1}{4}$
\end{itemize}
\end{lemma}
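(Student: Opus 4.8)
The plan is to reduce both bullets to a single inequality about the size of a part. By the definitions, if $P$ is the part of $\mathcal P$ containing $v$, then $u_{\mathcal P}(v)=u(P)=\frac{d(P)}{|P|}=\frac{|E(P)|}{|P|^2}$, so everything comes down to bounding $|E(P)|$ in terms of $|P|$. Since $G$ is cubic, the degrees inside $G[P]$ sum to at most $3|P|$, hence $|E(P)|\le\tfrac{3|P|}{2}$ and $u(P)\le\tfrac{3}{2|P|}$. This crude bound already gives $u(P)\le\tfrac{3}{10}<\tfrac13$ whenever $|P|\ge 5$, and $u(P)\le\tfrac{3}{2|P|}\le\tfrac14$ whenever $|P|\ge 6$. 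What is left is a finite case check over the small values of $|P|$.

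For the first bullet I would dispatch $|P|\in\{1,2,3\}$ directly: $u(P)=0$ for $|P|=1$, $u(P)\le\tfrac14$ for $|P|=2$, and $|E(P)|\le 3$ for $|P|=3$ gives $u(P)\le\tfrac13$. The only delicate small case is $|P|=4$, and here I would invoke the hypothesis that $G$ has no connected component inducing a $K_4$: in a cubic graph $G[P]=K_4$ would force all four vertices to spend all their degree inside $P$, so $P$ would be exactly a $K_4$-component, contradiction. Hence $G[P]\ne K_4$, so $|E(P)|\le 5$ and $u(P)\le\tfrac{5}{16}<\tfrac13$.

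For the second bullet I would refine the same analysis. If $|P|=3$ and $G[P]$ is not a triangle, then $|E(P)|\le 2$ and $u(P)\le\tfrac29<\tfrac14$. If $|P|=4$ and $G[P]$ is neither $K_4$ (excluded as above) nor a diamond, then $|E(P)|\le 4$ and $u(P)\le\tfrac14$. If $|P|=6$, then $|E(P)|\le 9$ and $u(P)\le\tfrac14$; and $|P|\ge 7$ is covered by the crude bound. The crux is $|P|=5$: to have $u(P)>\tfrac14$ we need $|E(P)|\ge 7$, and since $|E(P)|\le\lfloor\tfrac{15}{2}\rfloor=7$ this forces $|E(P)|=7$ exactly. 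I would then pin down $G[P]$ up to isomorphism. The degree sum is $14$, so no vertex can have degree $\le 1$ (the other four would have to sum to at least $13>12$), which forces the degree sequence $(3,3,3,3,2)$. Let $v$ be the degree-$2$ vertex, with neighbours $a,b$ and non-neighbours $c,d$; then $c$ and $d$, being non-adjacent to $v$, must each be adjacent to all of the remaining three vertices, so $ca,cb,cd,da,db\in E(P)$; this already makes $a$ and $b$ have degree $3$ (via $v,c,d$), so $ab\notin E(P)$. Thus $G[P]$ is a diamond on $\{a,b,c,d\}$ with the missing edge $ab$, together with $v$ joined to the two degree-$2$ vertices of that diamond — exactly Case~1 of Figure~\ref{fig:utilDiamants}. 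Consequently, if $|P|=5$ and $P$ is not Case~1, then $|E(P)|\le 6$ and $u(P)\le\tfrac{6}{25}<\tfrac14$.

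I expect the $|P|=5$ sub-case to be the only genuine obstacle: every other value of $|P|$ succumbs to the degree-sum bound $|E(P)|\le\tfrac{3|P|}{2}$ together with the $K_4$-component exclusion, whereas for $|P|=5$ one has to exhibit the extremal $7$-edge configuration and verify it is precisely the exceptional graph named in the statement. I would also remark, as a sanity check that no degenerate cases escape, that each exceptional edge count — $3$ edges on $3$ vertices, $5$ on $4$, $7$ on $5$ — forces $G[P]$ to be connected (a disconnected graph on that many vertices simply cannot carry that many edges), so the descriptions "triangle", "diamond" and "Case~1" are exact and nothing is lost by only arguing through edge counts and degree sequences.
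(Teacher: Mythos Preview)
Your proposal is correct and follows essentially the same approach as the paper: the crude degree-sum bound $u(P)\le\tfrac{3}{2|P|}$ disposes of $|P|\ge 6$, and the remaining small cases are checked by hand. The only place where you go beyond the paper is the $|P|=5$ case, where the paper simply asserts that the unique $7$-edge subgraph of a cubic $K_4$-free graph on five vertices is Case~1, whereas you supply the degree-sequence argument that actually pins this down; this is a welcome elaboration, not a different route.
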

\begin{proof}
Let  $\mathcal{P}$ be a partition of $G$, $P\in \mathcal{P}$ and $v\in P$.
Since $G$ is cubic, $d(P) \leq \frac{3|P|}{2|P|} = \frac{3}{2}$.
Then $u_{\mathcal{P}}(v) \leq \frac{3}{2|P|}$. If $|P| \geq 6$, $u_{\mathcal{P}}(v) \leq \frac{3}{2 \cdot 6} = \frac{1}{4}$. For $|P|=5$ it follows that $u_{\mathcal{P}}(v) \leq \frac{7}{25}<\frac 13$, since a cubic graph on 5 vertices cannot have more than 7 edges. Also, since there exists no $K_4$ in $G$,   the only graph on 5 vertices with 7 edges is Case 1 in Figure~\ref{fig:utilDiamants}, and all other graphs on $5$ vertices have $6$ or less edges which yields a utility of at most $\frac{6}{25}<\frac 14$.

Case analysis on the graphs of size~4 or less yields that the largest utility is achieved for~$P$ being a triangle, which gives $u_{\mathcal{P}}(v) = \frac{1}{3}$.  Further, if $P$ is not a triangle or a diamond, case analysis on the graphs of size~4 or less shows that $u_{\mathcal{P}}(v)$ is maximized when $P$ is an induced matching and its value is $\frac{1}{4}$.
\end{proof}

\begin{figure}[b]
    \centering
     \begin{minipage}[l]{0.32\textwidth}
                \begin{tikzpicture}[thick, label distance=0cm,scale=0.88]
                     \tikzset{node/.style 2 args={draw, circle,draw=black,scale=0.7, label=#1:{\large #2}}};
            \node[node={225}{$v_1$} ] (1) at (0,0) {};
            \node[node={180}{$v_2$}] (2) at (-1,1) {};
            \node[node={45}{$v_3$} ] (3) at (0,2) {};
            \node[node={315}{$v_4$} ] (4) at (1,1) {};
            \node[node={135}{$v_5$}] (5) at (-2.5,1) {};

            \draw[] (1)--(2);
            \draw[] (2)--(3);
            \draw[] (3)--(4);
            \draw[] (4)--(1);
            \draw[] (2)--(4);
            \draw[] (1)--(5);
            \draw[] (3)--(5);
            
            \fill[color=white] (-1,-1.4) rectangle (1,-0.7);
            
            \fill[color=white] (-1,3.4) rectangle (1,2.7);
            
            \draw[gray] \convexpath{1,5,3,4}{0.3 cm};
            
            \node (n1) at (-0.5,-1.3) {Case 1};
        \end{tikzpicture}
    \end{minipage} \ \ \ 
    \begin{minipage}[c]{0.23\textwidth}
        \begin{tikzpicture}[thick, label distance=0.15cm,scale=0.58]
            \tikzset{node/.style 2 args={draw, circle,draw=black,scale=0.7, label=#1:{\large #2}}};
            \clip (-2,2.5) rectangle  (1.5,-6);
            \node[node={225}{}] (1) at (0,0) {};
            \node[node={225}{}] (2) at (-1,1) {};
            \node[node={225}{}] (3) at (0,2) {};
            \node[node={225}{}] (4) at (1,1) {};
            
            \node (n1) at (0,-1) {};
            \node (n3) at (0,3) {};
             \node (n33) at (0,3) {};           
            \draw[densely dotted] (1)--(n1);
            \draw[densely dotted] (3)--(n33);
        
            \draw[] (1)--(2);
            \draw[] (2)--(3);
            \draw[] (3)--(4);
            \draw[] (4)--(1);
            \draw[] (2)--(4);
            
            
            \draw[gray] \convexpath{3,n3}{0.37 cm};
           \fill[color=white] (-1,3.2) rectangle (1,2.7);
            
            \draw[gray] \convexpath{1,2,4}{0.37 cm};
            \node[node={225}{}] (1a) at (0,-4) {};
            \node[node={225}{}] (2a) at (-1,-3) {};
            \node[node={225}{}] (3a) at (0,-2) {};
            \node[node={225}{}] (4a) at (1,-3) {};
            
            \node (n1a) at (0,-5) {};
            \node (n4a) at (1.8,-3) {};
            \node (n3a) at (0,-1) {};            
            
            \draw[densely dotted] (1a)--(n1a);
            \draw[densely dotted] (3a)--(n3a);
            
            \draw[] (1a)--(2a);
            \draw[] (2a)--(3a);
            \draw[] (3a)--(4a);
            \draw[] (4a)--(1a);
            \draw[] (2a)--(4a);

            \draw[gray] \convexpath{4a,n4a}{0.37 cm};
             \fill[color=white] (1.8,-2) rectangle (2.5,-4);
            
            
            \draw[gray] \convexpath{1a,2a,3a}{0.37 cm};
             \node (n1a) at (0,-5.3) {Case 2};
        \end{tikzpicture}
               
    \end{minipage}
        \begin{minipage}[c]{0.23\textwidth}
        \begin{tikzpicture}[thick, label distance=0.15cm,scale=0.58]
            \tikzset{node/.style 2 args={draw, circle,draw=black,scale=0.7, label=#1:{\large #2}}};
            \clip (-2,2.5) rectangle  (1.5,-6);
            \node[node={225}{}] (1) at (0,0) {};
            \node[node={225}{}] (2) at (-1,1) {};
            \node[node={225}{}] (3) at (0,2) {};
            \node[node={225}{}] (4) at (1,1) {};
            
            \node (n1) at (0,-1) {};
            \node (n3) at (0,3) {};
             \node (n33) at (0,3) {};           
            \draw[densely dotted] (1)--(n1);
            \draw[densely dotted] (3)--(n33);
        
            \draw[] (1)--(2);
            \draw[] (2)--(3);
            \draw[] (3)--(4);
            \draw[] (4)--(1);
            \draw[] (2)--(4);
            
            
            \draw[gray] \convexpath{3,n3}{0.37 cm};
           \fill[color=white] (-1,3.2) rectangle (1,2.7);
            \draw[gray] \convexpath{1,n1}{0.37 cm};
           \fill[color=white] (-1,-2.2) rectangle (1,-0.5);

            \draw[gray] \convexpath{2,4}{0.37 cm};
            
            \node[node={225}{}] (1a) at (0,-4) {};
            \node[node={225}{}] (2a) at (-1,-3) {};
            \node[node={225}{}] (3a) at (0,-2) {};
            \node[node={225}{}] (4a) at (1,-3) {};
            
            \node (n1a) at (0,-5) {};
            \node (n4a) at (1.8,-3) {};
            \node (n3a) at (0,-1) {};            
            
            \draw[densely dotted] (1a)--(n1a);
            \draw[densely dotted] (3a)--(n3a);
            
            \draw[] (1a)--(2a);
            \draw[] (2a)--(3a);
            \draw[] (3a)--(4a);
            \draw[] (4a)--(1a);
            \draw[] (2a)--(4a);

            \draw[gray] \convexpath{4a,n4a}{0.37 cm};
             \fill[color=white] (1.8,-2) rectangle (2.5,-4);
            
           \draw[gray] \convexpath{1a,n1a}{0.37 cm};
           \fill[color=white] (-1,-7.2) rectangle (1,-4.5);  
            
            
            \draw[gray] \convexpath{2a,3a}{0.37 cm};
             \node (n1a) at (0,-5.3) {Case 3};
        \end{tikzpicture}
               
    \end{minipage}

    \commente{
    \begin{minipage}[c]{0.3\textwidth}
        \centering
        \begin{tikzpicture}[thick, label distance=0.15cm,scale=0.9]
            \tikzset{node/.style 2 args={draw, circle,draw=black,scale=0.7, label=#1:{\large #2}}};
            \node[node={180}{$v_1$} ] (1) at (0,0) {};
            \node[node={90}{$v_2$}] (2) at (-1,1) {};
            \node[node={0}{$v_3$} ] (3) at (0,2) {};
            \node[node={270}{$v_4$} ] (4) at (1,1) {};
            
            \node (n1) at (0,-1) {};
            \node (n3) at (0,3) {};
            
            \draw[dashed] (1)--(n1);
            \draw[dashed] (3)--(n3);
            
            \draw[] (1)--(2);
            \draw[] (2)--(3);
            \draw[] (3)--(4);
            \draw[] (4)--(1);
            \draw[] (2)--(4);
            
            \draw[gray] \convexpath{1,n1}{0.3 cm};
            \fill[color=white] (-1,-1.4) rectangle (1,-0.7);
            
            \draw[gray] \convexpath{3,n3}{0.3 cm};
            
            \draw[gray] \convexpath{2,4}{0.3 cm};
            \node (n1) at (0,-1.3) {Case 3};
        \end{tikzpicture}
    \end{minipage}
    }
        \caption{Different cases of \autoref{lem:UtilDiamCubique}}\label{fig:utilDiamants}
  
\end{figure}
\begin{lemma}
\label{lem:UtilDiamCubique}
Let $G$ be a cubic graph  without connected components that induce a $K_4$, and let  $v_1,v_2,v_3,v_4$ be vertices in $G$ that induce a diamond. Then $u_{\mathcal{P}}(v_1) + u_{\mathcal{P}}(v_2) + u_{\mathcal{P}}(v_3) + u_{\mathcal{P}}(v_4) \leq \frac{5}{4}$ for any partition $\mathcal{P}$ for $G$. 
\end{lemma}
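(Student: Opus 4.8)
Assume without loss of generality that the missing edge of the diamond is $v_1v_3$, so that $v_2,v_4$ are its degree-$3$ vertices and $v_1,v_3$ its (non-adjacent) degree-$2$ vertices, and write $S=\{v_1,v_2,v_3,v_4\}$. Since $G$ is cubic, $N_G(v_2)=\{v_1,v_3,v_4\}$ and $N_G(v_4)=\{v_1,v_2,v_3\}$ lie entirely inside $S$, while $v_1$ and $v_3$ each have exactly one neighbour outside $S$, say $a$ and $b$ respectively (possibly $a=b$); in particular neither $a$ nor $b$ is adjacent to $v_2$ or to $v_4$. Note that $G[S]$ has $5$ edges, hence $u(S)=\tfrac{5}{16}$, so if all four vertices lie in a single part equal to $S$ their utilities sum to exactly $\tfrac{5}{4}$. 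I will use from \autoref{lem:utilPasTriDia} that $u_{\mathcal{P}}(v)\le\tfrac13$ always, and that $u(P)\le\tfrac14$ whenever $G[P]$ is neither a triangle, nor a diamond, nor the graph of Case~1 in Figure~\ref{fig:utilDiamants}.

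The heart of the argument is to determine which triangle, diamond, and Case-$1$ subgraphs of $G$ can contain a vertex of $S$. First, because $a$ and $b$ are adjacent to neither $v_2$ nor $v_4$, the only triangle of $G$ through $v_1$ is $\{v_1,v_2,v_4\}$ and the only one through $v_3$ is $\{v_2,v_3,v_4\}$, and every triangle through $v_2$ or $v_4$ is one of these two; since both contain the edge $v_2v_4$, at most one of them can be a part of $\mathcal{P}$. Next, I would show that the only subgraph of $G$ isomorphic to the diamond that meets $S$ is $G[S]$ itself: a degree-$3$ vertex of such a diamond has all its $G$-neighbours inside it, which already identifies its vertex set, and a degree-$2$ vertex forces its two diamond-neighbours to be a specific adjacent pair whose common neighbourhood in $G$ never yields the fourth vertex unless the diamond equals $S$ --- here the facts $N_G(v_2),N_G(v_4)\subseteq S$ and $a,b\notin N_G(v_2)\cup N_G(v_4)$ do all the work. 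Finally, the only Case-$1$ subgraph of $G$ that meets $S$ is $S\cup\{a\}$, and this requires $a=b$: the unique degree-$2$ vertex of the Case-$1$ graph is the only vertex not belonging to its embedded diamond, a short neighbourhood check rules out every vertex of $S$ playing that role, so the vertex of $S$ lies in the embedded diamond, which by the previous point must be $S$, and then the extra vertex is forced to be the common outside neighbour $a=b$.

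With these facts in hand, let $P_i$ denote the part of $\mathcal{P}$ containing $v_i$. If some part of $\mathcal{P}$ equals the triangle $\{v_1,v_2,v_4\}$, then $u_{\mathcal{P}}(v_1)=u_{\mathcal{P}}(v_2)=u_{\mathcal{P}}(v_4)=\tfrac13$ and, since $v_2,v_4\notin P_3$, the vertex $v_3$ has inside $P_3$ at most its single outside neighbour $b$; thus $G[P_3]$ contains a vertex of degree at most $1$, so it is neither a triangle, nor a diamond, nor a Case-$1$ graph, whence $u(P_3)\le\tfrac14$ and the four utilities sum to at most $1+\tfrac14=\tfrac54$. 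The case in which $\{v_2,v_3,v_4\}$ is a part is symmetric, with $v_1$ in the role of $v_3$ and $a$ in the role of $b$. Otherwise neither triangle is a part, so by the first structural fact no $P_i$ is a triangle; then \autoref{lem:utilPasTriDia} gives $u(P_i)\le\tfrac14$ for all $i$ unless some $G[P_i]$ is a diamond or a Case-$1$ graph, and in that case the other two structural facts force $P_i=S$, giving the sum $4\cdot\tfrac{5}{16}=\tfrac54$, or $P_i=S\cup\{a\}$, giving the sum $4\cdot\tfrac{7}{25}=\tfrac{28}{25}<\tfrac54$. If neither of these occurs, the sum is at most $4\cdot\tfrac14=1$. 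Thus in every case $u_{\mathcal{P}}(v_1)+u_{\mathcal{P}}(v_2)+u_{\mathcal{P}}(v_3)+u_{\mathcal{P}}(v_4)\le\tfrac54$.

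The final case distinction is routine; the real obstacle is establishing that $G$ contains no ``extra'' diamond or Case-$1$ subgraph touching the prescribed diamond. This amounts to patient neighbourhood bookkeeping, exploiting that $v_2$ and $v_4$ are ``internal'' (all their neighbours lie in $S$) and that the single external neighbours of $v_1$ and $v_3$ avoid $N_G(v_2)\cup N_G(v_4)$; the one simplification worth highlighting is to derive the Case-$1$ statement from the diamond statement instead of analysing all embeddings of the Case-$1$ graph from scratch.
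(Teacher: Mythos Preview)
Your proof is correct, but it takes a different and considerably longer route than the paper's. You first classify \emph{all} triangles, diamonds, and Case-$1$ subgraphs of $G$ that meet $S=\{v_1,v_2,v_3,v_4\}$, and then case-split on whether some $P_i$ equals one of these special graphs. The paper instead case-splits directly on how many of the four diamond vertices lie in the same part (all four, exactly three, or at most two), and then uses one simple observation: any vertex of $S$ that is separated from enough of its diamond-neighbours has degree at most~$1$ in its own part, and therefore that part cannot be a triangle, diamond, or Case-$1$ graph, so its utility is at most $\tfrac14$ by \autoref{lem:utilPasTriDia}. Concretely, if three vertices are together then the fourth has degree~$\le 1$ in its part, and if at most two are together then both degree-$3$ vertices $v_2,v_4$ (whose entire neighbourhood is $S$) have degree~$\le 1$ in their parts; combining these degree-$\le 1$ vertices' bound of $\tfrac14$ with the universal bound $\tfrac13$ for the remaining vertices immediately gives $\tfrac54$ in every case.

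The trade-off is that your structural classification is more informative---it tells you exactly which high-utility configurations can occur---but the paper's degree-counting argument bypasses that work entirely, never needing to enumerate which diamonds or Case-$1$ graphs touch $S$. The key shortcut you missed is that the exceptional graphs in \autoref{lem:utilPasTriDia} all have minimum degree~$2$, so a single degree-$\le 1$ vertex already rules them out without any neighbourhood bookkeeping.
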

\begin{proof}
Let $\mathcal{P}$ be any partition of $G$. Let $P_1 \in \mathcal{P}$ (resp. $P_2$, $P_3$ and $P_4$) be the part that contains $v_1$ (resp. $v_2$, $v_3$ and $v_4$). We distinguish several cases.

\noindent
\textbf{Case 1:} The four   vertices $v_i$ are in the same part $P_1$. If $P_1$ is a diamond, then $d(P_1) = \frac{5}{4}$ and thus $u_{\mathcal{P}}(v_1) + u_{\mathcal{P}}(v_2) + u_{\mathcal{P}}(v_3) + u_{\mathcal{P}}(v_4) = \frac{5}{4}$. If the four vertices are in a part $P_1$ with more than 4 vertices, by Lemma~\ref{lem:utilPasTriDia} the only subgraph that gives utility more than $\frac 14$ per vertex is the graph displayed as Case 1 in Figure~\ref{fig:utilDiamants}. This graph yields a utility of $\frac{7}{25}$ which gives  $u_{\mathcal{P}}(v_1) + u_{\mathcal{P}}(v_2) + u_{\mathcal{P}}(v_3) + u_{\mathcal{P}}(v_4) = \frac{28}{25}<\frac 54$.
 
\noindent
\textbf{Case 2:} Three among the four vertices of the diamond are in the same part. Then the fourth vertex has degree at most one in its part, thus by  Lemma~\ref{lem:utilPasTriDia} its utility is at most $\frac 14$. Further, also by  Lemma~\ref{lem:utilPasTriDia}, the utility of the other three vertices is at most $\frac 13$ and we conclude that $u_{\mathcal{P}}(v_1) + u_{\mathcal{P}}(v_2) + u_{\mathcal{P}}(v_3) + u_{\mathcal{P}}(v_4) \leq 1 + \frac{1}{4} = \frac{5}{4}$.

\noindent
\textbf{Case 3:} At most two of the four vertices are together in the same part. Then the two vertices of degree three in the diamond have degree at most one in their part, thus  by  Lemma~\ref{lem:utilPasTriDia} we deduce like in Case 2 that $u_{\mathcal{P}}(v_1) + u_{\mathcal{P}}(v_2) + u_{\mathcal{P}}(v_3) + u_{\mathcal{P}}(v_4) \leq 2\frac 14+2\frac 13< \frac 54$.
\end{proof}

\begin{lemma}
\label{lem:ApproxBorneCubique}
Let $G$ be a cubic graph on $n$ vertices without connected components that induce a $K_4$, and let $D$ be the set of diamonds in $G$ and $T$ the set of triangles in $G$  that do not belong to a diamond. For any partition $\mathcal{P}$, $d(\mathcal{P}) \leq \frac{5}{4}|D| + |T| + \frac{1}{4}(n - 3|T| - 4|D|)$.
\end{lemma}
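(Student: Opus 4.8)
The plan is to use the identity $d(\mathcal P)=\sum_{v\in V}u_{\mathcal P}(v)$ recorded just before Lemma~\ref{lem:utilPasTriDia}, and to estimate this sum by grouping the vertices according to which diamond of $D$, or which triangle of $T$, they belong to. So the goal is to show $\sum_{v\in V}u_{\mathcal P}(v)\le\frac54|D|+|T|+\frac14(n-3|T|-4|D|)$, i.e.\ to spend a ``budget'' of $\frac54$ per diamond, $1$ per triangle of $T$, and $\frac14$ per remaining vertex.

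First I would record an easy structural fact: in a cubic graph the diamonds of $D$ are pairwise vertex-disjoint, the triangles of $T$ are pairwise vertex-disjoint, and no triangle of $T$ shares a vertex with a diamond. This is pure degree counting: in a diamond a vertex of internal degree $3$ has all three edges inside it, a vertex of internal degree $2$ has exactly one edge leaving it, and the two internal-degree-$2$ vertices are non-adjacent; hence a shared vertex between two diamonds, or the existence of two triangles sharing an edge or a vertex, or a triangle sharing a vertex with a diamond, each forces a vertex of degree $\ge4$ or exhibits a diamond containing the triangle, both impossible. Consequently $V$ is the disjoint union of $V_D$ (vertices of diamonds, $|V_D|=4|D|$), $V_T$ (vertices of triangles of $T$, $|V_T|=3|T|$), and $V_O$ (the remaining $n-3|T|-4|D|$ vertices). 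The three groups are independent of how $\mathcal P$ cuts through them.

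Then I would bound the three partial sums. By Lemma~\ref{lem:UtilDiamCubique}, $\sum_{v\in\delta}u_{\mathcal P}(v)\le\frac54$ for each $\delta\in D$ however $\delta$ is split among the parts. By the first item of Lemma~\ref{lem:utilPasTriDia}, $u_{\mathcal P}(v)\le\frac13$ always, so $\sum_{v\in V_T}u_{\mathcal P}(v)\le|T|$. For $V_O$ the clean bound $u_{\mathcal P}(v)\le\frac14$ would follow from the second item of Lemma~\ref{lem:utilPasTriDia} unless the part $P\ni v$ induces a triangle, a diamond, or the $5$-vertex $7$-edge graph of Case~$1$; a triangle-part or diamond-part consists entirely of vertices of $V_D\cup V_T$, so the only genuine obstruction is a Case~$1$ part. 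Here I would check the few needed facts: such a part $P$ induces a unique diamond $\delta_P\in D$ on four of its five vertices, its fifth vertex $w_P$ lies in no triangle and no diamond of $G$ (a short neighbourhood check, using cubicity) and hence $w_P\in V_O$, any diamond meeting $P$ equals $\delta_P$, distinct Case~$1$ parts yield distinct $\delta_P$ and distinct $w_P$, and, crucially, $\sum_{v\in P}u_{\mathcal P}(v)=d(P)=\tfrac75\le\tfrac54+\tfrac14$.

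Finally I would assemble everything. Let $\mathcal P_1\subseteq\mathcal P$ be the set of Case~$1$ parts. By the structural facts the vertex set is the disjoint union of the Case~$1$ parts, the diamonds $\delta\in D$ not of the form $\delta_P$, the triangles of $T$, and the vertices of $V_O$ not of the form $w_P$; hence $d(\mathcal P)=\sum_{v\in V}u_{\mathcal P}(v)$ splits accordingly. Bounding each Case~$1$ part by $\frac54+\frac14$, each remaining diamond by $\frac54$, each triangle by $1$, and each remaining $V_O$-vertex by $\frac14$ gives $d(\mathcal P)\le|\mathcal P_1|\bigl(\tfrac54+\tfrac14\bigr)+(|D|-|\mathcal P_1|)\tfrac54+|T|+(|V_O|-|\mathcal P_1|)\tfrac14$, and the two $\pm\tfrac14|\mathcal P_1|$ terms cancel, leaving $\frac54|D|+|T|+\frac14|V_O|=\frac54|D|+|T|+\frac14(n-3|T|-4|D|)$, as claimed. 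The main obstacle is exactly this bookkeeping around Case~$1$ parts — one must make sure a Case~$1$ part really hides an honest diamond of $D$ together with one otherwise-unaccounted vertex, and that its density $7/5$ fits inside the $5/4+1/4$ budget; the rest is a routine regrouping of the utility sum together with the disjointness argument of the second paragraph.
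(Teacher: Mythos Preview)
Your proposal is correct and follows essentially the same approach as the paper: partition $V$ into diamond-vertices, $T$-triangle-vertices, and the rest, use Lemma~\ref{lem:UtilDiamCubique} and the per-vertex bounds of Lemma~\ref{lem:utilPasTriDia}, and handle the Case~1 exception via the inequality $7/5\le 5/4+1/4$ by pairing each such part with its diamond and its extra vertex. Your write-up is in fact more careful than the paper's (you make the disjointness of diamonds and $T$-triangles explicit and spell out the $|\mathcal P_1|$-cancellation), but the idea is the same.
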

\begin{proof}
By \autoref{lem:UtilTriangleCubique}, the only vertices with utility more than $\frac 14$ are those that are in triangles, diamonds, or the unique neighbors of diamonds (in the sense of vertex $v_5$ in Case 1 of Figure~\ref{fig:utilDiamants}), and we know that the sum of the utilities of the vertices constituting a triangle is at most $3 \cdot \frac{1}{3} = 1$.
By \autoref{lem:UtilDiamCubique}, we further know that the sum of the utilities of the vertices constituting a diamond is at most $\frac{5}{4}$. The unique neighbors of diamonds have a utility of more than $\frac 14$ if and only if they are in a part isomorphic to  Case 1 of Figure~\ref{fig:utilDiamants}, which has a density of $\frac 75 < \frac{5}{4}+\frac 14$. Thus, if $S$ is the set of unique neighbors of diamonds, then the sum of the utilities of the vertices in the diamonds in $D$ and the vertices in $S$ is at most $\frac 54 |D|+\frac 14|S|$.
All remaining vertices  have a utility of at most $\frac{1}{4}$ by \autoref{lem:UtilMatchCubique}. We deduce that $d(G) \leq \frac{5}{4}|D|+\frac 14|S| + |T| + \frac{1}{4}(n - 3|T| - 4|D|-|S|)=\frac{5}{4}|D|+ |T| + \frac{1}{4}(n - 3|T| - 4|D|)$.
\end{proof}

We show that  \DGP~is NP-complete even for cubic graphs by giving a reduction from \textsc{Exact Cover By 3-Sets} where each element appears in exactly 3 sets, denoted \textsc{Restricted Exact Cover By 3-Sets},  known to be NP-hard by~\cite{Gonzalez85}.

\defprob{Restricted Exact Cover By 3-Sets (RX3C)}{A set $X$ of elements with $|X| = 3q$ and a collection $C$ of 3-element subsets of $X$ where each element appears in exactly 3 sets.}{Does $C$ contain an exact cover for $X$, i.e. a subcollection $C' \subseteq C$ such that every element occurs in exactly one member of $C'$ ?}

\noindent
The following definition gives the construction to reduce \textsc{RX3C} to \DGP.

\begin{figure}
\begin{minipage}[c]{0.49\linewidth}
    \centering
   \begin{tikzpicture}[thick,scale=0.35]
   \tikzset{node/.style 2 args={draw, circle,draw=black,scale=0.7, label=#1:{\large #2}}};

\node[node={90}{$v_x$},fill ] (1) at (0,0) {};
\node[node={90}{$v_{xy_1z_1}^x$} ] (2) at (3,2) {} edge (1);
\node[node={270}{$v_{xy_2z_2}^x$} ] (3) at (0,-3) {} edge (1);
\node[node={90}{$v_{xy_3z_3}^x$} ] (4) at (-3,2) {} edge (1) ;

\end{tikzpicture}
\caption{Subgraph containing one vertex of type 1, $v_x$, and its neighbors in $G$}
\label{t1}
\end{minipage}
\begin{minipage}[c]{0.49\linewidth}
    \centering
   \begin{tikzpicture}[thick,scale=0.35]
   \tikzset{node/.style 2 args={draw, circle,draw=black,scale=0.7, label=#1:{\large #2}}};

\node[node={90}{$v_{xyz}^x$} ] (1) at (0,0) {};
\node[node={90}{$v_{xyz}^y$} ] (2) at (3,2) {} edge (1);
\node[node={270}{$v_x$},fill ] (3) at (0,-3) {} edge (1);
\node[node={90}{$v_{xyz}^z$} ] (4) at (-3,2) {} edge (1) edge(2) ;

\end{tikzpicture}
\caption{Subgraph containing one vertex of type 1, $v_x$, and three of type 2}
\label{t2}
\end{minipage}

\end{figure}

\begin{definition}\label{construction}
Let $I=(X,C)$ be an instance of $RX3C$. We define the construction $\sigma$ transforming the instance $I$ into the graph $G:= \sigma(I)$ where  $G=(V,E)$  is build as follows (see Figures~\ref{t1} and \ref{t2}): 

\begin{itemize}
    \item for each element $x \in X$, add the vertex $v_x$ to $V$ (called vertices  of type 1 or black vertices).
    \item for each subset of the collection $\{x,y,z\} \in C$, add the vertices $v_{xyz}^x$, $v_{xyz}^y$, $v_{xyz}^z$ to $V$ (called vertices  of type 2 or white vertices).
    \item  add the edges $\{v_{xyz}^x,v_{xyz}^y\}$, $\{v_{xyz}^x,v_{xyz}^z\}$ and $\{v_{xyz}^y,v_{xyz}^z\}$ to $E$  
    \item  add the edges $\{v_{xyz}^x,v_x\}$, $\{v_{xyz}^y,v_y\}$ and $\{v_{xyz}^z,v_z\}$ to $E$ 
\end{itemize}

\noindent
\normalfont{Notice that $G$ is a cubic graph on $|X|$ vertices of type 1 and $3|X|$ vertices of type 2. 
}
\end{definition}

\noindent
Case distinction on the subgraphs in $\sigma(I)$ shows:

\begin{lemma}\label{casescubic}
For $G=(V,E)=\sigma(I)$ and any $P\subseteq V$, it holds that $u(P) \geq \frac{1}{4}$ if and only if $G[P]$ is isomorphic to one of the following three graphs:
\begin{itemize}
    \item a triangle where all the vertices are of type 2 and then $u(P) = \frac{1}{3}$.
    \item an edge between two  type 2 vertices or between  two vertices of different types and then  $u(P) =  \frac{1}{4}$.
    \item the subgraph described in Figure \ref{t2} and then $u(P) =  \frac{1}{4}$.
\end{itemize} 
\label{lem:utiliteG}
\end{lemma}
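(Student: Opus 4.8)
The statement is an ``if and only if'' characterizing exactly which induced subgraphs $G[P]$ of $\sigma(I)$ have utility $u(P)\geq\frac14$. The plan is to argue by a finite case analysis on the size $|P|$, using that $G=\sigma(I)$ is cubic and therefore $d(P)\leq\frac{3}{2}$, hence $u(P)=\frac{d(P)}{|P|}\leq\frac{3}{2|P|}$. This immediately kills all parts with $|P|\geq 7$ (then $u(P)\leq\frac{3}{14}<\frac14$), and for $|P|=5,6$ it forces $G[P]$ to have very many edges: at least $\lceil 5\cdot\frac14\rceil=?$ — more precisely $|E(P)|=|P|\cdot d(P)\geq\frac{|P|^2}{4}$, so $|P|=6$ needs $\geq 9$ edges and $|P|=5$ needs $\geq 7$ edges. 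I would then use the specific structure of $\sigma(I)$ to rule these out: every triangle in $\sigma(I)$ consists of three type-2 vertices from a common set $\{x,y,z\}\in C$ (the only triangles are the ``white'' triangles of Figure~\ref{t2}), and each type-2 vertex has exactly one type-1 neighbour while each type-1 vertex has three type-2 neighbours lying in three distinct white triangles. A quick count of how many edges a 5- or 6-vertex induced subgraph can collect under these adjacency constraints shows $|E(P)|$ falls short, so no part of size $\geq5$ achieves utility $\geq\frac14$.

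It then remains to enumerate $P$ with $|P|\le 4$ and determine $u(P)=\frac{d(P)}{|P|}=\frac{|E(P)|}{|P|^2}$. For $|P|=1$, $u(P)=0$. For $|P|=2$, $u(P)=\frac14$ exactly when the two vertices are adjacent, and by construction the adjacent pairs are: two type-2 vertices in a common white triangle, or a type-1 vertex with one of its three type-2 neighbours; there are no type-1--type-1 edges, so this yields precisely the second bullet. For $|P|=3$ we need $|E(P)|\geq\frac94$, i.e.\ $|E(P)|=3$, i.e.\ $G[P]$ is a triangle, and as noted the only triangles in $\sigma(I)$ are white triangles, giving $u(P)=\frac13$ and the first bullet. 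For $|P|=4$ we need $|E(P)|\geq 4$; here one checks the possible induced subgraphs on four vertices with at least $4$ edges that can occur in a cubic graph of this form (a $C_4$, a ``paw'' $K_3$ plus a pendant, a diamond, or $K_4$) and verifies via the adjacency constraints that the only realizable one with $u(P)\geq\frac14$ is the $K_3$-plus-pendant where the pendant is the type-1 vertex attached to a white triangle — exactly the subgraph of Figure~\ref{t2}, with $|E(P)|=4$ and hence $u(P)=\frac{4}{16}=\frac14$. (A $C_4$ would give $u=\frac{4}{16}=\frac14$ too, but $\sigma(I)$ contains no induced $C_4$: any $4$-cycle would need a type-1 vertex with two type-2 neighbours in a common triangle, and then the triangle edge is also present, so it is not induced; a diamond or $K_4$ cannot occur as it would force a type-1 vertex of degree $\geq2$ inside a white triangle.) Conversely, each of the three listed graphs clearly has the stated utility, completing the equivalence.

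The main obstacle is the $|P|\in\{5,6\}$ elimination: one must be careful that the cubic bound alone does not exclude these sizes (e.g.\ $|P|=6$ with $9$ edges would give $u=\frac14$), so the argument genuinely needs the bipartite-like structure of $\sigma(I)$ — namely that edges among type-2 vertices live inside disjoint triangles and each type-1 vertex contributes at most one edge to each such triangle it touches — to show that a $5$- or $6$-vertex induced subgraph cannot accumulate $7$ resp.\ $9$ edges. I would phrase this by bounding $|E(P)|$ as (edges inside white triangles met by $P$) plus (type-1--type-2 edges inside $P$), and checking that over all distributions of the $\leq6$ vertices among type-1 vertices and white triangles the total stays below the required threshold.
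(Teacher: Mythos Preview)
Your overall plan—case analysis on $|P|$ using the cubic bound $u(P)\le\frac{3}{2|P|}$ to reduce to $|P|\le 6$, then inspecting small cases—is exactly the paper's approach. The difference is that the paper replaces your ad hoc edge-counting for $|P|\in\{4,5,6\}$ by one clean structural observation: in $\sigma(I)$ no two vertices share more than one common neighbour, hence $G$ contains no $C_4$ as a (not necessarily induced) subgraph, and therefore no diamond. With that in hand, $|P|=5$ dies immediately (the unique $K_4$-free graph on five vertices with seven edges contains a diamond), $|P|=4$ reduces to the paw of Figure~\ref{t2} (every other $4$-vertex graph with $\ge 4$ edges contains a $C_4$), and $|P|=6$ is handled by noting that nine edges on six vertices would force $G[P]$ to be $3$-regular, impossible since $G$ is connected on more than six vertices. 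This is considerably shorter than the distribution-over-white-triangles bookkeeping you sketch, though your counting would also go through.

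One concrete gap in your write-up: your argument that $\sigma(I)$ has no induced $C_4$ is not correct as stated. You assert that ``any $4$-cycle would need a type-1 vertex with two type-2 neighbours in a common triangle,'' but this is not the only configuration to rule out—a type-1 vertex $v_x$ has its three type-2 neighbours in three \emph{different} white triangles, so a putative $4$-cycle through $v_x$ would use two such neighbours from distinct triangles, and you have not explained why the cycle then fails to close. (It does fail: the fourth vertex would have to lie in both triangles, or be a second type-1 neighbour of a type-2 vertex, both impossible—but this is exactly the ``no two common neighbours'' argument the paper makes.) Similarly, your one-line dismissal of diamonds and $K_4$ (``would force a type-1 vertex of degree $\ge 2$ inside a white triangle'') is opaque; the cleanest route is again that both contain a $C_4$. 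I would recommend stating and proving the no-$C_4$ fact once, as the paper does, and then the remaining cases fall out with almost no work.
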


\begin{proof}
Let $P\subseteq V$ such that $u(P) \geq \frac{1}{4}$. We show in the following that there are exactly three possible  subgraphs $G[P]$ such that  $u(P) \geq \frac{1}{4}$. $G$ obviously does not contain a connected component that is a $K_4$. Also, observe that by its construction, $G$ does not contain $C_4$ as subgraph, since there are no two vertices $u,v\in V$ that have more than one common neighbor. Note that this also implies that $G$ is diamond-free.

As $G$ is cubic, $|E(G[P])| \leq \frac{3}{2}|P|$ and  so $d(P) \leq \frac{3}{2}|P|\cdot \frac{1}{|P|} = \frac{3}{2}$. Since $\frac{1}{4} \leq u(P) \leq \frac{3}{2|P|}$  then $|P| \leq 6$. We study the five following cases:
\begin{itemize}
    \item Case $|P| = 6$: Since $u(P)= \frac{|E(P)|}{6^2}  \geq \frac{1}{4}$,  we have $|E(P)| \geq 9$. Since $G[P]$ cannot be cubic ($G$ is connected and $|V| > 6$), a subgraph with $|P|=6$ and $|E(P)| \geq 9$ does not exist.
    \item Case $|P| = 5$: Since $u(P)= \frac{|E(P)|}{5^2} \geq \frac{1}{4}$,  we have $|E(P)| \geq 7$. Since $G$ contains no $K_4$, the only possibility for this is the graph displayed as Case 1 in Figure~\ref{lem:UtilDiamCubique}. Since $G$ is also diamond-free, such a  subgraph does not exist.
    \item Case $|P| = 4$: Since $u(P)= \frac{|E(P)|}{4^2} \geq \frac{1}{4}$,  we have $|E(P)| \geq 4$.    Since $G$ does not contain a $C_4$ the only possibility for $G[P]$ is the subgraph described in Figure \ref{t2}.
    \item Case $|P| = 3$: Since $u(P) = \frac{|E(P)|}{3^2} \geq \frac{1}{4}$,  we have $|E(P)| \geq 3$ and thus $P$ is a triangle where all the vertices are of type 2 and $u(P) = \frac{1}{3}$.
    \item Case $|P| = 2$: Since $u(P) = \frac{|E(P)|}{2^2}  \geq  \frac{1}{4}$, we have $|E(P)| \geq 1$ and thus $S$ is an edge   between two  type 2 vertices or between two vertices of different types and $u(P) = \frac{1}{4}$.
\end{itemize}
\end{proof}

\begin{remark}
The case-analysis in the proof of \autoref{casescubic} also shows that for any subset $P\subseteq V$ of the vertices of the graph $\sigma(I)$, if $v$ is of type 2 then $u_S(v) \leq \frac{1}{3}$, otherwise $u_S(v) \leq \frac{1}{4}$.
\end{remark}

 With these observations about the construction of $\sigma(I)$, we are able to prove our NP-completeness result.

\begin{theorem}\label{cubichardness}
\DGP~is NP-complete on cubic graphs.
\end{theorem}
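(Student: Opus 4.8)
The plan is to prove NP-completeness of \DGP{} on cubic graphs via the reduction $\sigma$ from \textsc{RX3C} given in Definition~\ref{construction}. Membership in NP is immediate: a partition $\mathcal{P}$ is a polynomial-size certificate and $d(\mathcal{P})$ is computable in polynomial time. For hardness, let $I=(X,C)$ be an \textsc{RX3C} instance with $|X|=3q$ and $|C|=m$; since each element of $X$ lies in exactly $3$ sets, counting incidences gives $3m = 3\cdot 3q$, so $m = 3q = |X|$. The graph $G=\sigma(I)$ then has $|X|$ type-1 vertices and $3|X|$ type-2 vertices, i.e.\ $n = 4|X|$ vertices, and it is cubic with no $K_4$ component (as noted in the proof of \autoref{casescubic}, $G$ is even diamond-free). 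I will show that $I$ is a yes-instance of \textsc{RX3C} if and only if $G$ admits a partition $\mathcal{P}$ with $d(\mathcal{P}) \geq r$ for a suitable threshold $r$, and the whole construction is clearly polynomial.

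The key is to pin down the exact optimal density using the utility machinery. By the remark following \autoref{casescubic}, every type-1 vertex has utility at most $\tfrac14$ and every type-2 vertex has utility at most $\tfrac13$ in any partition; summing over all vertices gives $d(\mathcal{P}) = \sum_{v\in V} u_{\mathcal{P}}(v) \leq \tfrac14\cdot|X| + \tfrac13\cdot 3|X| = \tfrac{5}{4}|X|$. I will set $r = \tfrac{5}{4}|X|$ and argue both directions. \textbf{(Completeness.)} Given an exact cover $C'\subseteq C$ with $|C'| = q = \tfrac13|X|$, form $\mathcal{P}$ by taking, for each $\{x,y,z\}\in C'$, the ``gadget'' part of Figure~\ref{t2} on the three type-2 vertices $v_{xyz}^x, v_{xyz}^y, v_{xyz}^z$ together with one of the black vertices it is matched to — but since a single Figure~\ref{t2} gadget only absorbs one black vertex, I instead use: for each set $\{x,y,z\}\in C'$ the triangle $\{v_{xyz}^x, v_{xyz}^y, v_{xyz}^z\}$ of type-2 vertices (utility $\tfrac13$ each), and for each of the remaining $3m - 3q$ type-2 vertices, which come from sets not in $C'$, pair it by an edge with the unique black vertex $v_x$ it is adjacent to; because $C'$ is an \emph{exact} cover, each black vertex $v_x$ is adjacent to exactly one type-2 vertex from a non-$C'$ set (its other two set-memberships being covered), so these edges form a perfect matching between the $|X|$ black vertices and $|X|$ of the leftover type-2 vertices, and the remaining $2|X| - |X| = |X|$ leftover type-2 vertices get matched among themselves using the triangle edges inside the non-selected set-gadgets (each non-selected gadget of $C\setminus C'$ contributes one triangle edge among two of its vertices after one vertex is taken by a black vertex). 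A careful count then shows every type-2 vertex achieves utility $\tfrac13$ or $\tfrac14$ and every black vertex utility $\tfrac14$, yielding exactly $d(\mathcal{P}) = \tfrac{5}{4}|X| = r$. \textbf{(Soundness.)} Conversely, if $d(\mathcal{P}) = \tfrac{5}{4}|X|$, equality in the bound forces \emph{every} type-2 vertex to have utility exactly $\tfrac13$ and every type-1 vertex utility exactly $\tfrac14$; by \autoref{casescubic}, a type-2 vertex has utility $\tfrac13$ only if its part is a triangle of three type-2 vertices, namely $\{v_{xyz}^x, v_{xyz}^y, v_{xyz}^z\}$ for some set $\{x,y,z\}\in C$. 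The set $C'$ of those sets whose type-2 triangle appears in $\mathcal{P}$ covers each element $x$ at least once (the vertex $v_x$ must reach utility $\tfrac14$, which by \autoref{casescubic} means it lies in an edge or a Figure~\ref{t2}-gadget, and combined with the fact that all its type-2 neighbours lie in triangles one deduces a covering set exists) and, by counting vertices — $3|C'|$ type-2 vertices used in triangles, and the constraint that the triangles are vertex-disjoint — at most $q$ of them, hence exactly $q$, giving an exact cover.

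The main obstacle I anticipate is the \textbf{bookkeeping in the completeness direction}: I must exhibit a concrete partition of \emph{all} $4|X|$ vertices, not just the ``good'' parts, while keeping the total density at the maximum $\tfrac54|X|$. This requires checking that the leftover vertices (black vertices and type-2 vertices from non-selected sets) can be covered by edges and small gadgets so that no vertex drops below utility $\tfrac14$ — i.e.\ that the ``waste'' is zero. The exact-cover property is exactly what makes this possible: each black vertex has its three type-2 neighbours split as ``one in the selected triangle's set'' — no wait, rather each element is in exactly three sets, exactly one of which is in $C'$, so $v_x$ is adjacent to exactly one type-2 vertex whose set is in $C'$ (that vertex sits in a triangle, unavailable) and exactly two type-2 vertices from non-selected sets (available to be matched with $v_x$ by an edge). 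So $v_x$ gets an edge-partner of utility $\tfrac14$, and the remaining type-2 vertices of non-selected sets — two per non-selected set minus the ones grabbed by black vertices — are matched using the third triangle-edge inside each non-selected gadget. I will lay out this counting cleanly as the technical core. For soundness, the secondary subtlety is arguing from ``every black vertex has utility $\tfrac14$'' plus ``all type-2 vertices sit in type-2 triangles'' that $v_x$'s part must still be consistent — here \autoref{casescubic} leaves only the edge option (two vertices of different types) since the Figure~\ref{t2}-gadget needs type-2 vertices that are already committed to triangles, giving a clean contradiction-free deduction that the selected triangles' sets form an exact cover.
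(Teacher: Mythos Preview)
Your threshold $r=\tfrac{5}{4}|X|$ is wrong, and this breaks both directions of the reduction. The upper bound $\sum_v u_{\mathcal{P}}(v)\le \tfrac14|X|+\tfrac13\cdot 3|X|=\tfrac54|X|$ is valid but never tight on $\sigma(I)$: for a type-2 vertex to have utility $\tfrac13$ it must sit in a full type-2 triangle (\autoref{casescubic}), so if \emph{every} type-2 vertex reached utility $\tfrac13$ then every type-1 vertex would have all three of its (type-2) neighbours locked in triangles and hence utility $0$, not $\tfrac14$. Thus no partition attains $\tfrac54|X|$, your soundness argument is vacuous, and completeness fails outright. Indeed your own completeness construction does not reach $\tfrac54|X|$: with $q$ selected triangles, $|X|$ black/white edges, and the remaining $|X|$ type-2 vertices matched in pairs, the density is $\tfrac{|X|}{3}+\tfrac{|X|}{2}+\tfrac{|X|}{4}=\tfrac{13}{12}|X|$, not $\tfrac54|X|$. (There is also a feasibility issue: each non-selected gadget would lose on average $\tfrac32$ of its three type-2 vertices to black vertices, so you cannot in general leave exactly two per gadget to be matched by an edge.)

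The paper's reduction uses the correct, achievable threshold $r=\tfrac{7}{6}|X|$ and the \emph{opposite} assignment in completeness: the \emph{selected} sets $\{x,y,z\}\in C'$ are broken into three black/white edges $\{v_x,v_{xyz}^x\},\{v_y,v_{xyz}^y\},\{v_z,v_{xyz}^z\}$ (this is where exactness of the cover guarantees a well-defined partition of the black vertices), while the \emph{non-selected} sets stay as type-2 triangles. This gives $|X|$ edges and $\tfrac{2|X|}{3}$ triangles, density $\tfrac{|X|}{2}+\tfrac{2|X|}{3}=\tfrac{7|X|}{6}$. For soundness at $\tfrac{7|X|}{6}$, one shows by counting that the number of triangles in $\mathcal P$ must be exactly $\tfrac{2|X|}{3}$ (too few or too many forces a density drop via \autoref{casescubic}), and then that the remaining $2|X|$ vertices must all have utility $\tfrac14$, which forces each black vertex into an edge with a type-2 vertex; the $\tfrac{|X|}{3}$ non-triangle gadgets then yield the exact cover. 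Your plan can be salvaged by adopting this threshold and swapping which side of $C'$ becomes triangles.
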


\begin{proof}
Let $I=(X,C)$ be an instance of RX3C.
We claim that $I=(X,C)$ is a yes-instance of RX3C if and only if $I'=(G,d)$ with $G=\sigma(I)$ and $d=\frac{7|X|}{6}$ is a yes-instance of \DGP. 

Let $C' \subseteq C$ be an exact cover for $X$ of size $\frac{|X|}{3}$. Consider the following partition $\mathcal{P}$ with $\frac{5|X|}{3}$ parts: for any $c \in C'$, $c=\{x,y,z\}$, we define three parts of size 2, $\{v_x, v_{xyz}^x\}$, $\{v_y, v_{xyz}^y\}$, $\{v_z, v_{xyz}^z\}$ and for any $c \notin C'$, $c=\{x,y,z\}$, we define the following part of size 3,  $\{v_{xyz}^x, v_{xyz}^y, v_{xyz}^z\}$. Since $C'$ is an exact cover, $\mathcal{P}$ is a partition for $G$ and its density is $\frac{3}{2}\cdot\frac{|X|}{3} + \frac{2}{3}|X| = \frac{7}{6}|X|$.

Let $\mathcal{P'}$ be a partition of $G$ of density $d(\mathcal{P'})=\frac{7}{6}|X|$. Firstly, we show  that $\mathcal{P'}$ has necessarily the following shape: $\frac{2|X|}{3}$ parts of size 3 containing only vertices of type 2 forming a triangle in $G$ and $|X|$ parts of size 2 containing one vertex of type 1 and one of type 2 adjacent in $G$ (see Figures \ref{t1} and \ref{t2}). From Remark~\ref{remarkconnexe}, we can assume that all parts induce connected subgraphs.  

We  first show that  $d(\mathcal{P'})=\frac{7|X|}{6}$ implies that there are at least $\frac{2|X|}{3}$ parts in $\mathcal{P'}$ corresponding to triangles in $G$.  
Assume by contradiction that $\mathcal{P'}$ has $\frac{2|X|}{3} - \ell $ triangles, with $\ell > 0$. Since $G$ has $4|X|$ vertices, there are $2|X|+ 3 \ell$ vertices that do not belong to a part in $\mathcal{P'}$ that corresponds to a triangle in $G$.  By Lemma \ref{lem:utiliteG} the utility of these last vertices  is smaller than or equal to $\frac{1}{4}$. Then the density of $\mathcal{P'}$ is 
\[
    d(\mathcal{P}') \leq   \frac{2|X|}{3} - \ell + (2|X|+ 3 \ell) \cdot\frac{1}{4}  = \frac{7|X|}{6} - \frac{\ell}{4}  < \frac{7|X|}{6}
\]
This contradicts the choice of  $\mathcal{P'}$ such that   $d(\mathcal{P'})=\frac{7|X|}{6}$, hence there are at least $\frac{2|X|}{3}$ triangles in $\mathcal{P'}$.

Now, we will prove that there are at most $\frac{2|X|}{3}$ parts in $\mathcal{P'}$ corresponding to triangles in $G$. Assume by contradiction that $\mathcal{P'}$ has $\frac{2|X|}{3} + \ell$ triangles, with $\ell > 0$. Since there are $3|X|$ vertices of type 2 and among these vertices $3\cdot(\frac{2|X|}{3} + \ell)$   belong to a triangle then $|X|-3 \ell$ vertices of type 2 do not belong to a triangle. Each neighbor of a vertex $v_x$ of type 1 is of type 2, so if  the utility of $v_x$ is positive, then there exists a vertex of type 2, $v^x_{xyz}$, neighbor of $v_x$,  that is in the same part as $v_x$ and $v^x_{xyz}$  does not belong to a triangle. Moreover, as all type 1 vertices have no common neighbors, for each type 1 vertex with positive utility, there is a type 2 vertex that is not in a triangle. Since there are at most $|X|-3 \ell$ type 2 vertices that do not belong to a triangle, there are at most $|X|-3 \ell$ type 1 vertices with positive utility.  Then the density of $\mathcal{P'}$ is at most

\[
    d(\mathcal{P'}) \leq \frac{2|X|}{3}+ \ell + \frac{|X| - 3 \ell}{4} + \frac{|X| - 3 \ell}{4} \leq \frac{7|X|}{6} - \frac{\ell}{2} < \frac{7|X|}{6}
\]
This contradicts   the choice of  $\mathcal{P'}$ such that  $d(\mathcal{P'})=\frac{7|X|}{6}$,  and then there are exactly $\frac{2|X|}{3}$ triangles in $\mathcal{P'}$.

We will show now that  $d(\mathcal{P'}) = \frac{7|X|}{6}$ implies that all type 1 vertices are in a part that is a matching with a type 2 vertex. There are $|X|$ type 1 vertices and $|X|$ type 2 vertices that are not in some triangle in $\mathcal{P'}$.  Since there are exactly  $\frac{2|X|}{3}$ parts in $\mathcal{P'}$ forming a triangle and the utility of each other vertex is smaller than or equal to $\frac{1}{4}$, to reach a density of $\frac{7|X|}{6}$ it is necessary that each of the $2|X|$ vertices outside the parts that are triangles  has a utility of exactly $\frac{1}{4}$.  To reach this utility, by Lemma \ref{lem:utiliteG} there are two possibilities, the graph described in Figure \ref{t2} and an edge. Since there are exactly $|X|$ vertices of type 1  and $|X|$  vertices of type 2  outside the triangles in $\mathcal{P'}$, and vertices of type 1 only have neighbors of type 2, the only possibility for all these vertices to have utility $\frac{1}{4}$ is if each type 1 vertex is matched with one type 2 vertex.

Consider now the following subcollection $C''\subseteq C$: for each triple $v^x_{xyz}$,$v^y_{xyz}$,$v^z_{xyz}$ that does not belong to a triangle, we add the set $\{x,y,z\}$ to $C''$. The subcollection $C''$ is a cover since each type 1 vertex is a neighbor of one of these vertices and it is an exact cover  since there are exactly  $\frac{|X|}{3}$ 3-element subsets that do not belong to a triangle.
\end{proof}

Our observations about the maximum utility of certain vertices can also be used to show the following positive result.

\begin{theorem}
\MDGP{} is polynomial-time $\frac{4}{3}$-approximable  on cubic graphs.
\end{theorem}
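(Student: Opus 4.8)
The plan is to give a simple local‑search style / greedy algorithm that produces a partition whose density is at least $\frac34$ of the upper bound from \autoref{lem:ApproxBorneCubique}. First I would handle the trivial reductions: if a connected component is a $K_4$, we place it as its own part (density $\frac32$, which is optimal for those four vertices), so by Remark~\ref{remarkconnexe} we may assume $G$ is cubic with no $K_4$‑component, and we may treat each connected component separately. Then I would compute, in polynomial time, the set $D$ of (vertex‑disjoint) diamonds of $G$ and the set $T$ of triangles not contained in a diamond; since $G$ is cubic these substructures are essentially non‑overlapping (a vertex of degree $3$ can lie in at most one diamond, and triangle/diamond occurrences are easy to enumerate), so this bookkeeping is routine.

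The core of the algorithm is to build a partition $\mathcal{P}$ as follows: make each diamond in $D$ its own part (contributing density $\frac54$ each, i.e. $\frac54|D|$), make each triangle in $T$ its own part (contributing $1$ each, i.e. $|T|$), and then partition the remaining $n-3|T|-4|D|$ vertices into parts of size exactly $2$ — that is, find a perfect (or near‑perfect) matching on the induced subgraph on the leftover vertices, putting each matched edge in a part by itself (density $\frac12$ each) and any unmatched vertex in a singleton part (density $0$). Each part of size $2$ that is an edge gives utility $\frac14$ per vertex, so this block contributes roughly $\frac14(n-3|T|-4|D|)$, matching the three leading terms of the bound in \autoref{lem:ApproxBorneCubique} term‑by‑term except that the matching may miss a few leftover vertices. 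Comparing this $d(\mathcal{P})$ against the upper bound $\mathrm{opt}(G)\le \frac54|D|+|T|+\frac14(n-3|T|-4|D|)$ of \autoref{lem:ApproxBorneCubique} gives directly that $d(\mathcal{P})$ is within the claimed factor; since the dominant term on a cubic graph with no $K_4$ is $\frac14 n$ and we lose at most a bounded amount per leftover/unmatched vertex, the ratio $\mathrm{opt}(G)/d(\mathcal{P})$ is at most $\frac43$. One must also double‑check the tiny cases ($n\le$ some constant, or components that are small cubic graphs such as $K_{3,3}$ or the prism) by brute force so the asymptotic argument is valid.

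The main obstacle I expect is controlling the "leftover" vertices after removing the diamonds and triangles: the induced subgraph on them need not have a perfect matching, and more importantly its structure (vertices of low degree in the leftover graph) could in principle force many singleton parts, eroding the $\frac14 n$ term. To handle this I would argue that the leftover graph, being a subgraph of a cubic graph with all its "dense spots" (triangles/diamonds) removed, has a matching covering all but a constant fraction‑free (ideally $O(1)$ per component, or provably all but at most one vertex per component) of its vertices — one can, for instance, reroute a near‑perfect matching of $G$ itself or invoke a Tutte–Berge type count, since in a cubic graph the number of odd components one can isolate is limited. Alternatively, I would strengthen the algorithm to greedily form small parts (edges, and occasionally triangles that arise among leftovers) so that the per‑vertex utility of the leftover block is always at least $\frac14-o(1)$, and then verify that the additive slack never exceeds what the $\frac43$ factor allows. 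This matching/covering argument, together with the finite case check, is where the real work lies; everything else is a direct comparison with \autoref{lem:ApproxBorneCubique}.
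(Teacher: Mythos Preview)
Your overall architecture matches the paper's exactly: peel off $K_4$ components, take every diamond and every remaining triangle as a part, find a maximum matching on the leftover graph $G'$, and compare against \autoref{lem:ApproxBorneCubique}. The gap is precisely the one you flag as ``the main obstacle'': bounding the number of unmatched leftover vertices. Your two suggested fixes do not work as stated.

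First, ``at most one unmatched vertex per component of $G'$'' is false and would not suffice even if it were true. Consider the graph $\sigma(I)$ from the NP-hardness reduction: each type~1 vertex has all three neighbours in distinct triangles, so after removing the triangles every type~1 vertex is isolated in $G'$. You get $|X|=n/4$ singleton components, each contributing an unmatched vertex. A Tutte--Berge count gives you nothing here, since the deficiency really is $n/4$. Second, ``utility $\frac14 - o(1)$ on the leftover block'' is too weak: the leftover block has $n-3|T|-4|D|$ vertices, and losing even a $\Theta(1)$ fraction of its $\frac14$-utility destroys the $\frac43$ ratio.

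What the paper does is prove the exact bound $|V(G'')|\le n/4$ on the unmatched vertices, and this is tight (the $\sigma(I)$ example above). The argument has two ingredients you are missing: (i) a token-passing bookkeeping showing that a vertex of degree $3-i$ in $G'$ is ``charged'' by at least $i+1$ vertices of $G$, so $\sum_{v\in V(G')} t(v)\ge 4n'_0+3n'_1+2n'_2+n'_3$; and (ii) a non-trivial matching lower bound for subcubic triangle-free graphs due to Munaro, namely $|V(M)|\ge \frac{9}{10}n'_3+\frac{3}{5}n'_2+\frac{3}{10}n'_1$. Combining these gives $4|V(G'')|\le n$, after which the ratio computation is exactly the one you wrote down. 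Without (ii) or an equivalent, the plan does not close.
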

\begin{proof}
	Let $I=G$ be a cubic graph, instance of \MDGP{}. If $G$ contains connected components isomorphic to $K_4$, create a part for each such component, as this is the optimum way to partition these sets. So assume that $G$ contains no connected component isomorphic to $K_4$, and let~$D$ be the set of all diamonds in $G$, and $T$ the set of all triangles  that do not belong to a diamond.
	Diamonds (resp. triangles) can be found in polynomial time simply by enumerating all 4-tuples (resp. 3-tuples) of vertices and checking if they induce a diamond (resp. triangle) as subgraph. 
	Let $G'$   be the graph obtained from~$G$ after removing the vertices of $D$ and $T$. Let $M$ be the set of edges  that constitute a maximum matching of $G'$. Let $G''$ be the graph obtained from $G'$ after removing the vertices of $M$. Since $M$ is a maximal matching, the vertices in $G''$ form an  independent set. 
	\medskip
	
	We show in the following that  $|V(G'')| \leq \frac{|V(G)|}{4}$.
	
	For each $v\in V$ we associate a function $t(v)$ and initialize it with $t(v) = 1$. When removing the diamonds and triangles from $G$ in order to get $G'$ we update the function $t$ as follows:
	
	\begin{itemize}
	 \item For every diamond $\{u_1,u_2,u_3,u_4\} \subseteq V$ that is deleted from $V$, let $u_1$ and $u_3$ be the vertices with neighbors outside of the diamond  (if these vertices still exist) and let $v_1$ and $v_3$ be these neighbors (with the possibility that $v_1=v_3$).  We update the function $t$ : $t(v_1) := t(v_1) + t(u_1) + t(u_2)$ and $t(v_3) := t(v_3) + t(u_3) + t(u_4)$ (thus $t(v_1) := t(v_1) + t(u_1) + t(u_2)+ t(u_3) + t(u_4)$ if $v_1=v_3$). If $v_1$ or $v_3$ were already deleted, we delete their associated $t$ function.
	 \item For every triangle $\{u_1,u_2,u_3\} \subseteq V$ that is deleted from $V$, let $v_1$ (resp. $v_2$ and $v_3$) be the neighbor of $u_1$ (resp. $u_2$ and $u_3$) outside of the triangle (if these vertices exist). We update the function $t$ : $t(v_1) := t(v_1) + t(u_1)$, $t(v_2) := t(v_2) + t(u_2)$ and $t(v_3) = t(v_3) + t(u_3)$. If $v_1$, $v_2$ or $v_3$ do not exist, we delete their associated $t$ function.
      \end{itemize}

Observe that after  updating  $t$ for any $v \in V(G')$, if $v \in D_{G'}(3)$ then $t(v) \geq 1$, if $v \in D_{G'}(2)$ then $t(v) \geq 2$, if $v \in D_{G'}(1)$ then $t(v) \geq 3$ and if $v \in D_{G'}(0)$ then $t(v) \geq 4$.
 In order to justify this, observe that the $t$ function associated to vertices in $V(G')$ cannot decrease. If a vertex $v$ is of degree $3-i$  in $G'$, $1\leq i\leq 3$, then there are at least $i$ adjacent edges to distinct vertices in triangles or diamonds that were removed from $G$ and increase $t(v)$. Each time when a neighbor of  a vertex $v$  from a diamond or a triangle is removed then $t(v)$ increases by at least one. Then, in $G'$, each vertex $v$ of degree $3-i$ has $t(v)\geq i+1$.

Let $n'_i$ be the number of vertices of degree $i$ in $G'$. By the previous remark, we have 
\begin{equation}
\sum\limits_{v \in V(G')} t(v_i) \geq 4n'_0+ 3n'_1 + 2n'_2 + n'_3
\label{lemmedegree}
\end{equation}

 Since $G'$ is a subcubic triangle-free graph and  $M$  a maximum matching in $G'$, using a result of Munaro \cite{munaro2017line}, we get  
 \begin{equation}
 |V(M)| \geq \frac{9}{10}n'_3 + \frac{3}{5}n'_2 + \frac{3}{10}n'_1
 \label{lemmemunaro}
\end{equation}
 
We show now that  $4|V(G'')| \leq \sum\limits_{v \in V(G')} t(v_i)$. In fact, combining $|V(G')|=  n'_0+ n'_1 + n'_2 + n'_3$  with inequality~(\ref{lemmemunaro}) gives $|V(G'')| \leq n'_0+ \frac{7}{10}n'_1 + \frac{2}{5}n'_2 + \frac{1}{10}n'_3$.  
Thus, $4|V(G'')| \leq 4n'_0 + \frac{28}{10}n'_1 + \frac{8}{5}n'_2 + \frac{4}{10}n'_3 \leq 4n'_0 + 3n'_1 + 2n'_2 + n'_3\leq \sum\limits_{v \in V(G')} t(v_i)$ using  inequality~(\ref{lemmedegree}).  Then 
 $4|V(G'')| \leq \sum\limits_{v \in V(G')} t(v_i)$ and
  since $|V(G)| \geq \sum\limits_{v \in V(G')} t(v_i)$ we get $|V(G'')| \leq \frac{1}{4} V(G)$.

		\medskip
	 Consider the partition $\mathcal{P} = D \cup T \cup M \cup V(G'')$ in the sense that $\mathcal{P}$ contains a set for each diamond in $D$, one set for each triangle in $T$, one set for each edge in the matching $M$ and one set for each vertex in $V(G'')$.
	  Then $d(\mathcal{P}) = \frac{5}{4}|D| + |T| + \frac{1}{2} |M| \geq \frac{5}{4}|D| + |T| + \frac{1}{4}(n - 3|T| - 4|D| - \frac{n}{4})$ since $|V(G'')| \leq \frac{1}{4} V(G)$. By \autoref{lem:ApproxBorneCubique} we know that $opt(I) \leq \frac{5}{4}|D| + |T| + \frac{1}{4}(n - 3|T| - 4|D|)$. Then $\frac{opt(I)}{d(\mathcal{P})} \leq \frac{\frac{5}{4}|D| + |T| + \frac{1}{4}(n - 3|T| - 4|D|)}{\frac{5}{4}|D| + |T| + \frac{1}{4}(n - 3|T| - 4|D| - \frac{n}{4})} = \frac{\frac{1}{4}|D| + \frac{1}{4}|T| + \frac{n}{4}}{\frac{1}{4}|D| + \frac{1}{4}|T| + \frac{3n}{16}} = 1+\frac{{n}}{4|D|+4|T|+{3n}}\leq 1+ \frac{1}{3}$. 
	  Then $\frac{opt(I)}{d(\mathcal{P})} \leq \frac{4}{3}$.
\end{proof}

 \section{Dense Graphs}\label{sec5}
 \label{sec:densGrapheDensite}
 In this section we consider graphs $G=(V,E)$ on $n$ vertices such that $G$ can be viewed as $G=\overline{H}$ where $H$ is a graph of small maximum degree. Note that the edges of $H$ are exactly the \emph{missing edges} of $G$. We first consider graphs $G=(V,E)$ on $n$ vertices such that $\delta(G) \geq n-3$, that is  $G=\overline{H}$ where $H$ has $\Delta(H)=2$  and has $q\leq n$ edges and show that \MDGP{} is solvable in polynomial time on these graphs. 
 
 \begin{lemma}\label{threesets}
For any graph $G$ on $n$ vertices such that $\delta(G) \geq n-3$, its density $d(G)$ is greater than or equal to the density of any partition $\mathcal{P}$ of $G$ into $t\geq 3$ parts.
 \end{lemma}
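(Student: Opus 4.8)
The plan is to exploit that $G=\overline H$ with $H=\overline G$ of maximum degree at most $2$, which forces $G$ to have many edges, and then to pit this against the generic upper bound on the density of any partition given by Lemma~\ref{lem:densMax}.

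First I would pin down the edge count of $G$. Since $\delta(G)\ge n-3$, every vertex of $H=\overline G$ has degree at most $2$, so by the handshake identity $H$ has $q\le n$ edges. Consequently $|E(G)|=\frac{n(n-1)}{2}-q$ and
\[
d(G)=\frac{\frac{n(n-1)}{2}-q}{n}=\frac{n-1}{2}-\frac{q}{n}\ \ge\ \frac{n-1}{2}-1=\frac{n-3}{2}.
\]

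Next, for an arbitrary partition $\mathcal{P}$ of $V$ into $t\ge 3$ parts, Lemma~\ref{lem:densMax} gives $d(\mathcal{P})\le \frac{n}{2}-\frac{t}{2}\le \frac{n-3}{2}$. Chaining the two estimates yields $d(\mathcal{P})\le \frac{n-3}{2}\le d(G)$, which is exactly the claimed inequality, and the proof is complete.

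There is essentially no hard step here: the only thing to verify is the bound $q\le n$, which is immediate from $\Delta(H)\le 2$. I would, however, note explicitly that the chain is tight precisely when $H$ is a disjoint union of cycles spanning all $n$ vertices (then $q=n$ and $d(G)=\frac{n-3}{2}$), in which case a partition into exactly three parts can \emph{tie} the trivial partition $\{V\}$ but never beat it. This is why the statement is phrased with ``$\ge$'' rather than ``$>$'', and it is also what makes the separate, finer analysis of partitions into at most two parts necessary in order to obtain the polynomial-time algorithm.
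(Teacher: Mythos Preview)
Your proof is correct and follows essentially the same route as the paper: compute $d(G)=\frac{n-1}{2}-\frac{q}{n}$ with $q\le n$, bound any $t$-part partition by $\frac{n-t}{2}$, and compare using $t\ge 3$. The only cosmetic difference is that you invoke Lemma~\ref{lem:densMax} directly, whereas the paper reaches the same bound via Lemma~\ref{lemmacomplete}; your added remark on the tight case (when $H$ is a spanning union of cycles) matches the paper's observation following the proof.
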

 \begin{proof} 
 The density of $G$ is given by  $d(G)= \frac{\frac{n(n-1)}{2}-q}{n}=\frac{n-1}{2} - \frac{q}{n}$. From Lemma~\ref{lemmacomplete}, among all partitions of $G$ into $t\geq 3$ parts, those where the parts correspond to complete graphs have the largest density.  The density of such a partition into $t$ parts of size $n_1,\ldots,n_t$ is $\frac{n-t}{2}$.  Thus, the density of $G$ is at least as large as the density of this last partition since $t\geq 3$ and $q\leq n$ (note here that a graph with minimum degree $n-3$ has at most $n$ missing edges).
 \end{proof}

Observe that in the proof of the previous lemma when $q=n$ and $t=3$, the density of a partition in 3 parts corresponding to complete subgraphs and the density of the entire graph are the same.    This previous lemma implies that for any graph $G$  such that $\delta(G) \geq n-3$, there exists a partition into one or two parts of maximum density.

\begin{lemma}\label{odd}
For any graph $G$ on $n$ vertices such that $\delta(G) \geq n-3$, in any partition for $G$ into two parts, the sum of missing edges in the two parts is at least $o$, where $o$ is the number of odd cycles in $\overline{G}$.
\end{lemma}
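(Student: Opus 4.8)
The plan is to read everything through the complement $H:=\overline G$. Since $\delta(G)\ge n-3$, we have $\Delta(H)\le 2$, so $H$ is a vertex-disjoint union of paths and (possibly) cycles; in particular its odd cycles, say $C_1,\dots,C_o$, are pairwise vertex-disjoint, being distinct connected components of $H$. The first thing I would make explicit is the translation of the statement: a missing edge of $G$ lying inside a part $P$ is exactly an edge of $H$ with both endpoints in $P$. So if $\{A,B\}$ is an arbitrary partition of $V$ into two parts, the quantity to be bounded from below is the number of edges of $H$ that are \emph{not} cut by $\{A,B\}$, i.e.\ that lie entirely inside $A$ or entirely inside $B$.

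Next I would show that every odd cycle of $H$ forces at least one such uncut edge. Fix an odd cycle $C_i$ of $H$. If every edge of $C_i$ were cut by $\{A,B\}$, then assigning to each vertex of $C_i$ the colour ``$A$'' or ``$B$'' according to the part containing it would be a proper $2$-colouring of $C_i$, contradicting the fact that an odd cycle is not bipartite. Hence $C_i$ contains at least one edge with both endpoints in $A$, or both endpoints in $B$; this edge is a missing edge of $G$ inside one of the two parts.

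Finally, since $C_1,\dots,C_o$ are pairwise vertex-disjoint, the uncut edges produced in the previous step (one per odd cycle) are pairwise distinct, so together they give at least $o$ missing edges inside the two parts, which is exactly the claim. I do not expect a genuine obstacle here: everything rests on the elementary facts that $\Delta(\overline G)\le 2$ makes $\overline G$ a disjoint union of paths and cycles and that an odd cycle is not bipartite. The only points needing care are the reformulation ``missing edge inside a part $=$ edge of $\overline G$ inside that part'' and the use of vertex-disjointness of the odd cycles to keep the counted edges distinct.
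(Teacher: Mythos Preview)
Your argument is correct and follows the same approach as the paper: both use that an odd cycle in $\overline G$ is not $2$-colourable, so at least one of its edges must lie inside one of the two parts. Your version is slightly more thorough in that you make explicit why $\Delta(\overline G)\le 2$ forces the odd cycles to be pairwise vertex-disjoint, and hence why the uncut edges you find are distinct; the paper leaves this implicit.
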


\begin{proof}
Let $C$ be an odd cycle in $\overline{G}$ (the graph of missing edges in $G$). Since $C$ is not bipartite, there is no partition $\{V_1,V_2\}$ of $V$ such that all the edges of $C$ have one endpoint in $V_1$ and one endpoint in $V_2$. Hence, for any partition $\{V_1,V_2\}$ at least one of the missing edges from $C$ is  inside $G[V_1]\cup G[V_2]$. 
\end{proof} 
 
 \begin{lemma}\label{part1}
 Among all partitions into 2 parts of fixed size containing $x$ missing edges, the one containing all missing edges in the largest part has the best density. 
 \end{lemma}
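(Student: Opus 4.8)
The plan is a one-identity computation followed by a monotonicity observation. Fix the multiset of part sizes, say $n_1 \geq n_2$ with $n_1 + n_2 = n$, and let $\{V_1,V_2\}$ be any partition realising them, with $V_1$ the largest part. For $i \in \{1,2\}$ let $x_i$ be the number of missing edges of $G$ lying inside $V_i$, so $G[V_i]$ has $\frac{n_i(n_i-1)}{2} - x_i$ edges and hence
\[
d(V_i) = \frac{\frac{n_i(n_i-1)}{2} - x_i}{n_i} = \frac{n_i - 1}{2} - \frac{x_i}{n_i}.
\]
Summing the two terms, I would record the key identity
\[
d(\{V_1,V_2\}) = \frac{n-2}{2} - \left(\frac{x_1}{n_1} + \frac{x_2}{n_2}\right).
\]

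Next I would fix the total number of missing edges $x = x_1 + x_2$ and compare partitions with this fixed $x$ (and the fixed sizes). Substituting $x_2 = x - x_1$ gives
\[
\frac{x_1}{n_1} + \frac{x_2}{n_2} = \frac{x}{n_2} - x_1\left(\frac{1}{n_2} - \frac{1}{n_1}\right),
\]
which, since $n_1 \geq n_2$ forces $\frac{1}{n_2} - \frac{1}{n_1} \geq 0$, is a non-increasing function of $x_1$. Therefore it is minimised by taking $x_1$ as large as possible, namely $x_1 = x$ and $x_2 = 0$; by the identity above this is exactly the partition maximising $d(\{V_1,V_2\})$. This proves the lemma, with the understanding that equality among all distributions occurs precisely when $n_1 = n_2$.

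There is essentially no obstacle here beyond this bookkeeping: the whole content is the displayed identity together with the fact that $1/n_i$ is smaller for larger parts. The only thing worth a remark is feasibility — the comparison above is between partitions that already have $x$ missing edges in total and the stated sizes, so we do not need to argue that the extremal configuration $x_1 = x$, $x_2 = 0$ is realisable in $G$; that will be handled where the lemma is applied (using that $\overline{G}$ has maximum degree $2$, so its components are paths and cycles that can be placed inside one large part).
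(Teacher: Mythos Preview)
Your proof is correct and essentially identical to the paper's: both write $d(\{V_1,V_2\}) = \frac{n-2}{2} - \frac{x_1}{n_1} - \frac{x_2}{n_2}$ and then observe that, with $x_1+x_2$ fixed, concentrating all missing edges in the larger part minimises the penalty term since $1/n_i$ is smaller there. The only cosmetic difference is that you label the larger part $V_1$ whereas the paper labels it $V_2$, and you spell out the monotonicity in $x_1$ explicitly rather than just asserting the inequality.
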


 \begin{proof}
 Consider two partitions $\{V_1,V_2\}$ and $\{V_1',V_2'\}$ such that $|V_1|=|V_1'|=n_1$ and $|V_2|=|V_2'|=n_2$ with $n_1\leq n_2$ and $G[V_1]$ (resp. $G[V_2]$) containing $x_1$ (resp. $x_2$) missing edges and $G[V_1']$ (resp. $G[V_2']$) containing $0$ (resp. $x=x_1+x_2$) missing edges. The densities for these partitions are:
 \begin{itemize}
 \item[] $d(\{V_1,V_2\})= \frac{n-2}{2}- \frac{x_1}{n_1} - \frac{x_2}{n_2}$, and
 \item[]  $d(\{V_1',V_2'\})= \frac{n-2}{2}- \frac{x}{n_2}$.
  \end{itemize}
  Since $x=x_1+x_2$ and $n_1\leq n_2$, it follows that $d(\{V_1,V_2\}) \leq d(\{V_1',V_2'\})$. 
 \end{proof}

 \begin{lemma}\label{part2}
 Among all partitions into 2 parts  containing 0 (resp. $x$) missing edges in the smaller (resp. larger) part, the one with a maximum number of vertices in the largest part has the best density. \
 \end{lemma}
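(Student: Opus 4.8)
The plan is to write the density of such a partition explicitly as a function of the size of its larger part and observe that this function is monotone in that size. Fix any partition $\{V_1,V_2\}$ of the type considered, with $|V_1|=n_1\le n_2=|V_2|$, where $G[V_1]$ contains no missing edge and $G[V_2]$ contains exactly $x$ missing edges. Since $G[V_1]$ is then a clique $K_{n_1}$, we have $d(G[V_1])=\frac{n_1-1}{2}$; since $G[V_2]$ has $\frac{n_2(n_2-1)}{2}-x$ edges, we have $d(G[V_2])=\frac{n_2-1}{2}-\frac{x}{n_2}$. Adding these and using $n_1+n_2=n$ gives
\[
 d(\{V_1,V_2\}) \;=\; \frac{n_1+n_2-2}{2}-\frac{x}{n_2} \;=\; \frac{n-2}{2}-\frac{x}{n_2}.
\]

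First I would observe that in this expression both $n$ and $x$ are the same for every partition under comparison: $n$ is fixed, and $x$ (the number of missing edges lying inside the larger part) is prescribed by the statement. Hence maximizing $d(\{V_1,V_2\})$ is equivalent to minimizing $\frac{x}{n_2}$, i.e.\ to maximizing $n_2$ (equivalently, to minimizing $|V_1|$). Therefore, among all partitions of this kind, the one with the largest possible number of vertices in the larger part attains the best density, which is precisely the claim. The degenerate case $x=0$ is immediate, since then every such partition has density $\frac{n-2}{2}$.

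There is essentially no real obstacle here beyond bookkeeping; the computation is the same kind already used in Lemma~\ref{threesets} and Lemma~\ref{part1}. The only point that warrants a word of care is to make sure that the quantity held fixed across the comparison is genuinely the number $x$ of missing edges falling \emph{inside} the larger part (and not, say, the total number of missing edges of $G$), so that the term $-x/n_2$ depends on $n_2$ in the stated monotone way.
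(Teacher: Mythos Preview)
Your proof is correct and follows essentially the same approach as the paper: both compute the density of such a partition as $\frac{n-2}{2}-\frac{x}{n_2}$ and observe that, with $n$ and $x$ fixed, this is maximized when $n_2$ is as large as possible.
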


\begin{proof}
 Consider two partitions $\{V_1,V_2\}$ and $\{V_1',V_2'\}$ such that $|V_1|=n_1$,  $|V_2|=n_2$ with $n_1\leq n_2$ and $|V_1'|=n_1'$,  $|V_2'|=n_2'$ with $n_1'\leq n_2'$ and $G[V_1]$ (resp. $G[V_2]$) containing $0$ (resp.~$x$) missing edges and $G[V_1']$ (resp. $G[V_2']$) containing $0$ (resp.~$x$) missing edges. Moreover suppose $n_2\leq n_2'$. The densities for these partitions are:
 \begin{itemize}
 \item[] $d(\{V_1,V_2\})= \frac{n-2}{2}-  \frac{x}{n_2}$, and
 \item[]  $d(\{V_1',V_2'\})= \frac{n-2}{2}- \frac{x}{n_2'}$.
 \end{itemize} 
  Since  $n_2\leq n_2'$, it follows that $d(\{V_1,V_2\}) \leq d(\{V_1',V_2'\})$. 
\end{proof}
 
 \begin{theorem}\label{polydense}
 \MDGP{} is solvable in polynomial time on graphs~$G$ with $n$ vertices with $\delta(G)\geq n-3$.
 \end{theorem}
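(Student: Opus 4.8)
The plan is to combine the structural reductions already proved with a dynamic program over $\overline{G}$. Write $G=\overline{H}$, so $\Delta(H)\le 2$ and $H$ is a disjoint union of paths and cycles (isolated vertices counted as trivial paths), with $q=|E(H)|\le n$ missing edges. By Lemma~\ref{threesets} together with the remark following it, some optimal partition of $G$ has at most two parts, so it suffices to compute the best one-part partition and the best two-part partition and return the better of the two. The one-part value is simply $d(G)=\frac{n-1}{2}-\frac{q}{n}$.

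For two parts, I would first use Lemmas~\ref{part1} and~\ref{part2} to argue that it is enough to search over 2-partitions $\{V_1,V_2\}$ in which the smaller part $V_1$ induces a clique of $G$ — equivalently, $V_1$ is an independent set of $H$ — so that all missing edges lie inside $V_2$. For such a partition with $|V_1|=n_1$ and with $x$ missing edges inside $V_2$, a direct computation gives density exactly $\frac{n-2}{2}-\frac{x}{\,n-n_1\,}$. Hence the best two-part density equals the maximum of $\frac{n-2}{2}-\frac{x}{n-n_1}$ over all pairs $(n_1,x)$ with $1\le n_1\le n-1$ such that $H$ has an independent set $S$ with $|S|=n_1$ and exactly $x$ edges of $H$ lying in $H-S$ (i.e.\ $S$ covers $q-x$ edges of $H$).

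To enumerate the reachable pairs $(n_1,x)$ in polynomial time, I would run a dynamic program over the connected components $H_1,\dots,H_c$ of $H$, each a path or a cycle, $c\le n$. For a single path or cycle $H_i$ a simple dynamic program along $H_i$ computes, in polynomial time, the set of pairs $(a,e)$ such that $H_i$ has an independent set of size $a$ whose deletion leaves exactly $e$ edges of $H_i$; these per-component tables are then convolved, maintaining a global table of reachable $\bigl(\sum a,\sum e\bigr)$ pairs, of which there are $O(n^2)$. After processing the last component, evaluate $\frac{n-2}{2}-\frac{x}{n-n_1}$ on every reachable pair with $1\le n_1\le n-1$, take the maximum, compare it with $d(G)$, and output the corresponding partition. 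Every step runs in time polynomial in $n$, and correctness for partitions into at least three parts is handled by Lemma~\ref{threesets}.

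The delicate step is the reduction in the second paragraph: one must verify that restricting the search to 2-partitions whose smaller part is a clique of $G$ loses nothing, i.e.\ that the improved partitions guaranteed abstractly by Lemmas~\ref{part1} and~\ref{part2} are actually realizable inside $\overline{G}$. This is exactly where the path/cycle structure of $H$ enters: a path or an even cycle admits a proper $2$-colouring and an odd cycle a $2$-colouring with a single monochromatic edge, with controllable colour-class sizes, so the missing edges can genuinely be pushed into the larger part while prescribing its size. If one wishes to bypass this argument entirely, an alternative is to let the dynamic program record, for each of the two parts, the number of missing edges it contains (one extra coordinate, still $O(n^{3})$ states); then the enumeration ranges over all 2-partitions and correctness is immediate, with the remainder of the argument unchanged.
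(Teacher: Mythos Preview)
Your proposal is correct, and it takes a genuinely different route from the paper. The paper does not enumerate: it writes down one explicit $2$-partition (an alternating $2$-colouring of $H=\overline{G}$, putting the extra vertex of each odd path and the single monochromatic edge of each odd cycle into the larger side $V_2$), and then proves directly, via Lemma~\ref{odd} together with a counting argument on $|E(H)|$, that no other $2$-partition can beat it. In particular the paper obtains a closed form for the optimal density, $\frac{n-2}{2}-\frac{o}{n_2}$ with $o$ the number of odd cycles in $H$ and $n_2=\tfrac12(n+d_{n-1}+p_o+o)$.

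Your dynamic-programming approach instead enumerates all attainable density values over $2$-partitions. The fallback version you describe (tracking $(n_1,x_1,x_2)$ with $O(n^3)$ states, convolved over the path/cycle components of $H$) is fully rigorous and immediately correct; it is the cleaner of your two options and the one I would keep. Your primary version (restricting to partitions with one side independent in $H$) is also ultimately correct, but you are right that Lemmas~\ref{part1} and~\ref{part2} by themselves do not justify the restriction: they compare densities of abstract partitions without guaranteeing realizability in $G$. Filling that gap---showing the missing edges can really be pushed to the larger side while controlling its size---is essentially the content of the paper's optimality argument, so your primary route would converge on the paper's proof anyway. What each approach buys: the paper's argument yields an explicit optimal partition and a closed-form optimum depending only on $n$, $o$, $p_o$, $d_{n-1}$; your DP is more mechanical and less informative structurally, but it is shorter to verify and sidesteps the delicate realizability step entirely.
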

 
 \begin{figure}
    \dispFig{
        \centering
         \begin{tikzpicture}[scale = 0.8,thick]
          \usetikzlibrary{patterns}
    		\tikzset{node/.style 2 args={draw, circle,draw=black,scale=0.9, label=#1:{\large #2}}};
    		\node[node={270}{},pattern=north east lines ] (0) at (0,0) {}; 
    		\node[node={270}{},pattern=north east lines ] (1) at (1,0) {}; 
    		\node[node={270}{},pattern = north east lines ] (2) at (2,0) {};
    		\node[node={270}{},pattern=north east lines ] (3) at (3,0) {};
    		\node[node={270}{},pattern=north east lines ] (4) at (4,0) {};
    		\node[node={270}{},pattern=north east lines ] (5) at (5,0) {};
    		\node[node={270}{},pattern=north east lines ] (6) at (6,0) {};
    		\node (a0) at (-1,0) {$V_2$};
    		\node[node={270}{},pattern=dots ] (7) at (1,2) {};
    		\node[node={270}{},pattern=dots ] (8) at (2,2) {};
    		\node[node={270}{},pattern=dots] (9) at (3,2) {};
    		\node[node={270}{} ,pattern=dots] (10) at (4,2) {};
    		\node[node={270}{},pattern=dots] (11) at (5,2) {};
    		\node (a7) at (0,2) {$V_1$};
    		
    		\foreach \j in {0,...,6}{
    			\foreach \i in {0,...,6}{
    				\ifthenelse{\i < \j}{}{
    					\draw (\i) edge[gray!60, bend left = 15] (\j);
    				}
    			}
    		}
    		
    		\foreach \j in {7,...,11}{
    			\foreach \i in {7,...,11}{
    				\ifthenelse{\i < \j}{}{
    					\draw (\i) edge[gray!60, bend right = 15] (\j);
    				}
    			}
    		}
    		\foreach \j in {0,...,6}{
    			\foreach \i in {7,...,11}{
    				\ifthenelse{\i < \j}{}{
    					\draw (\i) edge[gray!60] (\j);
    				}
    			}
    		}
    	
    		\draw (0)--(7);
    		\draw (7)--(1);
    		\draw (1)--(8);
    		\draw (8)--(2);
    		\draw (2) edge[bend left	 = 15] (0);
    		
    		\draw (3)--(9);
    		
    		\draw (4)--(10);
    		\draw (10)--(5);
    		\draw (5)--(11);
    		\draw (11)--(4);
    		
    	    \draw[dashed] \convexpath{a0,6}{0.6 cm};
    	    \draw[dashed] \convexpath{a7,11}{0.6 cm};

     	\end{tikzpicture}
    }
    \caption{Construction of $V_1$ and $V_2$ in Theorem~\ref{polydense}}
    \label{fig:polyDens}
 \end{figure}
 
 \begin{proof} Let $G$ be a graph of minimum degree $n-3$.  We first define a partition $\{V_1,V_2\}$ of the vertices of $G$ by giving vertices color 1 or~2, in the sense that $V_1$ (resp. $V_2$) contains vertices of color 1 (resp. 2). An example is given in Figure \ref{fig:polyDens}.
We assign color 2 to each vertex of degree $n-1$. Since the minimum degree in $G$ is $n-3$, the graph $H$ of missing edges is a collection of paths and cycles. We color the vertices on paths or cycles with an even number of vertices alternating by 1 and 2. For vertices on paths or cycles with an odd number of vertices we also color them alternating by 1 and 2, always starting with color 2. Thus cycles of odd size have two adjacent vertices of color 2. 

Let $o$ be the number of odd cycles in $H$. The partition $\{V_1,V_2\}$ defined by our 2-coloring contains $o$ missing edges in $V_2$ and $|V_2|$ is maximized among all such partitions.  
 Its density is equal to   $\frac{n-2}{2}- \frac{o}{n_2}$,  where $n_2=|V_2|$. Denote by $d_{n-1}$ the number of vertices of~$G$ of degree $n-1$ and by $p_o$ the number of paths with an odd number of vertices (even length) among the missing edges.  The sets $V_1$ and $V_2$ contain the same number of vertices of degree $n-2$ that are extremities of a path with an even number of vertices in $H$.
 The set $V_2$ contains $p_o$ more vertices  of degree $n-2$,  that are extremities of a path with an odd number of vertices,  than $V_1$. The set $V_2$ contains $o$ more vertices  of degree $n-3$  than $V_1$. Thus $n_1=\frac{1}{2}(n-d_{n-1}-p_o-o)$ and $n_2=\frac{1}{2}(n+d_{n-1}+p_o+o)$.
  We claim that there is no partition into two parts that has a higher density. 
 
 By Lemma~\ref{odd}, any partition into two sets contains at least $o$ missing edges inside the two parts. By construction we have maximized the number of vertices in the part with the missing edges among all partitions with the minimum number $o$ of missing edges, i.e., there is no partition into two parts  $\{V'_1,V'_2\}$ with $o$ missing edges all contained in $V'_2$ and $|V'_2|>|V_2|$. Hence, by Lemmas~\ref{part1} and~\ref{part2},  it remains to show that any partition $\{V'_1,V'_2\}$ with $o+x>o$ missing edges for some $x>0$ has a smaller density than  $\{V_1,V_2\}$. 

Let $\{V'_1,V'_2\}$ be a partition with $o+x>o$ missing edges for some $x> 0$   and assume w.l.o.g.~that $|V'_1|\leq |V'_2|$. By definition of the partition $\{V_1,V_2\}$, it follows that $|E(H)|=2n_1-p_o+o$ (note that all non-edges have to either be among the $o$ missing edges in the partition or in the cut between $V_1$ and $V_2$). In the partition  $\{V'_1,V'_2\}$, it follows that $|E(H)|\leq 2|V'_1|-r_1+(o+x)$, where $r_1$ is the number of vertices in $V_1'$ adjacent to only one edge in $H$. In the cut between $V'_1$ and $V'_2$, each vertex in $V'_1$ is adjacent to at most two such edges. Combining these two bounds on $|E(H)|$ yields 
  \begin{equation}\label{dese_eq1}
 2n_1-p_o\leq 2|V'_1|-r_1+x\,.
\end{equation} 
  We claim that $r_1\geq p_o-x$. To see this, observe that every path of odd length either results in a vertex in $V'_1$ adjacent to only one edge  in $E(H)$ ($r_1$) or in a missing edge. Also, every cycle of odd length creates at least one missing edge. Thus the number of missing edges $o+x$ for  $\{V_1',V_2'\}$   is at least $p_o-r_1+o$. Reordering this yields the claimed 
  \begin{equation}\label{dese_eq2}
 r_1\geq p_o-x\,.
\end{equation} 
 
 Inequalities~(\ref{dese_eq1}) and  (\ref{dese_eq2}) yield $2n_1-p_o\leq 2|V'_1|-p_o+2x$ and thus $n_1-|V'_1|\leq x$. Since $\{V'_1,V'_2\}$ is a partition it follows that $|V'_2|=n-|V'_1|\leq n-n_1+x=n_2+x$.

By Lemmas~\ref{part1} and~\ref{part2}, the best case of missing edges for $\{V'_1,V'_2\}$ is that they all are in the larger part $V'_2$, hence the density of $\{V'_1,V'_2\}$ is at most $\frac{n-2}{2}- \frac{o+x}{|V'_2|}$. With $|V'_2|\leq n_2+x$, we can bound  $d(V'_1,V'_2)\leq \frac{n-2}{2}- \frac{o+x}{n_2+x}$. Since $H$ is of degree at most 2, we know that there cannot be more missing edges than vertices in a part, thus in particular $o\leq {n_2}$. This last observation allows to bound $d(V'_1,V'_2)\leq \frac{n-2}{2}- \frac{o+x}{n_2+x}\leq \frac{n-2}{2}- \frac{o}{n_2}=d(V_1,V_2)$, thus the density of $\{V'_1,V'_2\}$ is not larger than the density of $\{V_1,V_2\}$. 
 \end{proof}
 
In the rest of the section  we consider graphs $G=(V,E)$ on $n$ vertices, $(n-4)$-regular, that is  $G=\overline{H}$ where $H$ is a cubic graph. 
We show that \DGP~is NP-hard on $(n-4)$-regular graphs, by showing  a reduction from \UC{} on cubic graphs,  that is the complement of \textsc{Max Cut}.  This last problem on cubic graphs was proved  NP-hard and even not polynomial-time  1.003-approximable, unless P=NP \cite{BK99}.

\defprob{Min UnCut}{A graph $G=(V,E)$, an integer $k$.}{Does $G$ contain a partition of V into two parts $A, B$ such that the number of edges with both endpoints in the same part is at most $k$?}

Since we reduce from \UC{} on cubic graphs, we use the following straightforward observation on any partition in such graphs.
\begin{lemma} \label{lem:interneexterne}For any cubic graph $G$ and any $\{A,B\}$   partition of $V$, we have 
$ |A| + \frac{2}{3}\cdot |E(B)| = |B| + \frac{2}{3} \cdot |E(A)|$, 
where $E(A)$, resp.~$E(B)$, is the set of edges with both endpoints in $A$, resp~$B$.
\end{lemma}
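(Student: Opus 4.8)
The plan is to prove this by a double-counting argument on edge-endpoints, i.e.\ a localized version of the handshake lemma applied separately to the two sides of the partition.

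First I would count, for the set $A$, the number of edge-endpoints lying in $A$. Since $G$ is cubic, every vertex of $A$ contributes exactly $3$ such endpoints, so this total is $3|A|$. On the other hand, each edge of $E(A)$ contributes $2$ endpoints to this count and each edge of the cut $E(A,B)$ contributes exactly $1$, giving $3|A| = 2|E(A)| + |E(A,B)|$. Applying the same reasoning to $B$ yields $3|B| = 2|E(B)| + |E(A,B)|$.

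Next I would eliminate the common term $|E(A,B)|$ by subtracting the two identities, obtaining $3|A| - 3|B| = 2|E(A)| - 2|E(B)|$, equivalently $3|A| - 2|E(A)| = 3|B| - 2|E(B)|$. Dividing through by $3$ and rearranging the terms gives exactly $|A| + \tfrac{2}{3}|E(B)| = |B| + \tfrac{2}{3}|E(A)|$, as claimed.

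There is essentially no obstacle here: the only point requiring a little care is the bookkeeping that a cut edge contributes one endpoint to each side (not two), but this is immediate from the definition of $E(A,B)$. Note also that the argument uses $3$-regularity only through the equalities $3|A| = 2|E(A)| + |E(A,B)|$ and $3|B| = 2|E(B)| + |E(A,B)|$, so the statement generalizes verbatim to any $r$-regular graph with $\tfrac{2}{3}$ replaced by $\tfrac{2}{r}$; I would mention this remark only if it is useful later.
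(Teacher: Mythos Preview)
Your proof is correct and follows essentially the same route as the paper: both derive $3|A| = 2|E(A)| + |E(A,B)|$ and $3|B| = 2|E(B)| + |E(A,B)|$ from cubicity (the paper writes this as $|E(A,B)| = 3|A| - 2|E(A)| = 3|B| - 2|E(B)|$) and then eliminate the cut term. Your presentation is slightly more detailed, but the argument is identical.
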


Since we did not find a reference for the following result in the literature we propose a short proof. 

\begin{lemma}\label{perhapsknown}
\label{lem:borneUnCut}
Let $G=(V,E)$ be a cubic graph. There exists a partition $\{A,B\}$ of $G$ with a cut of size at least $|V|$ and it can be found in polynomial time.
\end{lemma}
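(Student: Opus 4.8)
The plan is a standard local-search argument that exploits the bounded degree. Start from an arbitrary partition $\{A,B\}$ of $V$ and repeatedly apply the following improvement step: if there is a vertex $v$ that has strictly more neighbours on its own side of the partition than on the other side, move $v$ to the other side. If such a $v$ lies in, say, $A$, then moving it to $B$ turns its $\deg_A(v)$ internal edges into cut edges and its $\deg_B(v)$ cut edges into internal ones, so the size of the cut increases by $\deg_A(v)-\deg_B(v)\geq 1$. Since the size of the cut is a non-negative integer bounded by $|E|=\tfrac{3}{2}|V|$, this process terminates after at most $\tfrac{3}{2}|V|$ improvement steps, and each step can be carried out in $O(|V|)$ time (scan all vertices, each of degree $3$); hence the whole procedure runs in polynomial time.

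It remains to bound the cut of the partition $\{A,B\}$ obtained once no improvement step applies. For every vertex $v\in A$ we then have $\deg_A(v)\leq \deg_B(v)$, and since $\deg_A(v)+\deg_B(v)=3$ this forces $\deg_A(v)\leq 1$; symmetrically every $v\in B$ satisfies $\deg_B(v)\leq 1$. Therefore $G[A]$ and $G[B]$ are both (possibly empty) matchings, so the number of non-cut edges is $|E(A)|+|E(B)|\leq \tfrac{|A|}{2}+\tfrac{|B|}{2}=\tfrac{|V|}{2}$. Consequently the cut has size $|E|-|E(A)|-|E(B)|\geq \tfrac{3}{2}|V|-\tfrac{1}{2}|V|=|V|$, as required. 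Alternatively, one could reach the same conclusion at a local optimum via \autoref{lem:interneexterne}, which in that situation forces both $|E(A)|$ and $|E(B)|$ to be at most $|V|/2$.

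I do not expect a serious obstacle here; the only points that need a little care are checking that the improvement step \emph{strictly} increases the cut (so that termination, and hence the polynomial running time, is guaranteed — moving on a weak inequality could loop forever) and the degree bookkeeping at the local optimum that turns $G[A]$ and $G[B]$ into matchings. One subtlety worth recording is that the argument nowhere uses connectivity of $G$, so it applies verbatim to the disconnected cubic graphs permitted throughout this paper.
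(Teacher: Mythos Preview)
Your proposal is correct and is essentially the same local-search argument as the paper's proof: your condition ``strictly more neighbours on its own side'' is, for a cubic graph, exactly the paper's ``at least two neighbours on its own side,'' and your final count via the complement ($|E|-|E(A)|-|E(B)|\ge \tfrac32|V|-\tfrac12|V|$) is just the other side of the paper's direct count of $\ge 2$ cross-edges per vertex. The additional remarks on termination, running time, and connectivity are fine and make the write-up a bit more explicit than the paper's.
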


\begin{proof}
Let $\mathcal{P} = \{A,B\}$ be a partition of $V$. Consider the following operation: if there is a vertex $v \in A$ (resp. $B$) with at least two neighbors in $A$ (resp. $B$) then $A = A \setminus \{v\} $ (resp. $B = B \setminus \{v\} $) and $B =  B \cup \{v\} $ (resp. $A = A \cup \{v\} $). Since the graph is cubic, this operation increases the number of edges between $A$ and $B$ by at least one. Since the number of edges is finite, we can repeat this operation until we obtain a partition $P'=\{A',B'\}$ with no vertex $v \in A'$ (resp. $B'$) with at least two neighbors in $A'$ (resp. $B'$). Since the graph is cubic, if every vertex in $A'$ (resp. $B'$) has at most one neighbor in $A'$, then it has at least two neighbors in $B'$ (resp. $A'$). Consequently $P'$ has a cut of size at least $ \frac{2(|A'| + |B'|)}{2} = |V|$.
\end{proof}

\begin{figure}
\begin{center}
    \begin{tikzpicture}[scale = 0.6,thick]
	\tikzset{node/.style 2 args={draw, circle,draw=black,scale=0.9, label=#1:{\large #2}}};
		\node[node={270}{} ] (0) at (0,0) {}; 
		\node[node={270}{} ] (1) at (1,0) {}; 
		\node[node={270}{} ] (2) at (2,0) {};

		\node[node={270}{} ] (3) at (6,0) {};
		\node[node={270}{} ] (4) at (7,0) {};
		\node[node={270}{} ] (5) at (8,0) {};

		\node[node={270}{} ] (7) at (0,3) {};
		\node[node={270}{} ] (8) at (1,3) {};
		\node[node={270}{} ] (9) at (2,3) {};

		\node[node={270}{} ] (10) at (6,3) {};
		\node[node={270}{} ] (11) at (7,3) {};
		\node[node={270}{} ] (12) at (8,3) {};
		
		\foreach \j in {0,...,5}{
			\foreach \i in {7,...,12}{
				\ifthenelse{\i < \j}{}{
					\draw (\i) edge[gray!30] (\j);
				}
			}
		}
		\draw (0) edge[gray!30, bend right = 15] (2);
		\draw (0) edge[gray!30, bend right = 15] (3);
		\draw (0) edge[gray!30, bend right = 15] (4);
		\draw (0) edge[gray!30, bend right = 15] (5);
		\draw (1) edge[gray!30] (2);
		\draw (1) edge[gray!30, bend right = 15] (3);
		\draw (1) edge[gray!30, bend right = 15] (4);
		\draw (1) edge[gray!30, bend right = 15] (5);
		
		\draw (2) edge[gray!30, bend right = 15] (4);
		\draw (2) edge[gray!30, bend right = 15] (5);
		
		\draw (3) edge[gray!30] (4);
		\draw (3) edge[gray!30, bend right = 15] (5);
		
		\draw (12) edge[gray!30, bend right = 15] (10);
		\draw (12) edge[gray!30, bend right = 15] (9);
		\draw (12) edge[gray!30, bend right = 15] (8);
		\draw (12) edge[gray!30, bend right = 15] (7);
		\draw (11) edge[gray!30] (10);
		\draw (11) edge[gray!30, bend right = 15] (9);
		\draw (11) edge[gray!30, bend right = 15] (8);
		\draw (11) edge[gray!30, bend right = 15] (7);
		
		\draw (10) edge[gray!30, bend right = 15] (8);
		\draw (10) edge[gray!30, bend right = 15] (7);
		
		\draw (9) edge[gray!30] (8);
		\draw (9) edge[gray!30, bend right = 15] (7);
		
		\draw (0) edge[gray!30] (1);
		\draw (12) edge[gray!30] (11);
		\draw (8) edge[gray!30] (7);
		\draw (4) edge[gray!30] (5);

		\draw[dotted, line width = 1.5] (2)--(3);
		\draw[dotted, line width = 1.5] (9)--(10);


		\node[node={270}{} ] (a0) at (17,0) {}; 
		\node[node={270}{} ] (a2) at (13,0) {};
		\node[node={270}{} ] (a3) at (14,0) {};
		\node[node={270}{} ] (a4) at (10,0) {};
		\node[node={270}{} ] (a6) at (16,0) {};
		\node[node={270}{} ] (a7) at (15,0) {};
		\node[node={270}{} ] (a5) at (12,0) {};
		\node[node={270}{} ] (a1) at (11,0) {};
		
		\node[node={270}{} ] (a9) at (12,3) {};
		\node[node={270}{} ] (a10) at (13,3) {};
		\node[node={270}{} ] (a11) at (14,3) {};
		\node[node={270}{} ] (a12) at (15,3) {};
		
		\foreach \j in {0,...,7}{
			\foreach \i in {9,...,12}{
				\ifthenelse{\i < \j}{}{
					\draw (a\i) edge[gray!30] (a\j);
				}
			}
		}
		
		\foreach \i in {9,...,12}{
			\foreach \j in {9,...,12}{
				\ifthenelse{\i < \j}{}{
					\draw (a\i) edge[gray!30, bend right = 15] (a\j);
				}
			}
		}
		
		\draw (a4) edge[gray!30, bend right = 15] (a0);
		\draw (a4) edge[gray!30, bend right = 15] (a2);
		\draw (a4) edge[gray!30, bend right = 15] (a3);
		\draw (a4) edge[gray!30, bend right = 15] (a5);
		\draw (a4) edge[gray!30, bend right = 15] (a6);
		\draw (a4) edge[gray!30, bend right = 15] (a7);
		
		\draw (a1) edge[gray!30] (a5);
		\draw (a1) edge[gray!30, bend right = 15] (a0);
		\draw (a1) edge[gray!30, bend right = 15] (a2);
		\draw (a1) edge[gray!30, bend right = 15] (a3);
		\draw (a1) edge[gray!30, bend right = 15] (a6);
		\draw (a1) edge[gray!30, bend right = 15] (a7);
		
		\draw (a5) edge[gray!30, bend right = 15] (a0);
		\draw (a5) edge[gray!30, bend right = 15] (a3);
		\draw (a5) edge[gray!30, bend right = 15] (a6);
		\draw (a5) edge[gray!30, bend right = 15] (a7);
		
		\draw (a2) edge[gray!30, bend right = 15] (a0);
		\draw (a2) edge[gray!30, bend right = 15] (a6);
		\draw (a2) edge[gray!30, bend right = 15] (a7);
		
		\draw (a3) edge[gray!30, bend right = 15] (a0);
		\draw (a3) edge[gray!30, bend right = 15] (a6);
		
		\draw (a7) edge[gray!30, bend right = 15] (a0);

		\foreach \i in {9,...,12}{
			\foreach \j in {7,...,12}{
				\ifthenelse{\i < \j}{}{
					\draw (a\i) edge[gray!30, bend right = 10] (\j);
				}
			}
		}
	
		\foreach \i in {0,...,7}{
			\foreach \j in {0,...,5}{
				\ifthenelse{\i < \j}{}{
					\draw (a\i) edge[gray!30, bend left = 9] (\j);
				}
			}
		}
	
		\foreach \i in {9,...,12}{
			\foreach \j in {0,...,5}{
					\draw (a\i) edge[gray!30] (\j);
			}
		}
	
		\foreach \i in {0,...,7}{
			\foreach \j in {7,...,12}{
				\draw (a\i) edge[gray!30] (\j);
			}
		}

		\draw[line width=0.8pt] (a0)--(a12);
		\draw[line width=0.8pt]  (a2)--(a3);
		\draw[line width=0.8pt]  (a4)--(a9);
		\draw[line width=0.8pt]  (a6)--(a7);
		\draw[line width=0.8pt]  (a5)--(a11);
		\draw[line width=0.8pt]  (a10)--(a1);
		\draw[line width=0.8pt]  (a0)--(a6);
		\draw[line width=0.8pt]  (a3)--(a7);
		\draw[line width=0.8pt]  (a12)--(a7);
		\draw[line width=0.8pt]  (a12)--(a5);
		\draw[line width=0.8pt]  (a2)--(a5);
		\draw[line width=0.8pt]  (a2)--(a11);
		\draw[line width=0.8pt]  (a0)--(a11);
		\draw[line width=0.8pt]  (a3)--(a10);
		\draw[line width=0.8pt]  (a4)--(a10);
		\draw[line width=0.8pt]  (a4)--(a1);
		\draw[line width=0.8pt]  (a9)--(a1);
		\draw[line width=0.8pt]  (a9)--(a6);
		
		\foreach \j in {0,...,2}{
			\foreach \i in {7,...,9}{
				\ifthenelse{\i < \j}{}{
					\draw (\i) edge (\j);
				}
			}
		}
		
		\foreach \j in {3,...,5}{
			\foreach \i in {10,...,12}{
				\ifthenelse{\i < \j}{}{
					\draw (\i) edge (\j);
				}
			}
		}
		
		\draw[dashed] \convexpath{0,a0}{0.9 cm};
		\draw[dashed] \convexpath{7,a12}{0.9 cm};

		\node[node={270}{} ] (0) at (0,0) {}; 
		\node[node={270}{} ] (1) at (1,0) {}; 
		\node[node={270}{} ] (2) at (2,0) {};

		\node[node={270}{} ] (3) at (6,0) {};
		\node[node={270}{} ] (4) at (7,0) {};
		\node[node={270}{} ] (5) at (8,0) {};

		\node[node={270}{} ] (7) at (0,3) {};
		\node[node={270}{} ] (8) at (1,3) {};
		\node[node={270}{} ] (9) at (2,3) {};

		\node[node={270}{} ] (10) at (6,3) {};
		\node[node={270}{} ] (11) at (7,3) {};
		\node[node={270}{} ] (12) at (8,3) {};
		
		\node[node={270}{} ] (a0) at (17,0) {}; 
		\node[node={270}{} ] (a2) at (13,0) {};
		\node[node={270}{} ] (a3) at (14,0) {};
		\node[node={270}{} ] (a4) at (10,0) {};
		\node[node={270}{} ] (a6) at (16,0) {};
		\node[node={270}{} ] (a7) at (15,0) {};
		\node[node={270}{} ] (a5) at (12,0) {};
		\node[node={270}{} ] (a1) at (11,0) {};
		
		\node[node={270}{} ] (a9) at (12,3) {};
		\node[node={270}{} ] (a10) at (13,3) {};
		\node[node={270}{} ] (a11) at (14,3) {};
		\node[node={270}{} ] (a12) at (15,3) {};
	\end{tikzpicture}
    \caption{The construction of $G'$ in Definition~\ref{denseNPdef}}
    \label{fig:hardDens}
    \end{center}
\end{figure}

\begin{definition}\label{denseNPdef}
Let $I = (G,k)$ be an instance of \UC~ where $G=(V,E)$ is a cubic graph. We define the construction $\sigma$ transforming the graph $G$ into the graph $G':= (V',E') =\sigma(G)$ (see Figure \ref{fig:hardDens}) as follows:
\begin{itemize}
    \item let $G_0=(V_0,E_0)$ be the union of  $\frac{n^2 - n}{6}$ copies of $K_{3,3}$ (see remark below). Thus $G_0$ is a cubic bipartite graph with $n^2 - n$ vertices and $V_0$ is the union of two independent sets $L, R$ such that $|L|=|R|$.
    \item let $G_1 = (V \cup V_0, E \cup E_0)$.
    \item let $G' = \overline{G_1}$.
\end{itemize}
 
\end{definition}

\begin{remark}
Note that  we can assume that the number of vertices of a cubic graph $G$  is a multiple of $6$. Since $G$ is cubic, $n$ is a multiple of 2.
If $n$ is not a multiple of 3, we consider the instance $I_{triple}$ defined as follows: $G_{triple}$ is the union of 3 copies of $G$ and $k_{triple}=3k$, and thus in the new instance $I_{triple}$ the graph has $3n$ vertices. 
 Note that the number of edges with both endpoints in the same part is
$3k$ in $G_{triple}$ if and only if it is $k$ in $G$.
\end{remark}

Let $n=|V|$, $m=|E|$,  $n'=|V'|$ and $m'=|E'|$. Observe that $n' = n^2$, and  $G'$ is a $(n'-4)$-regular graph. 

\begin{proof}
Since the graph $G$ is cubic, 
$|E(A,B)|  = 3\cdot |A| - 2 \cdot|E(A)|  = 3\cdot |B| - 2 \cdot |E(B)|$.
We can deduce that $|A| + \frac{2}{3} \cdot |E(B)| = |B| + \frac{2}{3} \cdot |E(A)|$ 
\end{proof}

\begin{theorem}\label{denseNP}
\DGP{} is NP-complete on $(n-4)$-regular graphs with $n$ vertices.
\end{theorem}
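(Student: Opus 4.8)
The plan is to reduce from \UC{} on cubic graphs, which is NP-hard (and even APX-hard) by~\cite{BK99}. Given an instance $(G,k)$ of \UC{} with $G$ cubic, I take $G' = \sigma(G)$, the $(n'-4)$-regular graph on $n'=n^2$ vertices of Definition~\ref{denseNPdef}, and use the threshold
\[
r \;=\; \frac{n'-2}{2} - \frac{2k}{n'}\,.
\]
I will show that $(G,k)$ is a yes-instance of \UC{} if and only if $(G',r)$ is a yes-instance of \DGP{}; together with the obvious membership of \DGP{} in NP and the fact that $\sigma$ runs in polynomial time, this gives the theorem. By the Remark following Definition~\ref{denseNPdef} I assume $n$ is a multiple of~$6$ (in particular $n\ge 6$), treating the finitely many smaller cases directly. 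For the converse direction I also assume $k<n/2$: otherwise Lemma~\ref{perhapsknown} gives a cut of $G$ of size at least $n$, hence a partition of $V(G)$ with at most $m-n=n/2\le k$ internal edges, so $(G,k)$ is automatically a yes-instance and there is nothing to prove.

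\emph{A density identity, and the forward direction.} The key observation is that the missing edges of $G'$ are exactly the edges of the cubic graph $G_1=G\sqcup G_0$, where $G_0$ is a disjoint union of $N_c:=(n^2-n)/6$ copies of $K_{3,3}$; hence by (the proof of) Lemma~\ref{lemmacomplete}, any $2$-partition $\{V_1,V_2\}$ of $V'$ with $|V_i|=n_i$ and $o_i$ missing edges inside $V_i$ has density $\tfrac{n'-2}{2}-\tfrac{o_1}{n_1}-\tfrac{o_2}{n_2}$. I then take an arbitrary partition $\{A,B\}$ of $G$ with $s:=|E_G(A)|+|E_G(B)|$ internal edges, set $2\delta:=|A|-|B|$ (an even integer, WLOG $\delta\ge 0$), observe that Lemma~\ref{lem:interneexterne} forces $|E_G(A)|-|E_G(B)|=3\delta$, and build the $G'$-partition $\{V_1,V_2\}$ by placing $A$ together with one side of each $K_{3,3}$ into $V_1$ and $B$ together with the other sides into $V_2$. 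Then the $K_{3,3}$'s contribute no missing edge inside a part, so $o_1=\tfrac{s+3\delta}{2}$, $o_2=\tfrac{s-3\delta}{2}$, while $n_1=M+\delta$, $n_2=M-\delta$ with $M:=n^2/2$, and a direct computation gives
\[
d(\{V_1,V_2\}) = \frac{n'-2}{2} - \frac{sM-3\delta^2}{M^2-\delta^2} \;\ge\; \frac{n'-2}{2}-\frac{s}{M}\,,
\]
the inequality holding because $s\le m=3n/2<3M$. Thus if $(G,k)$ is a yes-instance of \UC{}, choosing $\{A,B\}$ with $s\le k$ produces a partition of $G'$ of density $\ge \tfrac{n'-2}{2}-\tfrac{k}{M}=\tfrac{n'-2}{2}-\tfrac{2k}{n'}=r$.

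\emph{The converse.} Starting from a partition $\mathcal P$ of $G'$ with $d(\mathcal P)\ge r$, I first argue it has exactly two parts: Lemma~\ref{lem:densMax} bounds the density of a $t$-part partition by $\tfrac{n'-t}{2}$, which is $<r$ for $t\ge 3$ since $\tfrac{2k}{n'}<\tfrac1n<\tfrac12$, and the trivial $1$-part partition has density $d(G')=\tfrac{n'-4}{2}<r$. Writing $\mathcal P=\{V_1,V_2\}$ and letting $o_i$ count the missing edges inside $V_i$, the hypothesis reads $\tfrac{o_1}{n_1}+\tfrac{o_2}{n_2}\le\tfrac{2k}{n'}$. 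The key local estimate is that a single $K_{3,3}$ with $p$ of its $6$ vertices on the $V_1$-side always has at least $3|p-3|$ edges inside its two sides (a short check over $p\in\{3,4,5,6\}$, the rest by symmetry); summing over the $N_c$ copies and using the triangle inequality, the part of $o_1+o_2$ coming from $G_0$ is at least $3\bigl|a-\tfrac{n^2-n}{2}\bigr|$, where $a:=|V_1\cap V_0|$. Since also $o_1+o_2\le\max(n_1,n_2)\bigl(\tfrac{o_1}{n_1}+\tfrac{o_2}{n_2}\bigr)\le n'\cdot\tfrac{2k}{n'}=2k$, this forces $\bigl|a-\tfrac{n^2-n}{2}\bigr|\le\tfrac{2k}{3}$, and as $|V_i\cap V|\le n$ it pins $n_1,n_2$ into $[\,M-\tfrac{5n}{6},\,M+\tfrac{5n}{6}\,]$. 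Plugging this sharper bound on $\max(n_1,n_2)$ back in gives $o_1+o_2\le(M+\tfrac{5n}{6})\tfrac{2k}{n^2}=k+\tfrac{5k}{3n}<k+1$, hence $o_1+o_2\le k$ by integrality. Because $G_1=G\sqcup G_0$ has no edges between $G$ and $G_0$, the partition $\{V_1\cap V,V_2\cap V\}$ of $V(G)$ has at most $o_1+o_2\le k$ internal edges, and both its parts are nonempty (otherwise $o_1+o_2\ge|E(G)|=3n/2>k$); so it witnesses that $(G,k)$ is a yes-instance of \UC{}.

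\emph{Where the difficulty lies.} The forward direction is routine once the density identity is in hand; the real work is the converse, and in particular the fact that the two parts of a near-optimal partition of $G'$ are not balanced a priori, which would make $\tfrac{o_1}{n_1}+\tfrac{o_2}{n_2}$ too coarse to pin down the number of internal $G$-edges exactly. The bootstrapping above --- first forcing $|n_1-n_2|=O(n)$ via the ``cost $3$ per unit of $K_{3,3}$-imbalance'' estimate, then converting the density slack into $o_1+o_2\le k$ --- is the step I expect to require the most care, and it is precisely to keep the resulting error term $\tfrac{5k}{3n}$ below $1$ that the construction inflates $G$ with $\Theta(n^2)$ (rather than $\Theta(n)$) copies of $K_{3,3}$.
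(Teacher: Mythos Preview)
Your proof is correct, and the forward direction is essentially the same construction and computation as the paper's, just organized more directly.

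The converse direction is where your route genuinely diverges. The paper does not bootstrap through the $K_{3,3}$'s at all: instead it applies the cubic identity of Lemma~\ref{lem:interneexterne} to the whole graph $G_1$ to prove the pointwise lower bound
\[
\frac{o_1}{n_1}+\frac{o_2}{n_2}\;\ge\;\frac{o_1+o_2}{\tfrac{n'}{2}+\tfrac{o_1+o_2}{3}}\,,
\]
and then shows algebraically that if $o_1+o_2\ge k+1$ the right-hand side already exceeds $2k/n'$. In particular, the paper's converse uses only that $G_1$ is cubic, never the specific $K_{3,3}$ structure of $G_0$. Your argument instead exploits that structure locally (the ``$3|p-3|$'' count) to force near-balance of the two parts, and then finishes by integrality. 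What this buys you is a transparent explanation of why $\Theta(n^2)$ copies of $K_{3,3}$ are needed (to push the error term $5k/(3n)$ below $1$), which in the paper's argument is hidden in the algebra. What the paper's route buys is independence from the gadget: its converse would go through verbatim with any cubic bipartite $G_0$ of the right size, whereas yours needs the per-component imbalance cost to be at least $3$ per vertex of deviation, which is exactly tight for $K_{3,3}$ but would have to be re-verified for other choices.
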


\begin{proof}
Let $I=(G=(V,E),k)$ be an instance of \UC,  where $G$ is a cubic graph.  Consider the following instance $I'$ of \DGP{} on the graph $G' = \sigma(G)$ and $d=\frac{n^2}{2} - 1 - \frac{2k}{n^2}$. We claim that $I= (G,k)$ is a yes-instance of \UC~ if and only if $I' = (G',d)$ is a yes-instance of \DGP.
\medskip

Let $\{A,B\}$ be a partition of $V$ whose  uncut value is at most $k$. Since $V_0=L\cup R$, where $L,R$ are independent sets in $G_0$ such that $|L|=|R|$, the sets $L,R$ form two cliques of the same size in $G'$. Let $A' = A \cup L$ and $B' = B \cup R$  and $\mathcal{P}=\{A',B'\}$ be a partition of $G'$. 

Let ${M}_{A'}$ and ${M}_{B'}$ be the set of missing edges in $G'[A']$ and $G'[B']$, respectively. 
Due to the construction of $G'$, there is no missing edge  between $A$ and $L$ and between $B$ and $R$. Thus all missing edges are inside $G'[A\cup B]$, i.e. $|{M}_{A'}| + |{M}_{B'}| \leq k$. Thus, the density of the partition $\mathcal{P}$ is: 
$$ d(\mathcal{P}) = \frac{|A'|-1}{2} - \frac{|{M}_{A'}|}{|A'|} + \frac{|B'|-1}{2} - \frac{|{M}_{B'}|}{|B'|} 
       = \frac{n^2-2}{2} - \frac{|{M}_{A'}|}{|A'|} - \frac{|{M}_{B'}|}{|B'|}
$$
We will prove in the following that  $d(\mathcal{P})
\geq d=\frac{n^2}{2} - 1 - \frac{2k}{n^2}$ that is equivalent to proving that $\frac{|{M}_{A'}|}{|A'|} + \frac{|{M}_{B'}|}{|B'|} \leq \frac{2(|M_{A'}|+|M_{B'}|)}{|A'|+|B'|}$.

Consider the difference 
$$  \frac{2(|M_{A'}|+|M_{B'}|)}{|A'|+|B'|} - \left(\frac{|M_{A'}|}{|A'|} + \frac{|M_{B'}|}{|B'|}\right)=$$
$$=\frac{1}{|A'|+|B'|} \left( 2|M_{A'}|+2|M_{B'}| - \frac{|A'|+|B'|}{|A'|} |M_{A'}|-\frac{|A'|+|B'|}{|B'|} |M_{B'}|\right)=$$
$$=\frac{1}{|A'|+|B'|} \frac{1}{|A'|} \frac{1}{|B'|} (|A'||B'| |M_{A'}|+ |A'||B'| |M_{B'}|-|B'|^2 |M_{A'}|- |A'|^2 |M_{B'}|)=
$$
$$ =\frac{1}{|A'|+|B'|} \frac{1}{|A'|} \frac{1}{|B'|} (|A'|-|B'|)(|B'||M_{A'}|-|A'||M_{B'}|)
$$

Wlog we can consider that $|A'|\geq |B'|$, that implies $|B'|\leq \frac{n^2}{2}$.
From Lemma~\ref{lem:interneexterne}  for the cubic graph $G_1$ and partition $\{A', B'\}$,  we have 
$|A'|+ \frac{2}{3}\cdot |M_{B'}| = |B'| + \frac{2}{3} \cdot |M_{A'}|$. Using that $|A'|=n^2-|B'|$ and $|M_{A'}|=k-|M_{B'}|$, we have $n^2-|B'|+ \frac{2}{3}\cdot |M_{B'}| = |B'| + \frac{2}{3} \cdot (k-|M_{B'}|)$  and thus $ |M_{B'}|= \frac{3}{4} (2 |B'|+ \frac{2}{3} k-n^2)$.

Thus, 
$$|B'||M_{A'}|-|A'||M_{B'}|= |B'| (k-|M_{B'}|) - (n^2-|B'|)|M_{B'}|=|B'|k-n^2 |M_{B'}|=$$
$$=|B'|k-n^2\frac{3}{4} \left(2 |B'|+ \frac{2}{3} k-n^2\right)=\left(|B'|-\frac{n^2}{2}\right)\left(k-\frac{3n^2}{2}\right)$$
Since $|B'|\leq \frac{n^2}{2}$ and $k \leq  \frac{n}{2}\leq \frac{3n^2}{2}$ we can conclude  that 
$$  \frac{2(|M_{A'}|+|M_{B'}|)}{|A'|+|B'|} - \left(\frac{|M_{A'}|}{|A'|} + \frac{|M_{B'}|}{|B'|}\right) \geq 0$$

Thus, the partition $\mathcal{P}=\{A',B'\}$  has the density $d(\mathcal{P})
\geq d=\frac{n^2}{2} - 1 - \frac{2k}{n^2}$.

 \bigskip
 Let $\mathcal{P'}$ be a partition of $G'$ of density $d(\mathcal{P'}) \geq d = \frac{n^2-2}{2} - \frac{2k}{n^2}$. We will prove that $\mathcal{P'}$ has exactly two parts $A'$ and $B'$ such that $A = A' \cap V$ and $B = B' \cap V$ is a partition of $G$ whose uncut value is at most $k$.

Suppose that $|\mathcal{P'}| \geq 3$. Then, using Lemma~\ref{lem:densMax}, we have $d(\mathcal{P'}) \leq \frac{n^2 - |\mathcal{P'}|}{2} \leq \frac{n^2-3}{2} = \frac{n^2-2}{2} - \frac{1}{2}$. Since $k \leq \frac{n}{2}$ and $n \geq 6$ then $\frac{2k}{n^2} < \frac{1}{2}$. Then $d(\mathcal{P'}) < \frac{n^{2}-2}{2} - \frac{2k}{n^2} = d$ which is a contradiction. Then $|\mathcal{P'}| < 3$.

Suppose that $|\mathcal{P'}| = 1$. Since $G'$ is $(n^2-4)$-regular, its density is $d(\mathcal{P'}) = \frac{n^2-1}{2} - \frac{3}{2} = \frac{n^2-2}{2} - 1 < \frac{n^2-2}{2} - \frac{2k}{n^2} = d$ which is a contradiction. Then $|\mathcal{P'}| > 1$. We conclude that $|\mathcal{P}| = 2$.

Let $A'$ and $B'$ be the two parts of $\mathcal{P}$. Let ${M}_{A'}$, resp. ${M}_{B'}$, be the set of missing edges in $G'[A']$, resp. $G'[B']$. Observe that if $|M_{A'}| + |M_{B'}| \leq k$ then $|M_{A}| + |M_{B}| \leq k$ and then there is a cut of size at least $k$ between $A$ and $B$ in $G$. What it remains to prove is that $|M_{A'}| + |M_{B'}| \leq k$.

As a first step we will show the following inequality we need later $\frac{|{M}_{A'}| + |{M}_{B'}|}{\frac{n^2}{2} + \frac{|{M}_{A'}| + |{M}_{B'}|}{3}} \leq \frac{|{M}_{A'}|}{|A'|} + \frac{|{M}_{B'}|}{|B'|}$. In order to prove this, we consider the following difference 
$$
    \frac{|{M}_{A'}|}{|A'|} + \frac{|{M}_{B'}|}{|B'|} - \frac{|{M}_{A'}|+|{M}_{B'}|}{\frac{|A'|+|B'|}{2} + \frac{|{M}_{A'}| + |{M}_{B'}|}{3}} 
$$

By removing the denominator we get 
$$
    |{M}_{A'}||B'|\left(\frac{|A'| + |B'|}{2} + \frac{|{M}_{A'}|+|{M}_{B'}|}{3}\right) + |{M}_{B'}||A'|\left(\frac{|A'| + |B'|}{2} + \frac{|{M}_{A'}|+|{M}_{B'}|}{3}\right) 
$$    
$$    
    - (|{M}_{A'}| + |{M}_{B'}|)|A'||B'| =
$$     
$$     
     = |{M}_{A'}||B'| \left(\frac{|B'|}{2} + \frac{|{M}_{A'}|}{3} + \frac{|{M}_{B'}|}{3} - \frac{|A'|}{2}\right) + |{M}_{B'}||A'|\left(\frac{|A'|}{2} + \frac{|{M}_{B'}|}{3} + \frac{|{M}_{A'}|}{3} - \frac{|B'|}{2}\right) 
$$

From Lemma~\ref{lem:interneexterne}  for the cubic graph $G_1$ and partition $\{A', B'\}$,  we have $|A'| + \frac{2}{3}|{M}_{B'}| = |B'| + \frac{2}{3}|{M}_{A'}|$, which implies that $\frac{|A'|}{2} = \frac{|B'|}{2} + \frac{|{M}_{A'}|}{3} - \frac{|{M}_{B'}|}{3}$ and $\frac{|B'|}{2} = \frac{|A'|}{2} + \frac{|{M}_{B'}|}{3} - \frac{|{M}_{A'}|}{3}$ and then we get  that the previous equality becomes
$$
    = |{M}_{A'}||B'| \left(\frac{|B'|}{2} + \frac{|{M}_{A'}|}{3} + \frac{|{M}_{B'}|}{3} - \left(\frac{|B'|}{2} + \frac{|{M}_{A'}|}{3} - \frac{|{M}_{B'}|}{3}\right)\right)
$$    
$$    
     + |{M}_{B'}||A'|\left(\frac{|A'|}{2} + \frac{|{M}_{B'}|}{3} + \frac{|{M}_{A'}|}{3} - \left(\frac{|A'|}{2} + \frac{|{M}_{B'}|}{3} - \frac{|{M}_{A'}|}{3}\right)\right) =
$$
$$
    = |{M}_{A'}||B'|\frac{2|{M}_{B'}|}{3} + |{M}_{B'}||A'|\frac{2|{M}_{A'}|}{3}
$$
Since $|{M}_{A'}|$, $|{M}_{B'}|$, $|A'|$ and $|B'|$ are positive integers then  $\frac{|{M}_{A'}|}{|A'|} + \frac{|{M}_{B'}|}{|B'|} - \frac{|{M}_{A'}| + |{M}_{B'}|}{\frac{n^2}{2} + \frac{|{M}_{A'}| + |{M}_{B'}|}{3}} \geq 0$. We conclude that $\frac{|{M}_{A'}| + |{M}_{B'}|}{\frac{n^2}{2} + \frac{|{M}_{A'}| + |{M}_{B'}|}{3}} \leq \frac{|{M}_{A'}|}{|A'|} + \frac{|{M}_{B'}|}{|B'|}$.

\medskip

Finally, we  show that $|{M}_{A'}| + |{M}_{B'}| \leq k$ using the previous inequality. Let $x= |{M}_{A'}| + |{M}_{B'}|$. In order to finalize  the proof, we suppose that $x> k$ and we will arrive at a contradiction, that is   $d(\mathcal{P'}) < d$. Consider the following difference
$$
d - d(\mathcal{P'}) = \frac{n^2 - 2}{2} - \frac{2k}{n^2} - \left(\frac{n^2 - 2}{2} - \frac{|{M}_{A'}|}{|A'|} - \frac{|{M}_{B'}|}{|B'|}\right) = \frac{|{M}_{A'}|}{|A'|} + \frac{|{M}_{B'}|}{|B'|} - \frac{2k}{n^2}
$$
Since $\frac{x}{\frac{n^2}{2} + \frac{x}{3}} \leq \frac{|{M}_{A'}|}{|A'|} + \frac{|{M}_{B'}|}{|B'|}$ then
$$
d - d(\mathcal{P'}) \geq \frac{x}{\frac{n^2}{2}+\frac{x}{3}} - \frac{2k}{n^2} = \frac{x\cdot n^2 - k \cdot n^2 -\frac{2x\cdot k}{3}}{(\frac{n^2}{2}+\frac{x}{3}) \cdot n^2}
$$
Since $x$ and $k$ are integers, then $x\geq k+1$, and by removing the denominator, we get 
$$
    \geq (k+1) \cdot (n^2 - \frac{2}{3}\cdot k) - k\cdot n^2 = n^2 - \frac{2}{3}\cdot k^2 - \frac{2}{3}\cdot k
$$ 

Since $k \leq \frac{n}{2}$ it follows that $n^2 - \frac{2}{3}\cdot k^2 - \frac{2}{3}\cdot k > 0$. This finally gives $d(\mathcal{P'}) < d$, a contradiction to the choice of $\mathcal P'$ as partition with density at least $d$, and we hence conclude that $|{M}_{A'}| + |{M}_{B'}| \leq k$.

Overall, it follows that if $d(\mathcal{P'}) \geq \frac{n^{'}-2}{2} - \frac{2k}{n^2}$ then there is a partition $\{A,B\}$ with an uncut of size at most $k$.
\end{proof}

At the end of this section we show that a partition into a bounded number of cliques provides a good approximation for graphs of large minimum degree.

\begin{theorem}\label{th:delta}
\textsc{Dense Graph Partition} is polynomial-time  $\frac{n-1}{\delta(G)+1}$-approximable on graphs $G$ with $n$ vertices.
\end{theorem}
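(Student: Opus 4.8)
The plan is to let the algorithm return the better of two partitions: the one‑part partition $\{V\}$, of density $d(G)$, and a partition $\mathcal Q$ of $V$ into cliques obtained from a proper colouring of the complement $\overline G$ (each colour class of $\overline G$ is a clique of $G$). The analysis rests on two elementary facts. By Lemma~\ref{lem:densMax}, every partition has density at most $\tfrac{n-1}{2}$, so $opt(I)\le\tfrac{n-1}{2}$. And a partition of $V$ into $k$ cliques of sizes $a_1,\dots,a_k$ has density $\sum_i\frac{\binom{a_i}{2}}{a_i}=\frac{n-k}{2}$; hence, if $\overline G$ can be properly coloured with at most $n-1-\delta(G)$ colours, then $d(\mathcal Q)\ge\frac{\delta(G)+1}{2}$ and
$\frac{opt(I)}{val(I,\mathcal Q)}\le\frac{(n-1)/2}{(\delta(G)+1)/2}=\frac{n-1}{\delta(G)+1}$,
which is exactly the desired ratio. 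So the whole problem reduces to colouring $\overline G$ with $\Delta(\overline G)=n-1-\delta(G)$ colours, or, failing that, producing a partition meeting the bound by other means.

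For the colouring I would invoke the algorithmic form of Brooks' theorem applied componentwise to $\overline G$: a connected graph $C$ can be properly $\Delta(C)$‑coloured in polynomial time unless $C\cong K_{\Delta(C)+1}$, or $\Delta(C)=2$ and $C$ is an odd cycle. I would first dispose of the easy range: if $\delta(G)\ge n-3$ then $\Delta(\overline G)\le 2$ and Theorem~\ref{polydense} solves the instance optimally, which certainly meets the ratio. So assume $\delta(G)\le n-4$, i.e. $d:=\Delta(\overline G)\ge 3$. Then the odd‑cycle exception is harmless (such a component has maximum degree $2<d$, so it is $3$‑colourable and $3\le d$), and a component with maximum degree $<d$ is $(d)$‑colourable greedily; hence $\chi(\overline G)\le d=n-1-\delta(G)$ \emph{unless} some component of $\overline G$ is isomorphic to $K_{d+1}=K_{n-\delta(G)}$. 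In the non‑exceptional case we are done by the previous paragraph.

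It remains to treat the exceptional case, where $\overline G$ has a component isomorphic to $K_{n-\delta(G)}$. Translated to $G$: there is an independent set $S\subseteq V$ with $|S|=n-\delta(G)$ such that every vertex of $S$ is adjacent to every vertex of $R:=V\setminus S$, where $|R|=\delta(G)$ (so $deg_G(s)=\delta(G)$ for all $s\in S$), while $G[R]$ is arbitrary. Here no clique partition beats density $\tfrac{\delta(G)}{2}$, so I would output $\{V\}$ and prove $\frac{opt(I)}{d(G)}\le\frac{n-1}{\delta(G)+1}$. By Remark~\ref{remarkconnexe} an optimal partition has connected parts, so each of its parts of size at least $2$ contains a vertex of $R$, and there are at most $|R|=\delta(G)$ such parts; for such a part $P$, since $S$ is independent and completely joined to $R$, $|E(P)|\le\binom{a_P}{2}+a_Pb_P$ with $a_P=|P\cap R|$, $b_P=|P\cap S|$. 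Thus $opt(I)\le\sum_P\frac{a_P(a_P-1+2b_P)}{2(a_P+b_P)}$ with $\sum_Pa_P=\delta(G)$ and $\sum_Pb_P\le n-\delta(G)$, and this is to be compared with $d(G)=\frac{|E(R)|+\delta(G)(n-\delta(G))}{n}$. The extra ingredient is the degree lower bound inside $R$: every $r\in R$ is adjacent to all $n-\delta(G)$ vertices of $S$, so $deg_{G[R]}(r)\ge 2\delta(G)-n$, forcing $|E(R)|\ge\tfrac12\delta(G)\max(0,2\delta(G)-n)$; when $\delta(G)\le n/2$ this already gives $d(G)\ge\frac{\delta(G)(n-\delta(G))}{n}\ge\frac{\delta(G)}{2}$, and the regime $\delta(G)>n/2$ is where this constraint is needed in full.

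The main obstacle is precisely this last step in the exceptional case: proving that the structured maximum $\sum_P\frac{a_P(a_P-1+2b_P)}{2(a_P+b_P)}$ is at most $\frac{n-1}{\delta(G)+1}\,d(G)$, using that $G[R]$ cannot be too sparse, together with the bookkeeping for the boundary situations (several $K_{n-\delta(G)}$‑components in $\overline G$, or $\delta(G)$ close to $n$). Everything else — Lemma~\ref{lem:densMax}, the clique‑partition density identity, and the polynomial‑time Brooks colouring — is routine; the delicate inequality in the exceptional case is where the real work lies.
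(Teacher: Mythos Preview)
Your core argument is exactly the paper's: for $\delta(G)\ge n-3$ invoke Theorem~\ref{polydense}; for $\delta(G)\le n-4$ apply Brooks' theorem to $\overline G$ to obtain a partition of $G$ into $n-1-\delta(G)$ cliques of total density $\tfrac{\delta(G)+1}{2}$, and compare with the upper bound $opt(I)\le\tfrac{n-1}{2}$ from Lemma~\ref{lem:densMax}. The paper does not keep $\{V\}$ as a fallback; it simply asserts that once $\delta(G)\le n-4$ the graph $\overline G$ is neither a complete graph nor a cycle, and concludes $\chi(\overline G)\le n-1-\delta(G)$ without any discussion of components.

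You are more careful here: you correctly observe that a \emph{component} of $\overline G$ could still be $K_{n-\delta(G)}$ (for instance $G=K_{4,4}$ has $\delta=n-4$ yet $\overline G=2K_4$, so $\chi(\overline G)=n-\delta$), and you propose to handle that case by outputting $\{V\}$ and proving $opt(I)\le\tfrac{n-1}{\delta(G)+1}\,d(G)$. But you do not actually prove this inequality --- you set up the structured maximisation over $(a_P,b_P)$ and then explicitly call it ``where the real work lies''. That is a genuine gap in your proposal: the bound you need is not immediate from what you have written, and without it the exceptional case remains unresolved. So your write-up is incomplete, though the point you leave open is precisely one that the paper's own proof glosses over.
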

\begin{proof}
Let $G$  be a graph on $n$ vertices with minimum degree $\delta=\delta(G)$, instance of \MDGP. If $\delta\geq n-3$, we can give an optimum solution in polynomial time by Theorem~\ref{polydense}. So assume $\delta\leq n-4$. 
By Lemma~\ref{lem:densMax}, any partition $\mathcal{P}$ for the vertices of $G$ satisfies $d(\mathcal{P}) \leq \frac{n-1}{2}$. 
Using Brooks' theorem \cite{brooks1941colouring}, $\overline{G}$ is $(n-\delta -1)$-colorable, and further, such a coloring can be computed in polynomial time. (Note that $\delta\leq n-4$ implies that $\overline{G}$ is not a complete graph or a circle, the two exceptions in Brooks' theorem where one more color is needed.) Using such a coloring, $G$ can be partitioned into $n-\delta -1$ cliques. Then the density of this partition is $\frac{n - (n-\delta -1)}{2} = \frac{\delta+1}{2}$. Comparing this value with the upper bound of $\frac{n-1}{2}$ on the optimum shows that this partition into $n-\delta -1$ cliques gives a polynomial-time  $\frac{n-1}{\delta+1}$-approximation for \DGP.
\end{proof}

Notice that if $\delta(G) > \frac{n-3}{2}$, the ratio given in Theorem~\ref{th:delta} improves upon the current best ratio of~2 for \DGP~on general graphs. This approximation can further be used to show the following.

\begin{theorem}
There is an efficient  polynomial-time approximation scheme for \MDGP{} on graphs $G$ with $n$ vertices and $\delta(G)=n-t$, where $t$ is a constant,  $t\geq 4$.
\end{theorem}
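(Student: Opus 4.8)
The plan is to exploit the fact that for a \emph{fixed} value of $t$ the trivial partition into cliques is already an excellent approximation once $n$ is large, and to handle the remaining bounded-size instances by brute force. Write $G=\overline{H}$ with $\Delta(H)=t-1$, and fix $\varepsilon>0$. Set the threshold $n_0:=\max\{t+1,\ t-1+\lceil (t-2)/\varepsilon\rceil\}$, which depends only on $t$ and $\varepsilon$. The algorithm splits into two cases according to whether $n\geq n_0$ or $n<n_0$.

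First, if $n\geq n_0$, I would simply apply the polynomial-time $\frac{n-1}{\delta(G)+1}$-approximation of \autoref{th:delta} with $\delta(G)=n-t$. Since $t\geq 4$ we have $\delta(G)\leq n-4$, so (exactly as in the proof of that theorem) Brooks' theorem applies to $\overline{G}$: it is $(t-1)$-regular with $t-1\geq 3$, and for $n\geq n_0\geq t+1$ it is neither a complete graph nor an odd cycle, so $\overline{G}$ is $(t-1)$-colorable and $G$ can be partitioned in polynomial time into $t-1$ cliques, a partition of density $\frac{n-(t-1)}{2}=\frac{n-t+1}{2}$. By \autoref{lem:densMax} we have $opt(G)\leq\frac{n-1}{2}$, so the performance ratio of this partition is at most $\frac{n-1}{n-t+1}$, and the choice of $n_0$ ensures $\frac{n-1}{n-t+1}\leq 1+\varepsilon$ precisely because $n\geq t-1+(t-2)/\varepsilon$. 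Hence this case returns a $(1+\varepsilon)$-approximation in time polynomial in $n$.

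Second, if $n<n_0$, then $n$ is bounded by a constant depending only on $t$ and $\varepsilon$, so I would enumerate all partitions of $V$ — at most $B_n\leq B_{n_0}$ of them, where $B_n$ denotes the $n$-th Bell number — evaluate $d(\mathcal{P})$ for each in $O(n^2)$ time, and output the best. This is an exact algorithm, hence in particular a $(1+\varepsilon)$-approximation, and it runs in time $O(g(t,1/\varepsilon))$ for a suitable function $g$. Combining the two cases, the total running time is $O(g(t,1/\varepsilon)+poly(n))$; since $t$ is a fixed constant this has the form $O(g'(1/\varepsilon)\cdot poly(n))$ required for an eptas.

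I do not expect a genuine obstacle. The only points that need care are (i) verifying the applicability of Brooks' theorem, i.e.\ excluding its complete-graph and odd-cycle exceptions — this is where $t\geq 4$ (so $t-1\geq 3$, ruling out odd cycles) and $n\geq n_0\geq t+1$ (ruling out $\overline{G}=K_n$) are used — and (ii) checking that $n_0$, and therefore the brute-force running time, truly depends only on $t$ and $\varepsilon$ and not on $n$.
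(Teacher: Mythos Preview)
Your proposal is correct and follows essentially the same approach as the paper: split on the threshold $n_0\approx t-1+(t-2)/\varepsilon$, use the $(t-1)$-clique partition (via Brooks' theorem, as in \autoref{th:delta}) for large $n$, and brute-force enumerate partitions for small $n$. One small slip: $\overline{G}$ need not be $(t-1)$-\emph{regular}, only of \emph{maximum} degree $t-1$ (since $\delta(G)=n-t$ only bounds the minimum degree of $G$), but this is precisely the hypothesis Brooks' theorem needs, so the argument is unaffected.
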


\begin{proof}
Let $I=G$  be a graph on $n$ vertices and $\delta(G)=n-t$, instance of \MDGP. We establish in the following an eptas. Given $\varepsilon>0$, consider two cases.

If $n\geq t-1+\frac{t-2}{\varepsilon}$, then
let $\mathcal{P}$ be a partition that corresponds to a $(t-1)$-coloring of $\overline{G}$  such that  each part is a clique in  $G$ as in the proof of Theorem~\ref{th:delta}. Then $d(\mathcal{P}) = \frac{n-t+1}{2}\geq \frac{n+1-\frac{n\varepsilon + \varepsilon+2}{1+\varepsilon}}{2} \geq  \frac{n-1}{2(1+\varepsilon)}\geq \frac{opt(I)}{1+\varepsilon}$, where the last inequality   $opt(I) \leq \frac{n-1}{2}$ comes from Lemma~~\ref{lem:densMax}. 

Otherwise, that is 
$n < t-1+\frac{t-2}{\varepsilon}$, enumerate all the partitions of $G$ and consider the best one. Since the number of partitions of $G$ is the Bell number of order $|V|=n$, $B_n$, and  $B_n\leq n^n$, we get an optimal solution in time $(1/\varepsilon)^ {O(1/\varepsilon)}$. \end{proof}

\section{Conclusion}
  In order to have a better understanding of the complexity of  \MDGP{} it would be nice to study it on other graph classes.  It was proved to be polynomial-time solvable on trees, but the complexity on graphs of bounded treewidth  remains open. Moreover no result exists on split graphs. 
  Concerning approximation, no lower bound was established, it would be nice to improve the 2-approximation algorithm or to show that no polynomial-time approximation scheme exist on general instances. 

\bibstyle{plainurl} 
\bibliography{references.bib}

\end{document}